\newtheorem{theorem}{Theorem}[section]
\newtheorem{lemma}[theorem]{Lemma}
\newtheorem{proposition}[theorem]{Proposition}
\newtheorem{definition}[theorem]{Definition}
\newtheorem{example}[theorem]{Example}
\newtheorem{examples}[theorem]{Examples}
\newtheorem{claim}{Claim}
\definecolor{myOrange}{HTML}{E65C0D}
\definecolor{myBlue}{HTML}{5D6DC2}
\bfseries\color{black},
\itshape\color{black},
\footnotesize\color{gray},
\definecolor{colortodobg}{cmyk}{0,0,0.2,0}
\newcommand{\orth}{\ensuremath{\mathcal{MO}}}
\newcommand{\BoolMatrices}{\orth(k)_{\mathcal L}}
\newcommand*\colvec[1]{
        \global\colveccount#1
        \begin{pmatrix}
        \colvecnext
}
\def\colvecnext#1{
        #1
        \global\advance\colveccount-1
        \ifnum\colveccount>0
                \\
                \expandafter\colvecnext
        \else
                \end{pmatrix}
        \fi
}
\renewcommand{\geq}{\geqslant}
\renewcommand{\leq}{\leqslant}
\newcommand{\deff}{\coloneqq}
\newcommand{\NN}{\ensuremath{\mathbb{N}}}
\newcommand{\RR}{\ensuremath{\mathbb{R}}}
\newcommand{\RRposs}[1]{\ensuremath{\RR_{\geq #1}}}
\newcommand{\RRpos}{\RRposs{0}}
\newcommand{\BB}{\ensuremath{\mathcal{B}}}
\newcommand{\set}[1]{\ensuremath{\{#1\}}}
\newcommand{\setc}[2]{\set{#1 : #2}}
\newcommand{\PowerOfTwo}{\ensuremath{\textsl{Pow2}}}
\newcommand{\SP}[1]{\ensuremath{\langle #1 \rangle}}
\newcommand{\Norm}[1]{\ensuremath{\left\|#1\right\|}}
\newcommand{\TS}{\ensuremath{\mathcal{T}}} 
\newcommand{\trecvid}{\ensuremath{\mathcal{TR}}}
\newcommand{\MM}{\ensuremath{\mathbb{M}}} 
\newcommand{\dE}{\ensuremath{\textsf{\upshape d}}}
\renewcommand{\phi}{\ensuremath{\varphi}}
\renewcommand{\epsilon}{\ensuremath{\varepsilon}}
\newcommand{\nicht}{\ensuremath{\neg}}
\newcommand{\und}{\ensuremath{\wedge}}
\newcommand{\oder}{\ensuremath{\vee}}
\newcommand{\Und}{\ensuremath{\bigwedge}}
\newcommand{\Oder}{\ensuremath{\bigvee}}
\newcommand{\DoneCcomput}{\ensuremath{d_1^C\textsc{-Computation}}}
\newcommand{\Problem}[1]{\ensuremath{\textsc{#1}}}
\begin{document}
\twocolumn

\title{Measuring Congruence on High Dimensional Time Series}

\author[1]{J\"org P. Bachmann}
\author[2]{Johann-Christoph Freytag}
\author[3]{Benjamin Hauskeller}
\author[4]{Nicole Schweikardt}
\affil[1,2,3,4]{ Humboldt-Universität zu Berlin, Germany\\ {\{bachmjoe,freytag,hauskelb,schweikn\}@informatik.hu-berlin.de}}
\date{\today}

\maketitle

\begin{abstract}
    A time series is a sequence of data items; typical examples are videos, stock ticker data, or streams of temperature measurements.
    Quite some research has been devoted to comparing and indexing simple time series, i.\,e., time series where the data items are real numbers or integers.
    However, for many application scenarios, the data items of a time series are not simple, but high-dimensional data points.
    E.\,g., in video streams each pixel can be considered as one dimension, leading to $k$-dimensional data items with $k=12,288$ already for low resolution videos with $128\times 96$ pixels per frame.

    Motivated by an application scenario dealing with motion gesture recognition, we develop a distance measure (which we call congruence distance) that serves as a model for the approximate congruency of two complex time series.
    This distance measure generalizes the classical notion of congruence from point sets to complex time series.

    We show that, given two input time series $S$ and $T$, computing the congruence distance of $S$ and $T$ is NP-hard.
    Afterwards, we present two algorithms with quadratic and quasi-linear runtime, respectively, that compute an approximation of the congruence distance.
    We provide theoretical bounds that relate these approximations with the exact congruence distance, as well as experimental results, which indicate that our approach yields accurate approximations of the congruence distance.
\end{abstract}




\section{Introduction}\label{sec:introduction}

Similarity search or nearest neighbour search is a common problem in computer science and has a wide range of applications (see Section~\ref{sec:relatedwork} for examples).
Given a dataset (in our case, a set of time series), a query (in our case, a time series), the problem is to find nearest neighbours to the query in the dataset, regarding a certain distance or similarity function.
The difference between distance and similarity functions is that a distance function returns $0$ for exact matches and a higher value otherwise, whereas similarity functions return greater values for more similar input data.
In this paper, we consider distance functions only.
There are two main variations of the nearest neighbour search problem.
The first variation is called the $\varepsilon$-nearest neighbour search ($\varepsilon$-NN search), where the search returns all elements from the dataset having a distance of at most $\varepsilon$ to the query.
The second variation is called Top-$k$ nearest neighbour search, where those $k$ elements having the smallest distance to the query will be returned.
In each case, a requirement in practical systems is the fast computation of the distance function.

The datasets considered in this paper are time series, i.e., sequences of elements in $\MM$, for
a metric space $(\MM,\dE)$. 
Examples of time series include simple time series where $\MM=\RR$ (e.\,g.\ temperature measurements or stock data) and multi variate time series where $\MM=\RR^k$ (e.\,g.\ motion trackings in three dimensional space or videos).

The distance functions defined and analyzed in this paper measure the (approximate) congruence of two time series.
Thereby, the distance between two time series $S$ and $T$ shall be $0$ iff two $S$ can be transformed into $T$ by rotation, translation, and mirroring; in this case, $S$ and $T$ are said to be \emph{congruent}.
A value greater than $0$ shall correlate to the amount of transformation needed to turn the time series into congruent ones.

\subsection{Motivation and Related Work}
\label{sec:relatedwork}

Simple time series are finite squences holding one number per time step.
There is a vast field of applications for simple time series in likely all scientific areas, including geo science (temperature measurements, earthquake prediction), medicine (heart rate measurements), and finance (stock ticker data).
Depending on the application, different similarity measurements of time series are used (e.\,g. Landmarks \cite{2000:LNM:846219.847378}, Dynamic Time Warping \cite{4400350}, and Longest Common Subsequence \cite{4400350}).
Different techniques evolved to speed up nearest neighbour searches \cite{Camerra:2010:IIM:1933307.1934553,Vlachos:2006:IMT:1146466.1146467}.
Esling and Agon published a survey on simple time series \cite{Esling:2012:TDM:2379776.2379788}.

Let us continue with a few examples highlighting the role of multi-dimensional time series.

\paragraph{Motion Gesture Recognition}

The interest in motion gesture recognition has drastically increased over the last decade, especially in combination with augmented reality systems, as for example the Oculus Rift \cite{Oculus}.
Recent products, like the LeapMotion \cite{LeapMotion} or Microsoft Kinect \cite{Kinect}, are able to recognize the posture of the hands and body, respectively.
These applications belong to appearence based approaches of motion gesture recognition, since they use cameras to recognize the posture at each time.
A second category of posture recognition systems include gloves \cite{Dipietro:2008:SGS:2220435.2221135}, which is more than 30 years old.
The area of their applications has grown more and more from medicine and health care up to recent applications as, for example, controlling a Smartphone \cite{Huber:2014:MTC:2540930.2540954,Gollner:2012:MLG:2148131.2148159}.
Approaches using systems like these gloves are called skeletal based.
The main difference is, that the gesture recognition software retrieves the key information, i.\,e. the trajectory of the body parts, instead of one or multiple video streams of that person.

Our interest, and the application of our work for motion gesture recognition, is the classification of gestures rather than the capturing itself.
There are various different approaches to treat this problem, e.\,g. Computer Vision based techniques \cite{Yang98extractionand}, trajectory based techniques \cite{iGesture}, approaches based on State Machines \cite{Hong00constructingfinite}, etc.

Considering the motion of a finger tip and its direction as a time series in $\mathbb R^6$, our approach contributes to the skeletal based algorithms.
From our point of view, the problem of motion gesture recognition narrows down to the problem of finding the most similar time series.
Hereby, similarity of two time series means the measurement of their congruence.
Since motion gestures usually are not performed exactly as stored in a database, we need a fine granular or approximative congruence measurement.
For example, a circle can be drawn more like an ellipse, but is more congruent to a circle than to a square or a line (see Figure~\ref{fig:motions}~and~Figure~\ref{fig:congruence}).
To the best of our knowledge, in the literature scaling and translation invariant, but no rotation invariant approaches have been developed.
However, the rotation invariance is a necessary feature for applications as for example interactive tables with multiple persons standing at all sides.

\begin{figure}
    \centering
    \includegraphics[width=.4\textwidth]{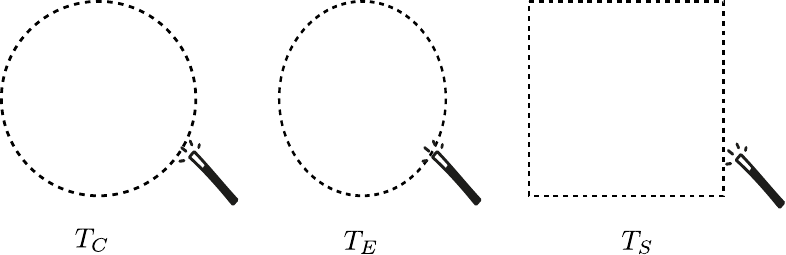}
    \caption{Three motions: A circle ($T_C$), an ellipse ($T_E$), and a square ($T_S$).}
    \label{fig:motions}
\end{figure}

\begin{figure}
    \centering
    \includegraphics[width=.3\textwidth]{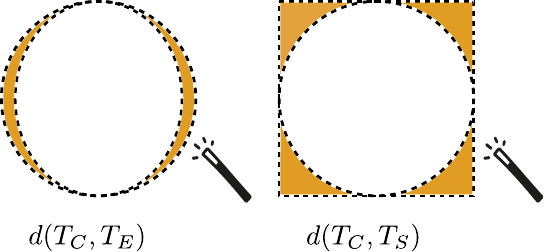}
    \caption{Sketches of distances: circle to ellipse (left), circle to square (right).}
    \label{fig:congruence}
\end{figure}

\paragraph{Content Based Video Copy Detection}

Nowadays, a vast amount of video data is uploaded and shared on community sites such as YouTube or Facebook.
This leads to various tasks such as copyright protection, duplicate detection, analysing statistics of particular broadcast advertisements, or searching for large videos containing certain scenes or clips.
Two basic approaches exist to address these challenges, namely watermarking and content based copy detection (CBCD).
Watermarking suffers from being vulnerable to transformations frequently performed during copy creation of a video (e.\,g.\ resizing or reencoding).
Furthermore, watermarking cannot be used on videos unmarked before distribution.
In contrast, CBCD is about finding copies of an original video by specifically comparing the contents and is thus more robust against transformations done during copy creation.
These transformations include resolution, format, and encoding changes, addition of noise, bluring, flipping, (color) negation, and gray-scaling.
Hence, copies are near-duplicates and it is natural to use a distance or similarity function to discover them.

Many approaches compare features created per image \cite{fastsequencematching,mpeg7}.
Global features include mean color values and color histograms.
In contrast to global features, local features (e.\,g. Harris Corners, SIFT, or SURF) are more robust against transformations when searching for similar images \cite{efficientnear-duplicate,zgridcbcd,scalablevideodbmining}.
However, these techniques suffer from weak robustness against transformations as for example flipping or negation.

Considering a video with $k$ pixels per image as a time series in a $k$ dimensional vector space, the transformations flipping, negation, and gray-scaling correspond to mirroring, rotating, and translating the time series and thus do not change the congruence distance to another video.
Furthermore, a global or local feature could be stored per image and regarded as state per time step.
Hence, the congruence distance function introduced in the present paper seems to be a good basis for video distance functions in combination with already existing techniques.

\paragraph{Congruence Calculation}

The classical \textsc{Congruence} problem basically determines whether two point sets $A,B\subseteq\mathbb R^k$ are congruent considering isometric transformations (i.\,e., rotation, translation, and mirroring) \cite{Heffernan:1992:ADA:142675.142697,Alt:1988:CSS:44611.44614}.
For two and three dimensional spaces, there are results
providing algorithms with runtime\linebreak[4] 
$\mathcal O(n\cdot\log n)$ \cite{Alt:1988:CSS:44611.44614}.
For larger dimensionalities, they provide an algorithm with runtime $\mathcal O(n^{k-2}\cdot\log n)$.
For various reasons (e.\,g. bounded floating point precision, physical measurement errors), the \emph{approximated} \textsc{Congruence} problem is of much more interest in practical applications.
Different variations of the approximated \textsc{Congruence} problem have been studied (e.\,g. what types of transformations are used, is the assignment of points from $A$ to $B$ known, what metric is used) \cite{Heffernan:1992:ADA:142675.142697,Alt:1988:CSS:44611.44614,Indyk:2003:ACN:636968.636974,Alt96discretegeometric}.

The \textsc{Congruence} problem is related to our work, since the problem is concerned with the existance of isometric functions such that a point set maps to another point set.
The main difference is, that we consider ordered lists of points (i.\,e. time series) rather than pure sets.


\subsection{Main Contributions}

In this paper, we use a model for complex time series covering models of time series known from the literature as well as high dimensional time series.
Focusing on high dimensional time series, our main contributions are as follows:

\begin{enumerate}[\ 1.]
    \item We define and analyze an intuitive \emph{congruence measurement} (congruence distance) which
  can be computed by solving an optimization problem with highly nonlinear constraints.
    \item We show that the calculation of the congruence distance is an NP-hard problem.
 This is done by constructing a technically involved polynomial time reduction from the NP-hard \textsc{1-in-3-Sat} problem.
    \item 
We provide two \emph{approximations} to the congruence distance (delta distance, and reduced delta distance) that
can be computed in polynomial time. Studying their approximativity, we obtain:
        \begin{itemize}
            \item The approximations yield \emph{lower bounds} on the congruence distance.
            \item There exist pathetic examples revealing that the \emph{relative error can grow arbitrarily}.
            \item Our \emph{experimental results} suggest a stable behaviour of the approximations in practical applications.
        \end{itemize}
\end{enumerate}

\subsection{Organization}

The rest of this paper is structured as follows.
In Section~\ref{sec:preliminaries} we provide basic notation used throughout the paper.
In Section~\ref{sec:timeseries}, we fix the notion of time series, and we present distance measures that 
turn the set of all time series into a metric space.
Section~\ref{sec:congruence} discusses the congruence of two time series:
We define an intuitive function measuring the congruence similarity of two time series and show that its calculation is an $\textup{NP}$-hard problem.
Furthermore, we provide an approximation with quadratical runtime and compare both distance functions with each other.
In Section~\ref{sec:approximatedcongruence} 
we provide an approximation which has quasi-linear runtime.
There are examples where the difference between the congruence distance functions provided in this paper grows arbitrarily.
However, the experimental results presented in Section~\ref{sec:experiments} indicate that in practice, our approach yields accurate approximations.
Section~\ref{sec:conclusion} concludes the paper.

\section{Preliminaries}
\label{sec:preliminaries}

\paragraph{Basic notation}
By $\NN$, $\RR$, $\RRposs{c}$ we denote the set of non-negative integers, the set of reals, and the set of all reals $\geq c$, for some $c\in \RR$, respectively.
For integers $x,y$ we write $[x,y]$ for the interval consisting of all \emph{integers} $z$ with $x\leq z \leq y$, and we write $[x,y)$ for $[x,y]\setminus\set{y}$.

By $\RR^k$ and $\RR^{k\times k}$, for $k\in\NN$, we denote the set of all vectors of length $k$, resp.,  all $(k\times k)$-matrices with entries in $\RR$.
For a vector $v\in\RR^k$ we write $v_i$ for the entry in position $i$.

Similarly, for a matrix $M\in\RR^{k\times k}$ we write $m_{i,j}$ for the entry in row $i$ and column $j$.
By $e_i$ we denote the \emph{$i$-th unit vector} in $\RR^k$, i.e., the vector with entry $1$ in the $i$-th position and entry $0$ in all other positions.

We write $Mv$ for the product of the matrix $M\in\RR^{k\times k}$ and the vector $v\in\RR^k$.
We write $\lambda v$ and $\lambda M$ for the product of the number $\lambda\in\RR$ with the vector $v$ and the matrix $M$, respectively (i.e., for all $i,j\in[1,k]$, the $i$-th entry of $\lambda v$ is $\lambda v_i$, and the entry in row $i$ and column $j$ of $\lambda M$ is $\lambda m_{i,j}$).

By $\Norm{\cdot}_p$, for $p\in\RRposs{1}$, we denote the usual \emph{$p$-norm} on $\RR^k$; i.e., $\Norm{v}_p=\big({\sum_{i=1}^k |v_i|^p}\big)^{1/p}$ for all $v\in\RR^k$.

By $\SP{\cdot,\cdot}$ we denote the usual \emph{scalar product} on $\RR^k$; i.e., for $u,v\in\RR^k$ we have $\SP{u,v}=\sum_{i=1}^k u_i v_i$.
In particular, $\Norm{v}_2 = \sqrt{\SP{v,v}}$ for all $v\in\RR^k$.
Recall that two vectors $u,v\in\RR^k$ are orthogonal iff $\SP{u,v}=0$.

A matrix $M\in \RR^{k\times k}$ is called \emph{orthogonal} if the absolute value of its determinant is 1. Equivalently, $M$ is orthogonal iff $\SP{m_i,m_i}=1$ and $\SP{m_i,m_j}=0$ for all $i,j\in[1,k]$ with $i\neq j$, where $m_i$ denotes the vector in the $i$-th column of $M$.
We write $\orth(k)$ to denote the set of all orthogonal matrices in $\RR^{k\times k}$.
Recall that angles and lengths are invariant under multiplication with orthogonal matrices, i.\,e.:
\begin{align*}
    \forall\ u,v\in\mathbb R^k,\ M\in\orth(k)\ : \ \ & \langle Mu,Mv \rangle = \langle u, v \rangle. \\
    \forall\  u\in\mathbb R^k,\ M\in\orth(k)\ : \ \ & \|Mu\|_2 = \|u\|_2.
\end{align*}

In general, a \emph{vector norm} is an arbitrary mapping $\Norm{\cdot}:\RR^k\longrightarrow\RRpos$ that satisfies the following axioms:
\begin{align*}
    \forall\ v\in\mathbb R^k \ : \ \ & \|v\| = 0 \ \Longrightarrow \ v = 0. \\
    \forall\ \lambda\in\mathbb R,\ v\in\mathbb R^k \ : \ \ & \|\lambda v\| = |\lambda|\cdot\|v\|. \\
    \forall\ u,v\in\mathbb R^k \ : \ \ & \|u+v\| \leq \|u\| + \|v\|.
\end{align*}
Clearly, $\Norm{\cdot}_p$ is a vector norm ($\ell_p$ norm) for any $p\in\RRposs{1}$.

A \emph{matrix norm} is a mapping $\|\cdot\|:\mathbb R^{k\times k} \longrightarrow \RRpos$ satisfying the following axioms:
\begin{align*}
    \forall\ M\in\mathbb R^{k\times k} \ : \ \ & \|M\| = 0 \
    \Longrightarrow \ M=0. \\
    \forall\ \lambda\in\mathbb R,\ M\in\mathbb R^{k\times k} \ : \ \ & \|\lambda M\| = |\lambda|\cdot\|M\|. \\
    \forall\ M,M'\in\mathbb R^{k\times k} \ : \ \ & \|M+M'\|\leq \|M\|+\|M'\|.
\end{align*}
The particular matrix norms considered in this paper are the \emph{max column norm} $\|\cdot\|_m$ and the \emph{$p$-norm} $\Norm{\cdot}_p$, for $p\in\RRposs{1}$, which are defined as follows: For all $M\in\RR^{k\times k}$,
\begin{eqnarray*}
 \Norm{M}_m & := & \max_{j\in[1,k]}\Big(\sum_{i=1}^k |m_{i,j}|\Big), \\
 \Norm{M}_p & := & \Big( \sum_{i=1}^k\sum_{j=1}^k |m_{i,j}|^p\Big)^{1/p}.
\end{eqnarray*}

Recall that a \emph{pseudo metric space} $(\MM,d)$ consists of a set $\MM$ and a distance function $d:\MM\times \MM \longrightarrow \RRpos$ satisfying the following axioms:
\begin{align*}
    \forall\ x,y\in \MM \ : \ \ & d(x,y) = d(y,x). \\
    \forall\ x,y,z\in \MM \ : \ \ & d(x,z) \leq d(x,y)+d(y,z).
\end{align*}
A \emph{metric space} is a pseudo metric space which also satisfies
\begin{align*}
    \forall\ x,y\in \MM \ : \ \ & d(x,y) = 0 \ \Longleftrightarrow \ x = y.
\end{align*}
Note that if $\Norm{\cdot}$ is an arbitrary vector norm and $d(\cdot,\cdot)$ is defined as $d(u,v):= \Norm{u-v}$, then $(\RR^k,d)$ is a metric space.
By $\dE_p$, for $p\in\RRposs{1}$, we denote the usual $\ell_p$-distance, i.e., the particular distance function with $\dE_p(x,y)= \Norm{x-y}_p$.

If $\Norm{\cdot}$ is an arbitrary matrix norm and $d(\cdot,\cdot)$ is defined as $d(M,M'):=\Norm{M-M'}$ for all matrices $M,M'\in\RR^{k\times k}$, then $(\RR^{k\times k},d)$ is a metric space.


\section{Time Series}
\label{sec:timeseries}

Let $\MM$ be an arbitrary set. A \emph{time series over $\MM$} is a finite
sequence of elements in $\MM$. 
For a time series $T=(t_0,\ldots,t_{n-1})\in \MM^n$, we write $\#T$ to
denote the length $n$ of $T$. The elements $t_i$ of $T$ are called the
\emph{states} of $T$.

The special case where $\MM=\RR$ yields the \emph{simple time series}
that are usually considered in the literature; examples of application
areas are time sequences obtained from stock data, temperature
measurements or heart rate monitoring (here, we consider time series with
homogenous time intervals only).
For such simple time series, the \emph{distance} between two time
series $S$ and $T$ of equal length $n$ usually is 
defined as $\Norm{ \,|S-T|\,}$, where $\Norm{\cdot}$ is a vector norm and $|S-T|$ is
the vector in $\RR^k$ whose $i$-th entry is the real number $|s_{i-1}-t_{i-1}|$.
The most common case considered in the literature uses the 1-norm
$\|\cdot\|_1$, cf.\ e.g.\ \cite{Esling:2012:TDM:2379776.2379788}; see Figure~\ref{fig:simpletimeseries} for an illustration.

\begin{figure}
    \centering
    \includegraphics[width=.45\textwidth]{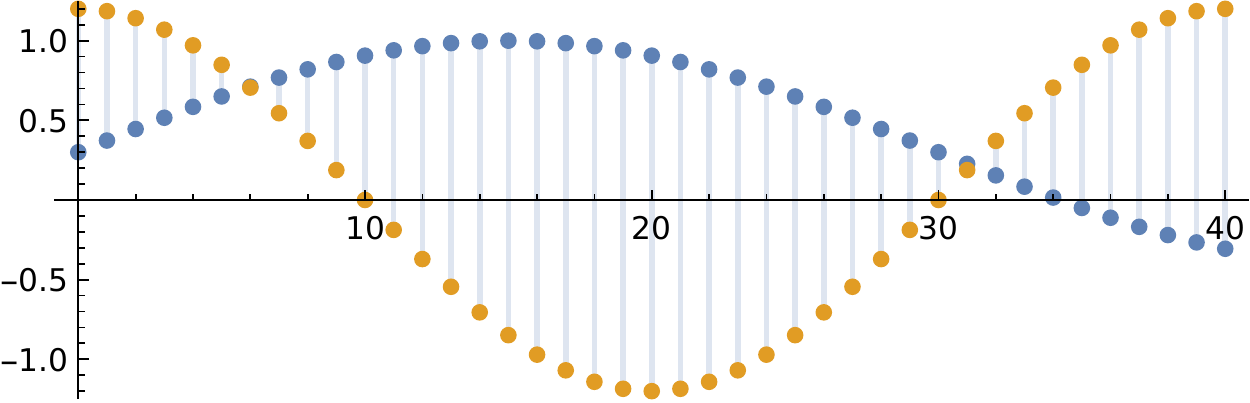}
    \caption{Two time series $S$ and $T$ in $\mathbb R$.}
    \label{fig:simpletimeseries}
\end{figure}

We generalize this to time series over arbitrary sets $\MM$ as follows.
Let $(\MM,\dE)$ be a metric space. For time series $S,T$ of length $n$
over $\MM$, we let 
$\dE(S,T)$ be the real vector of length $n$ with entry $\dE(s_i,t_i)$ in
its $(i{+}1)$-th position (for all $i\in[0,n)$).
Now let $\Norm{\cdot}$ be an arbitrary vector norm.
We define a distance measure $\Norm{\dE}: \MM^n \times \MM^n \longrightarrow
\RRpos$ via
\[
   \Norm{\dE}(S,T) \ := \ \  \Norm{\dE(S,T) }.
\]

By $\mathcal{T}_{\MM}$ we denote the set of all time series over
$\MM$ of arbitrary length, i.e., $\mathcal{T}_{\MM} =\bigcup_{n\in \NN} \MM^n$.
If $M$ is clear from the context, we will omit the 
subscript $\MM$ and simply write $\mathcal{T}$
instead of $\mathcal{T}_{\MM}$. For $n\in\NN$ we then write
$\mathcal{T}_n$ to denote the set $\MM^n$ of all time series of length
$n$ over $\MM$.
It is straightforward to verify the following.
\begin{proposition}
$(\mathcal{T}_n,\Norm{\dE}_p)$ is a metric space.
\end{proposition}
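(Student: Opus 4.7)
The plan is to verify the three metric axioms (symmetry, identity of indiscernibles, triangle inequality) for $\|\mathsf{d}\|_p$ by reducing each to a corresponding property of $\mathsf{d}$ on $\MM$ together with the vector norm axioms of $\|\cdot\|_p$. Since the paper already recalls that $\Norm{\cdot}_p$ is a vector norm and that $(\MM,\dE)$ is a metric space, nearly everything should fall out by combining these two facts componentwise.

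Symmetry is immediate: the vector $\dE(S,T)\in\RR^n$ has $i$-th entry $\dE(s_{i-1},t_{i-1})= \dE(t_{i-1},s_{i-1})$, so $\dE(S,T)=\dE(T,S)$ as vectors, whence $\Norm{\dE}_p(S,T)=\Norm{\dE}_p(T,S)$. For the identity of indiscernibles, I would argue: $\Norm{\dE(S,T)}_p=0$ iff the nonnegative vector $\dE(S,T)$ is the zero vector (by the first vector norm axiom applied to $\Norm{\cdot}_p$), iff $\dE(s_i,t_i)=0$ for every $i\in[0,n)$, iff $s_i=t_i$ for every $i$ (since $\dE$ is a metric on $\MM$), iff $S=T$.

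The triangle inequality is the step that requires a small additional ingredient. Given time series $S,T,U\in\MM^n$, the triangle inequality in $(\MM,\dE)$ gives, for every coordinate $i\in[0,n)$,
\[
  \dE(s_i,u_i) \ \leq \ \dE(s_i,t_i)+\dE(t_i,u_i).
\]
Hence the nonnegative vector $a\deff \dE(S,U)$ is componentwise $\leq b+c$ where $b\deff \dE(S,T)$ and $c\deff \dE(T,U)$. I would then invoke monotonicity of the $p$-norm on nonnegative vectors (which is immediate from its definition $\Norm{v}_p=(\sum_i |v_i|^p)^{1/p}$), obtaining $\Norm{a}_p\leq \Norm{b+c}_p$, and finally apply the triangle inequality for the vector norm $\Norm{\cdot}_p$ to conclude $\Norm{b+c}_p\leq \Norm{b}_p+\Norm{c}_p$. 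Chaining the two gives $\Norm{\dE}_p(S,U)\leq \Norm{\dE}_p(S,T)+\Norm{\dE}_p(T,U)$.

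The only mild obstacle is the monotonicity step, since it is not one of the three vector norm axioms listed in the preliminaries; but for $\Norm{\cdot}_p$ specifically it follows trivially from the closed-form expression. If one wanted to state the proposition for a general monotone vector norm in place of $\Norm{\cdot}_p$, this monotonicity would need to be assumed as a hypothesis. Since the proposition is stated for $\Norm{\cdot}_p$, the argument sketched above suffices and the proof is essentially a routine verification.
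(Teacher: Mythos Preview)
Your argument is correct and is exactly the routine verification the paper has in mind; the paper itself provides no explicit proof, merely stating that the proposition is ``straightforward to verify.'' Your observation about needing monotonicity of $\Norm{\cdot}_p$ (beyond the abstract norm axioms) is a nice point that the paper does not make explicit.
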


Next, we want to extend $\Norm{\dE}$ to a distance measure on
time series of arbitrary length, i.e., we want to extend $\Norm{\dE}$ 
to a mapping $\TS_{\MM}\times\TS_{\MM}\longrightarrow \RRpos$.
For this, the following notation is convenient.

\begin{definition}\upshape
 Let $T=(t_0,\cdots,t_{n-1})\in\mathcal T$ be a time series, 
 let $b\in[0,n)$, and let $\ell\in[1,n{-}b]$.
 Then $T_b^\ell \coloneqq (t_b,\cdots,t_{b+\ell-1})$ is the
 \emph{subseries} of $T$ of length $\ell$ starting at index $b$.
\end{definition}

If $S=(s_0,\ldots,s_{m-1})\in \MM^m$ and $T=(t_0,\ldots,t_{n-1})\in \MM^n$
are two time series of lenghts $m<n$, then we let
\[
  \Norm{\dE}(T,S) \ := \ \ 
  \Norm{\dE}(S,T) \ := \ \ 
  \min_{b\in [0,n-m]} \Norm{\dE}(S, T_b^m).
\] 
I.e., the distance between $S$ and $T$ is computed by finding the best
match of the shorter time series regarded as a window over the longer time series.
We will write 
\[
  d_p(\cdot,\cdot)
\] 
instead of
$\Norm{\dE}(\cdot,\cdot)$ for the special case where $\MM=\RR^k$, 
$\Norm{\cdot}=\Norm{\cdot}_p$ for some $p\in\RRposs{1}$, and 
$\dE$ is the \emph{Euclidean} distance $\dE_2$
defined via 
$\dE_2(x,y)=\Norm{x-y}_2$ for all $x,y\in\RR^k$.

It is easy to see that many other distance functions (e.\,g. DTW and
LCSS \cite{1104847,Corradini:2001:DTW:882476.883586,4400350}) 
that have been considered in the literature for time series over $\RR$ or $\RR^k$ can be
adopted to time series over $\MM$ for a metric space $(\MM,\dE)$ accordingly.

To avoid confusion between $\dE$,
$\Norm{\dE}$, $\dE_p$, $d_p$, and further distance functions considered in this
paper, we will henceforth write 
$d$ (or variants thereof) to denote distance functions for relating
time series (i.e., $d$ will be a function from $\mathcal{T}_\MM\times
\mathcal{T}_\MM$ to $\RRpos$),
and we will write 
$\dE$ (or variants thereof) to denote distance functions for relating
individual states in the time series (i.e., $\dE$ will be a function
from $\MM\times\MM$ to $\RRpos$).
The latter will be called \emph{state distance function}.

We will speak of \emph{metric time series} whenever considering time
series over $\MM$ for a metric space $(\MM,\dE)$. For a given vector norm
$\Norm{\cdot}$, the associated function $\Norm{\dE}$ will serve
as a distance measure for time series over $\MM$.

Let us conclude this section with a few examples that illustrate
the generality of metric time series.

\begin{examples}\upshape
 As already explained above, \emph{simple time series} are a
 special case of time series where
 $\MM=\mathbb R$, $\dE:\RR\times \RR\longrightarrow \RRpos$ is defined
 via $\dE(x,y)=|x-y|$ for $x,y\in\RR$, and
 $\Norm{\cdot}=\Norm{\cdot}_p$ for some $p\in\RRposs{1}$.

 \emph{Complex time series}, i.e., time series where the states are elements in $\RR^k$
 for some fixed $k$, are the special case where
 $\MM=\RR^k$, $\dE:\RR^k\times\RR^k\longrightarrow\RRpos$ is the
 Euclidean distance $\dE_2$,
 $\Norm{\cdot}=\Norm{\cdot}_p$ for some $p\in\RRposs{1}$, and hence
 $\Norm{\dE}=d_p$.

 For an arbitrary undirected connected graph $G=(V,E)$, we can consider the 
 mapping $\dE:V\times V\longrightarrow \RRpos$ where $\dE(u,v)$ is the
 length of a shortest path between nodes $u$ and $v$ of $G$. Note that
 $(V,\dE)$ is a metric space. Given an arbitrary vector norm
 $\Norm{\cdot}$, we can view sequences of nodes of $G$ as
 time series over $\MM=V$, and $\Norm{\dE}$ as a distance measure
 between such time series.
\end{examples}

In the remainder of this paper we restrict attention to time series over $\MM=\RR^k$ and state distance functions $\dE_p$.


\section{Time Series Congruence}
\label{sec:congruence}

Let $\MM:=\RR^k$ and let $\TS:=\TS_{\MM}$. 
If $T=(t_0,\ldots,t_{n-1})\in\TS$ is a time series, $M\in\RR^{k\times
  k}$ is a matrix, and $v\in\RR^k$ is a vector, we write
$M\cdot T + v$ for the time series $(t'_0,\ldots,t'_{n-1})$ where 
$t'_i= Mt_i +v$ for each $i\in[0,n)$.

We say that two time series $S,T\in\mathcal{T}$ are \emph{congruent}, if $S$
can be transformed into $T$ by rotation, mirroring, or translation.
This is formalized in the following definition.

\begin{definition}\upshape
 Consider the metric space $(\RR^k,\dE)$ for $\dE\deff\dE_2$.
 Two time series $S$ and $T$ of the same length $n$ are called
 \emph{congruent} (for short: $S\cong_C T$) if there is a matrix
 $M\in\orth(k)$ and a vector $v\in\RR^k$ such that 
 $T=M\cdot S + v$.
\end{definition}

It is easy to see that for each $n\in\NN$, the congruence relation $\cong_C$ is an 
\emph{equivalence relation} on the class of all time series over $\RR^k$ of
length $n$.

According to the motivation provided in Section~\ref{sec:introduction},
we aim at a distance measure that regards two time series $S$ and $T$
as very similar if $T$ is obtained from $S$ via rotation,
mirroring, or translation, i.e., which satisfies the following
\emph{congruence requirement}.

\begin{definition}[Congruence Requirement]
    \label{def:congruencerequirement}\upshape\ \\
Let $k\in\NN$, let $\MM=\RR^k$, and let $\TS=\TS_{\MM}$.
A function $d:\mathcal T\times\mathcal T\longrightarrow\mathbb R_{\geq
  0}$ satisfies the \emph{congruence requirement} iff
for all time series $S,T\in\TS$ the following is true: 
    \begin{align*}
        d(S,T) = 0 
        \ \iff\  S\cong_C T.
    \end{align*}
\end{definition}

The following example highlights some intuition for the congruence
distance function that is provided in Definition~\ref{def:congruencedistance}.
\begin{example}\upshape
    \label{ex:intuitioncongruencedistance}
    Consider the time series
    \[ \textstyle
        S\coloneqq \left( \colvec{2}{-4}{0}, \colvec{2}{0}{0},
          \colvec{2}{1}{0} \right)\ , \qquad
        T\coloneqq \left( \colvec{2}{0}{3}, \colvec{2}{0}{0}, \colvec{2}{1}{0} \right) 
    \]
    Obviously, $d_1(S,T)=5$.
    Now, let us rotate $T$ by 90 degress counterclockwise, i.\,e., let
    us compute $M\cdot T$ for the matrix
    \[
        M \coloneqq \begin{pmatrix}
            0 & -1 \\
            1 & 0
        \end{pmatrix}. 
       \quad\text{Then,}\quad
        M\cdot T = \left( \colvec{2}{-3}{0}, \colvec{2}{0}{0}, \colvec{2}{0}{1} \right)
    \]
    and
    $d_1(S,M{\cdot} T)=1+\sqrt 2 < 5$.

    Thus, without rotation, we need to add a vector of Euclidean
    length $5$ to the first state of $T$ in order to
    transform $T$ into $S$.
    But after rotating $T$ by 90 degrees counterclockwise, we only
    need to add a vector of length $1$ to the first state and a vector
    of length $\sqrt 2$ to the third state of $M \cdot T$ to obtain
    the time series $S$.

    Adding vectors to certain states can be interpreted as investing
    energy to make both time series having the same structure, i.e.,
    being ``congruent''.
    Hence, the congruence distance defined below can be viewed as a
    measure for the minimum amount of energy needed to make both time
    series congruent.
\end{example}

\begin{definition}[Congruence Distance]
    \label{def:congruencedistance}\upshape
    Let $k\in\NN$, $\MM=\RR^k$, $\TS=\TS_{\MM}$, and $p\in\RRposs{1}$.
       The \emph{congruence distance} $d^C_p(S,T)$ between two time
       series $S,T\in\TS$ is defined via
    \begin{align*}
        d^C_p(S,T) \coloneqq \min_{M\in\orth(k),\, v\in\mathbb R^k} d_p\left(S,\,M\cdot T+v\right).
    \end{align*}
\end{definition}

Note that, although $\orth(k)$ and $\RR^k$ are infinite sets, it can
be shown that the ``min'' used
in the definition of $d^C_p(S,T)$ does exist, and that for given $S,T$ there are
$M\in\orth(k)$ and $v\in\RR^k$ such that $d^C_p(S,T)=d_p(S,\, M\cdot
T+v)$; a proof can be found in the appendix.

It is not difficult to see that the following holds for $\MM=\RR^k$
and $\TS_n=\MM^n$ for each $n\in\NN$:
\begin{proposition}
    \label{prop:congruencetriangle}
    $(\mathcal T_n, d_p^C)$ is a pseudo metric space.
\end{proposition}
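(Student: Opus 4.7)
The plan is to verify the two pseudo-metric axioms (symmetry and the triangle inequality) directly from the definition, exploiting two facts: (i) orthogonal matrices preserve Euclidean distances on $\RR^k$ (already stated in Section~\ref{sec:preliminaries}), and (ii) $(\TS_n, \Norm{\dE}_p) = (\TS_n, d_p)$ is already known to be a metric space (the preceding proposition). Nonnegativity is immediate since $d_p$ maps into $\RRpos$. Throughout I will use the fact, remarked after Definition~\ref{def:congruencedistance}, that the minimum in the definition of $d_p^C$ is actually attained.

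\textbf{Symmetry.} Let $M\in\orth(k)$ and $v\in\RR^k$ attain $d_p^C(S,T)=d_p(S,M\cdot T+v)$. Then $M^{-1}\in\orth(k)$. I would set $M'\deff M^{-1}$ and $v'\deff -M^{-1}v$ and observe that for each $i$,
\begin{align*}
 \Norm{t_i - (M'\,s_i+v')}_2
 \ =\ \Norm{M^{-1}\bigl(Mt_i - s_i + v\bigr)}_2
 \ =\ \Norm{s_i - (Mt_i+v)}_2,
\end{align*}
using that $M^{-1}$ is length preserving. Applying $\Norm{\cdot}_p$ to the resulting vectors of state distances yields $d_p(T,M'\cdot S+v')=d_p(S,M\cdot T+v)$, whence $d_p^C(T,S)\leq d_p^C(S,T)$. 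The reverse inequality follows by swapping the roles of $S$ and $T$.

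\textbf{Triangle inequality.} Let $M_1,v_1$ and $M_2,v_2$ attain $d_p^C(S,T)$ and $d_p^C(T,U)$, respectively. Set $M\deff M_1M_2$ and $v\deff M_1v_2+v_1$; then $M\in\orth(k)$ (the product of two orthogonal matrices is orthogonal) and a direct computation gives $M\cdot U+v=M_1(M_2\cdot U+v_2)+v_1$. Applying the triangle inequality of the metric $d_p$ on $\TS_n$ to the three time series $S$, $M_1\cdot T+v_1$, and $M_1(M_2\cdot U+v_2)+v_1$ yields
\begin{align*}
 d_p^C(S,U)
 &\ \leq\ d_p\bigl(S,\,M\cdot U+v\bigr)\\
 &\ \leq\ d_p\bigl(S,\,M_1\cdot T+v_1\bigr)+d_p\bigl(M_1\cdot T+v_1,\,M_1(M_2\cdot U+v_2)+v_1\bigr).
\end{align*}
The first summand equals $d_p^C(S,T)$. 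For the second summand, translation by $v_1$ does not affect state-wise distances, and then applying $M_1^{-1}$ state-wise (again length preserving) shows it equals $d_p(T,M_2\cdot U+v_2)=d_p^C(T,U)$. Combining gives $d_p^C(S,U)\leq d_p^C(S,T)+d_p^C(T,U)$.

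\textbf{Expected obstacle.} There is no real obstacle of a conceptual nature; the whole argument is bookkeeping with orthogonal matrices and translations, which act as isometries state-wise and hence commute with any vector norm on the resulting vector of state distances. The only subtle point is the existence of the minimum in $d_p^C$, but this has already been granted by the remark preceding the proposition, so I can freely speak of optimal pairs $(M_i,v_i)$ rather than $\epsilon$-optimal ones. (If one preferred to avoid the existence statement, the same proof goes through verbatim with infima and an $\epsilon/2$ splitting, at the cost of a little extra notation.) Note that this proves only the pseudo-metric axioms; the axiom $d_p^C(S,T)=0\Rightarrow S=T$ fails by design, since $d_p^C(S,T)=0$ only forces $S\cong_C T$ as in Definition~\ref{def:congruencerequirement}.
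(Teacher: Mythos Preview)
Your proof is correct and follows essentially the same approach as the paper: pick optimal transformations for the two pairs and apply the triangle inequality of the underlying metric $d_p$. The only cosmetic difference is that the paper, after noting symmetry, writes $d_p^C(S,T)=d_p(M_1\cdot S+v_1,T)$ (transforming $S$ rather than $T$), which lets $T$ sit unmodified in the middle and avoids your composition-plus-isometry step; your version with $M=M_1M_2$, $v=M_1v_2+v_1$ and the subsequent invariance argument achieves the same thing with one extra line.
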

The proof is given in the appendix.

Obviously, calculating $d^{C}_p(S,T)$ for arbitrary $S,T\in\mathcal T$
is a nonlinear optimization problem that can be solved using numeric
solvers.
However the problem is computationally difficult: As we show in the
next subsection, already the calculation of $d_1^C$ is $\textup{NP}$-hard.

\subsection{NP-Hardness}
\label{sec:nphard}

In this subsection we restrict attention to $d_1=\Norm{\dE_2}_1$ and
the according congruence distance $d_1^C$. 
Consider the following problem:
\begin{quote}
 $\DoneCcomput$
 \begin{description}
   \item[Input:] A number $k\in\NN$ and two time series $S$ and $T$ of
     equal length over $\RR^k$.
   \item[Task:] Compute (a suitable representation of) the number $d_1^C(S,T)$.
 \end{description}
\end{quote}

This subsection's main result is:
\begin{theorem}
    \label{thm:reduction}
    If $\textup{P}\neq\textup{NP}$, then \\
    $\DoneCcomput$ cannot
    be solved in polynomial time.
\end{theorem}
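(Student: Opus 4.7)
The plan is to establish NP-hardness via a polynomial-time reduction from the NP-complete problem $\Problem{1-in-3-Sat}$: given a 3-CNF formula $\phi$ with variables $x_1,\ldots,x_n$ and clauses $C_1,\ldots,C_m$, decide whether $\phi$ admits an assignment making exactly one literal per clause true. Given such a $\phi$, I will construct in polynomial time two time series $S_\phi, T_\phi$ over $\RR^k$ of equal length, where both $k$ and the length are polynomial in $n+m$, such that $d_1^C(S_\phi, T_\phi) = 0$ if and only if $\phi$ is 1-in-3-satisfiable. Since $d_1^C(S,T)=0 \iff S\cong_C T$, any polynomial-time procedure computing $d_1^C$ would decide $\Problem{1-in-3-Sat}$ by testing whether the output equals zero, contradicting $\textup{P}\neq\textup{NP}$.

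The construction combines two kinds of gadgets. An \emph{anchor/variable gadget} begins both $S_\phi$ and $T_\phi$ with the zero vector, which forces the translation component of any admissible pair $(M,v)$ to satisfy $v=0$. Subsequent anchor states---for example, scaled basis vectors $e_i$ together with auxiliary combinations such as $e_i + e_j$, placed at identical positions of $S_\phi$ and $T_\phi$---are chosen so that the point-wise equations $s_p = M t_p$, combined with the orthogonality of $M$, restrict $M$ to act on the first $n$ coordinate directions as a signed diagonal matrix $\mathrm{diag}(\sigma_1,\ldots,\sigma_n)$ with $\sigma_i\in\{-1,+1\}$. These signs are the only remaining degrees of freedom and are interpreted as truth values of $x_1,\ldots,x_n$ under the encoding $x_i$ true iff $\sigma_i=-1$.

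A \emph{clause gadget} is then appended per clause $C_j$ and, possibly in auxiliary coordinates added to $\RR^k$, places a constant number of extra state pairs $(s_p, t_p)$ such that $s_p = M t_p$ holds exactly when $\sigma_{i_1} + \sigma_{i_2} + \sigma_{i_3} = 1$ for the indices of the three literals of $C_j$. Under the above encoding, this linear equation is precisely the 1-in-3 condition for $C_j$; negated literals are accommodated by flipping signs in the corresponding target states. Conjoining the anchor gadget with all $m$ clause gadgets yields $d_1^C(S_\phi,T_\phi)=0$ iff a single assignment satisfies every clause in the 1-in-3 sense.

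The main obstacle is the anchor/variable gadget: one must use only finitely many point-wise constraints $s_p = M t_p$ to pin down $M$---which a priori ranges over the continuous manifold $\orth(k)$---to the finite set of signed diagonal matrices, without pre-committing to any particular assignment. Designing anchor vectors and choosing $k$ so as to achieve this discretization, while keeping the clause-gadget coordinates independent of the anchor subspace and the whole construction of polynomial size in $n+m$, is the technical heart of the reduction. Once the gadgets are in place, polynomial constructibility and correctness of the equivalence are routine to verify, and NP-hardness of $\DoneCcomput$ follows.
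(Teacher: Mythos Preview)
Your reduction targets the question ``is $d_1^C(S_\phi,T_\phi)=0$?''. But that question lies in $\textup{P}$: by Theorem~\ref{thm:delta}, $d_1^C(S,T)=0$ (i.e., $S\cong_C T$) is equivalent to $\Delta S=\Delta T$, and comparing the two self-similarity matrices takes polynomial time. Hence no correct polynomial-time reduction from an NP-hard problem to the predicate ``$d_1^C=0$'' can exist (unless $\textup{P}=\textup{NP}$), so the equivalence you aim for cannot hold for \emph{any} choice of $S_\phi,T_\phi$.

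The failure is visible already in your clause gadget. Once $M=\mathrm{diag}(\sigma_1,\ldots,\sigma_n)$ is signed diagonal, any exact constraint $s_p=Mt_p$ splits coordinate-wise into $(s_p)_i=\sigma_i\,(t_p)_i$; each nonzero coordinate of $t_p$ \emph{fixes} the corresponding $\sigma_i$. No finite conjunction of such equalities can encode the genuinely disjunctive condition $\sigma_{i_1}+\sigma_{i_2}+\sigma_{i_3}=1$ without simultaneously prescribing which of the three literals is the true one.

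The paper's reduction therefore uses a \emph{positive} threshold: it constructs $\bar S_\Phi,\bar T_\Phi$ and proves $d_1^C(\bar S_\Phi,\bar T_\Phi)=m\cdot(72+8\sqrt{2})$ iff $\Phi$ has a 1-in-3 model, with $d_1^C$ strictly larger otherwise. The variable gadget does not force $M$ to be Boolean via equalities; instead it arranges that any deviation of $M$ from a Boolean matrix increases the distance by strictly more than it could possibly save in the clause gadget. The clause gadget in turn attains its minimum $4\sqrt{2}$ for each of the three Boolean matrices corresponding to the three 1-in-3 ways of satisfying the clause. The quantitative balancing of these two contributions (Lemmas~\ref{lem:d'}--\ref{lem:dPsi}) is the technical core, and it hinges on the target value being strictly positive rather than zero.
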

The remainder of Subsection~\ref{sec:nphard} is devoted to the proof
of Theorem~\ref{thm:reduction}, which constructs a reduction from the $\textup{NP}$-complete problem $\Problem{1-in-3-Sat}$.
Recall that $\Problem{1-in-3-Sat}$ is the problem where the input
consists of a propositional formula $\Phi$ in 3-cnf, i.e., in conjunctive normal form where
each clause is a disjunction of literals over 3 distinct variables.
The task is to decide whether there is an assignment $\alpha$ that
maps the variables occurring in $\Phi$ to the truth values $0$ or $1$,
such that in each disjunctive clause of $\Phi$ exactly one literal is
satisfied by $\alpha$; we will call such an assignment $\alpha$ a
\emph{1-in-3 model of $\Phi$}.

Our reduction from $\Problem{1-in-3-Sat}$ to $\DoneCcomput_\mu$ will
proceed as follows: 
A given 3-cnf formula $\Phi$ with $k$ variables
$V_1,\dots,V_k$, will be mapped to two time series $\bar S_\Phi$ and
$\bar T_\Phi$ over $\mathbb R^k$, 
which represent the formula $\Phi$ and its variables, respectively.
Our construction of $\bar S_\Phi$ and $\bar T_\Phi$ will ensure that
for a certain number $c(\Phi)$ the following is true: \
$d_1^C(\bar S_\Phi,\bar T_\Phi) = c(\Phi)$ $\iff$ there is a
1-in-3 model of $\Phi$.

The basic idea for our choice of $\bar S_\Phi$ and $\bar T_\Phi$ is
that each dimension of $\mathbb R^k$ represents one variable. 
An orthogonal matrix, mirroring the $i$-th dimension then will
correspond to negating the $i$-th variable $V_i$. 

To formulate the proof, the following notation will be convenient.
For a propositional formula $\Phi$ with $k$ variables, we write
$V_1,\ldots,V_k$ to denote the variables occurring in $\Phi$.
A \emph{literal} over a variable $V_i$ is a formula
$L_i\in\set{V_i,\nicht V_i}$.
A \emph{disjunctive (conjunctive) 3-clause} is a formula 
$\Psi_I = \Oder_{i\in I} L_i$ ($\Psi_I = \Und_{i\in I} L_i$) with 
$L_i\in\set{V_i,\nicht V_i}$,  $|I|=3$, and
$I\subseteq[1,k]$.
A \emph{3-cnf formula} is a formula
$\Phi = \Und_{j=1}^m \Psi_j$, where $m\geq 1$ and each $\Psi_j$ is a
disjunctive 3-clause.

Furthermore, we will use the following notation for concatenating time
series.
Let $\ell\geq 1$, and let $S_j=(s^j_0,\ldots,s^j_{n_j-1})$ be a time
series over $\RR^k$ for each $j\in[1,\ell]$. Then, by
\[
  S_1 \times \cdots \times S_\ell
\]
we denote the time series 
\[
  \big(\,
    s^1_0,\ldots,s^1_{n_1-1}, \
    \ldots, \ 
    s^\ell_0,\ldots,s^\ell_{n_\ell-1}
  \,\big).
\]
If $i_1 < \cdots < i_\ell$ is an increasing sequence of integers
and $S_{i_j}$ is a time series over $\RR^k$, for each $j\in[1,\ell]$,
then for $I:=\set{i_1,\ldots,i_\ell}$ we let
\[
   \bigotimes_{i\in I} S_i
   \ \ \deff \ \ 
   S_{i_1} \times \cdots \times S_{i_\ell}.
\]

\paragraph{From a 3-cnf formula $\Phi$ to time series $\bar S_\Phi$
  and $\bar T_\Phi$}

For a given 3-cnf formula $\Phi$ let $k$ be the number of variables
occurring in $\Phi$.
Let $\Phi$ be of the form $\bigwedge_{j=1}^m \Psi_j$, where $m\geq 1$
and each $\Psi_j$ is a disjunctive 3-clause of the form $\Oder_{i\in
  I_j} L_{i}$, where $L_i\in\set{V_i,\nicht V_i}$, $|I_j|=3$, and  $I_j\subseteq [1,k]$.

For a disjunctive 3-clause $\Psi=\Oder_{i\in I}L_i$ let 
\[
  \Psi' \ \ \deff \ \
  \Oder_{j\in I}\ \Big(\
     \underbrace{L_{j} \ \und \!\!\! \Und_{i\in I\setminus\set{j}}
       \!\!\!\overline{L_{i}}}_{\textstyle =: \ \Gamma_{j}}
  \ \Big),
\]
where \ $\overline{V_{i}}\deff \nicht V_{i}$ \ and \
$\overline{\nicht V_{i}} \deff V_{i}$.
Clearly, an assignment $\alpha$ satisfies $\Psi'$ iff it
is a 1-in-3 model of $\Psi$. And $\alpha$ satisfies
\ $\Phi'\deff \Und_{j=1}^m\Psi'_j$ \
iff it is a 1-in-3 model of
\ $\Phi=\Und_{j=1}^m\Psi_j$.

The formulas $\Gamma_{i'}$ for $i'\in I$ are called \emph{the
  conjunctive 3-clauses implicit in $\Psi$}.

We define an embedding $\theta$ of variables, literals, and
conjunctive 3-clauses into $\RR^k$ as follows:
For each $i\in[1,k]$ let
\[
   \theta(V_i):= e_i
   \quad\text{and}\quad
   \theta(\nicht V_i):= -e_i.
\]
For a literal $L_i$ we let $l_i:=\theta(L_i)$.
For a conjunctive 3-clause $\Gamma=\bigwedge_{i\in I} L_i$, we let
\[
  \gamma \ \ \deff \ \ \theta(\Gamma) \ \ \deff \ \ \sum_{i\in
    I}\theta(L_i)
  \ \ = \ \ \sum_{i\in I}l_i.
\]
In particular, for $\Gamma_{j}$ as defined above, we obtain that
\[
   \gamma_{j} \ \ \deff \ \ \theta(\Gamma_{j}) \ \ = \ \ l_{j} -
   \sum_{i\in I\setminus\set{j}} l_i.
\]
For each disjunctive 3-clause \ $\Psi=\Oder_{i\in I}L_i$ \
we let \ 
\[\textstyle
   e_I \ \deff \ \sum_{i\in I}e_i
\]
and define the following time series over $\RR^k$:
\begin{equation}\label{eq:den:SPsi}
 \begin{array}{rlcrl}
     S'_\Psi  \coloneqq & \displaystyle\bigotimes_{i\in I}\ (6e_i, -6e_i)
   & , \
   & T'_\Psi  \coloneqq & \displaystyle\bigotimes_{i\in I}\ (6 e_i, 6 e_i),
 \medskip\\
     S_\Psi  \coloneqq & \displaystyle\bigotimes_{j\in I}\ (\gamma_j)
   & , \
   & T_\Psi  \coloneqq & \displaystyle (\, e_I,\; e_I,\; e_I\,)
 \medskip\\
     \tilde{S}_\Psi \coloneqq & S'_\Psi\times S_\Psi
   & , \ 
   &  \tilde{T}_\Psi \coloneqq & T'_\Psi\times T_\Psi.
 \end{array}
\end{equation}
For a 3-cnf formula $\Phi = \Und_{j=1}^m\Psi_j$ all these time series will be
concatened to the two time series
\[
   S_\Phi  \coloneqq \ \bigotimes_{j=1}^m \left(
   \tilde{S}_{\Psi_j}
  \right) 
   \quad , \quad
   T_\Phi  \coloneqq\ \bigotimes_{j=1}^m \left(
   \tilde{T}_{\Psi_j}
  \right).
\]
Finally, to be able to handle translations, we concatenate the time series with their mirrored duplicates:
\[
   \bar S_\Phi \coloneqq \ S_\Phi\times -S_\Phi 
   \quad \ \  , \quad \ \ 
   \bar T_\Phi \coloneqq \ T_\Phi\times -T_\Phi.
\]
Our aim is to compute a number $c(\Phi)$ such that the following is
true: 
$d_1^C(\bar S_\Phi, \bar T_\Phi)=c(\Phi)$ iff $\Phi$ has a 1-in-3
model.
For obtaining this, we will proceed in several steps, the first of
which is to compute a number $c^\orth(\Phi)$ such that $\Phi$ has a
1-in-3 model iff $d_1^{\orth}(S_\Phi,T_\Phi)=c^\orth(\Phi)$, for
\begin{equation}\label{eq:d1orth}
  d_1^\orth(S,T) \ \deff \ \ 
  \min_{M\in\orth(k)} d_1(S,\ M\cdot T).
\end{equation}

The idea behind our choice of the time series $S_\Phi$ and
$T_\Phi$ is as follows: $S'_\Psi$ and $T'_\Psi$ force the orthogonal
matrix $M$ to have a suitable shape when leading to the minimal
distance, i.\,e. to have all the $e_i$'s as Eigenvectors with
Eigenvalues of $1$ or $-1$ --- in other words: each vector
$e_i$ will either be left untouched or will be negated.
The time series $S_\Psi$ represents the disjunctive 3-clause $\Psi$, while $T_\Psi$ holds the vector representing the variables used in $\Psi$.
The minimum of $\tilde{S}_\Psi = S'_\Psi\times S_\Psi$ to
$\tilde{T}_\Psi= T'_\Psi\times T_\Psi$ will then be reached if the
vector $\sum_{i\in I}e_i$ is rotated in such a way that it matches one of the vectors of $S_\Psi$.
Hence, assigning a propositional variable $V_i$ the value 0
corresponds to negating the $i$-th dimension, and assigning $V_i$ the
value 1 leaves that dimension untouched.

\paragraph{Relating $d_1^\orth(S_\Phi,T_\Phi)$ with  1-in-3 models of
  $\Phi$}

The next observation will be helpful for our proofs.

\begin{lemma}\label{lem:Thales}
 Let $M\in\orth(k)$, \ $i\in[1,k]$, \ $a_i\deff
 \dE_2(e_i,Me_i)$, and $b_i\deff\dE_2(-e_i,Me_i)$.
 Then, 
 \[ \textstyle
   b_i= \sqrt{4-a_i^2}
   \qquad\text{and}\qquad
   a_i=\sqrt{4-b_i^2}.
  \]
\end{lemma}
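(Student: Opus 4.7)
The plan is to exploit the fact that $e_i$, $-e_i$, and $Me_i$ all lie on the unit sphere in $\RR^k$, so that $e_i$ and $-e_i$ are antipodal and $Me_i$ sits on the sphere with them as a diameter; Thales' theorem (explaining the name of the lemma) then gives a right angle at $Me_i$, hence $a_i^2+b_i^2 = 4$.

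Concretely, I would first recall from the preliminaries that orthogonal matrices preserve the Euclidean norm, so $\Norm{Me_i}_2 = \Norm{e_i}_2 = 1$. Then I would expand the squared distances via the scalar product:
\begin{align*}
 a_i^2 &= \Norm{e_i - Me_i}_2^2 = \SP{e_i,e_i} - 2\SP{e_i,Me_i} + \SP{Me_i,Me_i} \\
       &= 2 - 2\SP{e_i,Me_i}, \\
 b_i^2 &= \Norm{-e_i - Me_i}_2^2 = \SP{e_i,e_i} + 2\SP{e_i,Me_i} + \SP{Me_i,Me_i} \\
       &= 2 + 2\SP{e_i,Me_i}.
\end{align*}
Adding these two identities eliminates the cross term and yields $a_i^2 + b_i^2 = 4$.

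Finally, since $a_i, b_i \geq 0$ by definition of $\dE_2$, taking the non-negative square roots gives $b_i = \sqrt{4 - a_i^2}$ and $a_i = \sqrt{4 - b_i^2}$, as claimed. There is essentially no obstacle here; the only thing to be careful about is to use that $M$ preserves norms (which in turn follows immediately from the orthogonality identities $\SP{m_i,m_j}=\delta_{ij}$ recalled in Section~\ref{sec:preliminaries}), and to take the positive root when inverting the squares.
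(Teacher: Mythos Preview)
Your proof is correct and follows essentially the same approach as the paper: both expand $a_i^2$ and $b_i^2$ via the scalar product to obtain $2\mp 2\SP{e_i,Me_i}$, add them to get $a_i^2+b_i^2=4$, and then solve. Your version is arguably a touch more careful in explicitly invoking $a_i,b_i\geq 0$ to justify the positive square root.
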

\begin{proof}
 We let $c_i\deff\dE_2(e_i,-e_i)$.
 For the special case where $k=2$, Thales' Theorem tells us that
 $a_i^2+b_i^2=c_i^2$. The same holds true for arbitrary $k$, as the following
 computation shows.

 Clearly, $c_i=\dE_2(e_i,-e_i)=\Norm{2e_i}_2=2$, and thus $c_i^2=4$.
Furthermore, $a_i^2= \dE_2(e_i,Me_i)^2=\Norm{e_i-Me_i}_2^2 =\\
\SP{e_i-Me_i,e_i-Me_i} = \SP{e_i,e_i} + \SP{Me_i,Me_i}
-2\SP{e_i,Me_i} = 2 - 2\SP{e_i,Me_i}$.
And $b_i^2= \dE_2(-e_i,Me_i)^2 = \Norm{-e_i-Me_i}_2^2 =
\Norm{e_i+Me_i}_2^2=
\SP{e_i+Me_i,e_i+Me_i} = \SP{e_i,e_i} + \SP{Me_i,Me_i} +
2\SP{e_i,Me_i} = 2 + 2\SP{e_i,Me_i}$. 
Thus,
\[
  a_i^2 + b_i^2 
  \ = \  
  2- 2\SP{e_i,Me_i} + 2 + 2\SP{e_i,Me_i} 
  \ = \ 
  4
  \ = \ 
  c_i^2.
\]
Thus,
\ $b_i= \sqrt{4-a_i^2}$ \ and \ $a_i=\sqrt{4-b_i^2}$.
\end{proof}

From now on, whenever given a matrix $M\in\orth(k)$, we will always use the following notation:
\ $a_i\deff \dE_2(e_i,Me_i)$, \ and \ $b_i\deff \dE_2(-e_i,Me_i)$.
From Lemma~\ref{lem:Thales} we know that $b_i=\sqrt{4-a_i^2}$ and $a_i=\sqrt{4-b_i^2}$.
\medskip

For a disjunctive 3-clause $\psi$ and a matrix $M\in\orth(k)$ we let
\[
 \begin{array}{rrl}
  d_\psi(M) & \deff & d_1(\tilde{S}_\Psi, \ M\cdot\tilde{T}_\Psi) 
  \smallskip\\
  & = & d_1(S'_\Psi,\ M\cdot T'_\Psi) \ + \ d_1(S_\Psi,\ M\cdot T_\Psi).
 \end{array}
\]

In the next lemmas, we will gather information on the size of $d_\psi(M)$ (cf. the appendix for proof of Lemma~\ref{lem:d'} and Lemma~\ref{lem:Neu}).

\begin{lemma}\label{lem:d'}
    Let $\Psi=\bigvee_{i\in I} L_i$ be a disjunctive 3-clause, let
    $M\in\orth(k)$.
    Then,
    \begin{align}
        \label{eq:lem:d'}
        d_1(S'_\Psi,\ M\cdot T'_\Psi) \ = \ \ 6\cdot\sum_{i\in I}
        \big( a_i + b_i \big).
    \end{align}
\end{lemma}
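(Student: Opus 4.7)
The plan is to prove the identity by direct computation, exploiting the highly structured form of $S'_\Psi$ and $T'_\Psi$ together with the homogeneity of the Euclidean norm.

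First, I would observe that by the definitions in (\ref{eq:den:SPsi}), both $S'_\Psi$ and $T'_\Psi$ are time series of the same length $2|I|=6$ obtained by concatenating, for each $i\in I$, the two-state blocks $(6e_i,-6e_i)$ and $(6e_i,6e_i)$, respectively. Since they have equal length, no windowing is performed: by the definition of $\Norm{\dE}_1$ together with the fact that $M\cdot T'_\Psi$ is obtained state-wise by applying $M$ to each state of $T'_\Psi$, we have
\[
    d_1(S'_\Psi,\ M\cdot T'_\Psi) \ =\ \sum_{i\in I}\Big( \dE_2(6e_i,\,6Me_i) \ +\ \dE_2(-6e_i,\,6Me_i)\Big).
\]

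Next, I would simplify each summand using the absolute homogeneity of the $2$-norm: $\dE_2(6e_i,6Me_i)=\Norm{6e_i-6Me_i}_2=6\Norm{e_i-Me_i}_2=6a_i$, and similarly $\dE_2(-6e_i,6Me_i)=\Norm{-6e_i-6Me_i}_2=6\Norm{e_i+Me_i}_2$. The second quantity equals $6b_i$ because $b_i=\dE_2(-e_i,Me_i)=\Norm{-e_i-Me_i}_2=\Norm{e_i+Me_i}_2$. Substituting these into the sum above immediately yields
\[
    d_1(S'_\Psi,\ M\cdot T'_\Psi) \ =\ \sum_{i\in I}\big(6a_i+6b_i\big)\ =\ 6\cdot\sum_{i\in I}\big(a_i+b_i\big),
\]
which is exactly (\ref{eq:lem:d'}).

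There is no real obstacle here: the lemma is essentially a bookkeeping statement that isolates the contribution of the ``gadget'' blocks $(6e_i,-6e_i)$ versus $(6e_i,6e_i)$, and the only non-trivial move is recognising that scaling each vector by $6$ pulls a factor of $6$ out of every Euclidean distance. The real work will come later, when this clean expression is used (together with Lemma~\ref{lem:Thales}) to force the minimising orthogonal matrix $M$ to have the $e_i$'s as eigenvectors with eigenvalues $\pm 1$.
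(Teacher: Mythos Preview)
Your proof is correct and follows exactly the same direct computation as the paper: expand $d_1(S'_\Psi,M\cdot T'_\Psi)$ as the sum over $i\in I$ of $\dE_2(6e_i,6Me_i)+\dE_2(-6e_i,6Me_i)$, pull out the factor $6$ by homogeneity of $\Norm{\cdot}_2$, and identify the remaining distances as $a_i$ and $b_i$. The paper's version is simply terser, omitting the explicit justification for the scaling step.
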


\begin{lemma}\label{lem:d}
Let $\Psi=\bigvee_{i\in I} L_i$ be a disjunctive 3-clause.
\begin{enumerate}[(a)]
\item
For each $M\in\orth(k)$ we have
\begin{equation}\label{eq:lem:d_1}
  d_1(S_\Psi,M\cdot T_\Psi) \ \ \geq \ \
  4\sqrt{2} \ - \ 3\cdot\sum_{i\in I}\min(a_i,b_i).
\end{equation}
\item
Let $M$ be an element in $\orth(k)$ such that $Me_I=\gamma$, where $\gamma=\theta(\Gamma)$ for some
conjunctive 3-clause $\Gamma$ implicit in $\Psi$. Then
\ $d_1(S_\Psi,\ M\cdot T_\Psi) 
   =
   4\sqrt{2}$.
\item
Let $M$ be an element in $\orth(k)$ such that
$Me_i\in\set{e_i,-e_i}$ for all $i\in I$, and
$d_1(S_\Psi,M\cdot T_\Psi)=4\sqrt{2}$. 
Then $Me_I=\gamma$,
where $\gamma=\theta(\Gamma)$ for some conjunctive 3-clause $\Gamma$
implicit in $\Psi$.
\end{enumerate}
\end{lemma}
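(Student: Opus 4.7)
The plan is to first unfold $d_1(S_\Psi, M \cdot T_\Psi) = \sum_{j \in I} \|\gamma_j - M e_I\|_2$, using that $T_\Psi = (e_I, e_I, e_I)$ consists of three copies of $e_I$ and $S_\Psi$ lists the three vectors $\gamma_j$ for $j \in I$. Writing $l_i = \alpha_i e_i$ with $\alpha_i \in \{1, -1\}$, a direct coordinate-wise computation yields $\gamma_j - \gamma_{j'} = 2\alpha_j e_j - 2\alpha_{j'} e_{j'}$ for distinct $j, j' \in I$, so $\|\gamma_j - \gamma_{j'}\|_2 = 2\sqrt{2}$. This already proves (b): substituting $M e_I = \gamma = \gamma_{j_0}$ for some conjunctive 3-clause $\Gamma_{j_0}$ implicit in $\Psi$, the sum collapses to $0 + 2 \cdot 2\sqrt{2} = 4\sqrt{2}$.

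The technical heart is the following \emph{sign-pattern bound}, which I would establish next: for any $w = \sum_{i \in I} \epsilon_i e_i$ with $\epsilon_i \in \{1, -1\}$,
$$(\star)\qquad \sum_{j \in I} \|\gamma_j - w\|_2 \;\geq\; 4\sqrt{2},$$
with equality iff $w = \gamma_{i_0}$ for some $i_0 \in I$. The idea is to expand $\|\gamma_j - w\|_2^2$: the $j$-th coordinate contributes $(\alpha_j - \epsilon_j)^2$, while each other coordinate $i \in I \setminus \{j\}$ contributes $(\alpha_i + \epsilon_i)^2$, so each squared norm depends only on $n := |\{i \in I : \alpha_i = \epsilon_i\}|$. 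A finite case distinction $n \in \{0,1,2,3\}$ then yields the totals $6$, $4\sqrt{2}$, $4 + 2\sqrt{3}$, $6\sqrt{2}$ respectively; the minimum $4\sqrt{2}$ is attained only at $n = 1$, in which case $\epsilon_i = -\alpha_i$ for the two indices $i \neq i_0$ and $\epsilon_{i_0} = \alpha_{i_0}$, i.e., $w = \gamma_{i_0}$. Part (c) is then immediate: its hypothesis forces $M e_I$ to be a sign pattern $w$ as in $(\star)$, and attaining $4\sqrt{2}$ forces $M e_I = \gamma$ for an implicit conjunctive 3-clause $\Gamma$.

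For (a), I would reduce to $(\star)$ by a triangle-inequality approximation. Given arbitrary $M \in \orth(k)$, pick $s_i \in \{1, -1\}$ minimizing $\|M e_i - s_i e_i\|_2$ for each $i \in I$ (so that $\|M e_i - s_i e_i\|_2 = \min(a_i, b_i)$), and set $w := \sum_{i \in I} s_i e_i$. One application of the triangle inequality gives $\|M e_I - w\|_2 \leq \sum_{i \in I} \|M e_i - s_i e_i\|_2 = \sum_{i \in I} \min(a_i, b_i)$, and another (in reverse form) gives $\|\gamma_j - M e_I\|_2 \geq \|\gamma_j - w\|_2 - \|M e_I - w\|_2$ for each $j \in I$. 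Summing the latter over the three $j \in I$ and invoking $(\star)$ yields exactly the claimed lower bound $4\sqrt{2} - 3 \sum_{i \in I} \min(a_i, b_i)$.

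The main obstacle is $(\star)$ itself, which carries both the sharp numerical threshold $4\sqrt{2}$ used throughout and the combinatorial characterization of equality needed in (c). Once the pairwise-distance identity $\|\gamma_j - \gamma_{j'}\|_2 = 2\sqrt{2}$ and the four-case analysis underlying $(\star)$ are in place, the remainder --- part (b), the triangle-inequality reduction in (a), and the direct deduction of (c) --- is bookkeeping.
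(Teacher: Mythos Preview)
Your proposal is correct and follows essentially the same route as the paper: both unfold $d_1(S_\Psi,M\cdot T_\Psi)=\sum_{j\in I}\dE_2(\gamma_j,Me_I)$, approximate $Me_I$ by a sign-pattern vector $v=\sum_{i\in I}s_ie_i$ with $s_i$ chosen so that $\dE_2(Me_i,s_ie_i)=\min(a_i,b_i)$, reduce via the reverse triangle inequality, and then verify $\sum_{j\in I}\dE_2(\gamma_j,v)\geq 4\sqrt{2}$ by a finite case check (your four cases $n\in\{0,1,2,3\}$ match the paper's four cases $v\in\{\gamma_i,-\gamma_i,\sum l_i,-\sum l_i\}$ with the same values $4\sqrt{2},\,4+2\sqrt{3},\,6\sqrt{2},\,6$). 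Parts (b) and (c) are then read off from the equality case exactly as you do.
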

\begin{proof}
Let $\Gamma_j$, for $j\in I$, be the conjunctive 3-clauses implicit in
$\Psi$, and let $\gamma_j=\theta(\Gamma_j)$.

For proving \emph{(a)},
let $M$ be an arbitrary element in $\orth(k)$.
Note that by definition of $S_\Psi$ and $T_\Psi$ we have 
\begin{equation}\label{eq:lem:d_0}
  d_1(S_\Psi, \ M\cdot T_\Psi) \ \ = \ \
  \sum_{j\in I} \dE_2(\gamma_j,\ M e_I).
\end{equation}
By the triangle inequality and the symmetry
we know that $d(x,z)\geq d(x,y)-d(y,z)$ is true for all pseudo metric
spaces $(\MM,d)$ and all 
$x,y,z\in \MM$.
Thus, for any vector $v\in\RR^k$ and for any $j\in I$ we have
\[
   \dE_2(\gamma_j,Me_I) \ \ \geq \ \ 
   \dE_2(\gamma_j,v) \ - \ \dE_2(v,Me_I),
\]
and hence \  $d_1(S_\Psi, M\cdot T_\Psi) \geq $
\begin{equation}\label{eq:lem:d_3}
 \begin{array}{ll}
 & 
  \sum_{j\in I} \Big( \dE_2(\gamma_j,\ v) \ - \ \dE_2(v, Me_I)\,\Big)
 \smallskip\\
  = & 
  \Big(\sum_{j\in I}\dE_2(\gamma_j,v)\Big) \ \ - \ \ 3\cdot \dE_2(v,
  Me_I).
 \end{array}
\end{equation}
Let us choose $v\in\RR^k$ as follows: We let $v\deff \sum_{i\in I}s_i
e_i$ where $s_i\deff 1$ if $a_i\leq b_i$, and $s_i\deff -1$ otherwise.
Then, \ 
$\dE_2(v,Me_I) 
 = \Norm{v-Me_I}_2
 = \Norm{\sum_{i\in I} (s_i e_i - Me_i) }_2
 \leq$
\[
 \sum_{i\in I}\Norm{s_ie_i - Me_i}_2
 \ \ = \ \ \sum_{i\in I} \dE_2(s_ie_i, Me_i).
\]
Note that \ $\dE_2(s_ie_i,Me_i)$ is equal to $a_i$ if $s_i=1$, and it
is equal to $b_i$ if $s_i=-1$. Thus, due to our choice of $s_i$, we
know that $\dE_2(s_ie_i,Me_i) = \min(a_i,b_i)$, and hence
\begin{equation}\label{eq:lem:d_4}
  3 \cdot \dE_2(v,Me_I) \ \ \leq \ \ 3\cdot \sum_{i\in I}\min(a_i,b_i).
\end{equation}
Our next goal is to show that $\sum_{j\in I}\dE_2(\gamma_j,v)\geq
4\cdot\sqrt{2}$.
For simplicity let us consider w.l.o.g.\ the case where $I=\set{1,2,3}$.
For $i\in I$ let $l_i=\theta(L_i)$ (thus,
$l_i\in\set{e_i,-e_i}$). 
Then, w.l.o.g.\ we have
\[
\begin{array}{c}
 \gamma_1 = l_1 -l_2 -l_3, \qquad\quad
 \gamma_2 = -l_1 + l_2 - l_3,
\smallskip\\
 \gamma_3 = -l_1 - l_3 + l_3.
\end{array}
\]
For showing that \ $\sum_{j\in I}\dE_2(\gamma_j,v)\geq
4\cdot\sqrt{2}$, \ we make a case distinction according to $v$.
\smallskip

\emph{Case 1: $v=\gamma_i$ for some $i\in I$.} \ 
In this case, $\dE_2(\gamma_i,v) = 0$, and for each $j\in
I\setminus\set{i}$, it is straightforward to see that
$\dE_2(\gamma_j,v)=\sqrt{8}= 2\sqrt{2}$.
Thus, $\sum_{j\in I}\dE_2(\gamma_j,v)=4\sqrt{2} \ \approx \ 5.656$.
\smallskip

\emph{Case 2: $v=-\gamma_i$ for some $i\in I$.} \
In this case, $\dE_2(\gamma_i,v)=2\cdot\Norm{\gamma_i}_2 = 2\sqrt{3}$.
Furthermore, for each $j\in I\setminus\set{i}$, it is straightforward
to see that $\dE_2(\gamma_j,v)=\sqrt{4}=2$.
Thus, $\sum_{j\in I}\dE_2(\gamma_j,v)=4+2\sqrt{3} \ > \ 4\sqrt{2}$. 
\smallskip

\emph{Case 3: $v=l_1+l_2+l_3$.} \ 
Then, for each $j\in I$ we have
$\dE_2(\gamma_j,v)=\sqrt{4+4}=\sqrt{8}=2\sqrt{2}$.
Thus, $\sum_{j\in I}\dE_2(\gamma_j,v)=3\cdot 2\sqrt{2}= 6\sqrt{2} \ >
\ 4\sqrt{2}$.
\smallskip

\emph{Case 4: $v=-l_1-l_2-l_3$.} \
Then, for each $j\in I$ we have
$\dE_2(\gamma_j,v)=\sqrt{4}=2$. 
Thus, $\sum_{j\in I}\dE_2(\gamma_j,v)=3\cdot 2 = 6 \ > \ 4\sqrt{2}$.
\smallskip

Note that Cases~1--4 comprise all possible cases for $v$, and in all
these cases, $\sum_{j\in I}\dE_2(\gamma_j,v) \geq 4\sqrt{2}$.
Together with \eqref{eq:lem:d_3} and \eqref{eq:lem:d_4} we
obtain that equation \eqref{eq:lem:d_1} is correct. This completes the
proof of \emph{(a)}.
\medskip

For the proof of \emph{(b)}, let $M$ be an element in $\orth(k)$ such that
$Me_I=\gamma_i$, for some
$i\in I$.  From equation \eqref{eq:lem:d_0} we then obtain 
\[
  d_1(S_\Psi,M\cdot T_\Psi) 
  \ \ = \ \ \sum_{j\in I}\dE_2(\gamma_j,\gamma_i).
\]
According to Case~1 above, $\sum_{j\in
  I}\dE_2(\gamma_j,\gamma_i)=4\sqrt{2}$.
This completes the proof of \emph{(b)}.
\medskip

For the proof of \emph{(c)}, let 
$M$ be an element in $\orth(k)$ such that
\ $d_1(S_\Psi,M\cdot T_\Psi) = 4\sqrt{2}$ \ and 
$Me_i\in\set{e_i,-e_i}$ for all $i\in I$.
Thus, $Me_I$ is equal to a vector $v'=\sum_{i\in I}s'_ie_i$ where
$s'_i\in\set{1,-1}$ for each $i\in I$.
The above case distinction (for $v'$ rather than $v$) tells us that
\ $\sum_{j\in I}\dE_2(\gamma_j,v') = 4\sqrt{2}$ \ iff
\ $v'=\gamma_i$ for some $i\in I$.
This completes the proof of \emph{(c)}, since $\sum_{j\in
  I}\dE_2(\gamma_j,v') = d_1(S_\Psi,M\cdot T_\Psi)$.
\end{proof}

From the two previous lemmas, we easily obtain the following Lemma (cf. the appendix for a proof):

\begin{lemma}\label{lem:Neu}
Let $\Psi=\bigvee_{i\in I} L_i$ be a disjunctive 3-clause,
let $\Gamma$ be one of the conjunctive 3-clauses implicit in $\Psi$, and
let $\gamma\deff\theta(\Gamma)$.
For each $i\in I$ let $l_i\in\set{e_i,-e_i}$ be such that
$\gamma=\sum_{i\in I}l_i$.
Then, there exists an $M\in\orth(k)$ with $Me_i=l_i$ for each $i\in
I$. 

And for each $M\in\orth(k)$ satisfying $Me_i=l_i$ for all $i\in I$, we have
\begin{align*}
  d_\Psi(M) \quad = \quad 36 \ + \ 4\cdot\sqrt{2} \quad \approx \quad 41.656.
\end{align*}
\end{lemma}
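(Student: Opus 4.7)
The plan is to split the statement into two independent parts: first the existence of a suitable orthogonal matrix $M$, and then the computation of $d_\Psi(M)$ for every such $M$.

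For existence, I would simply write down $M$ explicitly. Take the diagonal matrix whose $i$-th diagonal entry is $s_i\in\set{+1,-1}$ chosen so that $s_ie_i=l_i$ (and $+1$ for indices outside $I$). This matrix is orthogonal because its columns form a signed permutation of the standard basis, and by construction $Me_i=l_i$ for each $i\in I$. So existence is essentially immediate.

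For the value of $d_\Psi(M)$ I would use the decomposition
\begin{equation*}
   d_\Psi(M) \;=\; d_1(S'_\Psi,\ M\cdot T'_\Psi) \;+\; d_1(S_\Psi,\ M\cdot T_\Psi)
\end{equation*}
that was recorded just before Lemma~\ref{lem:d'}, and then apply the two preceding lemmas summand by summand. For the first term, Lemma~\ref{lem:d'} gives $6\cdot\sum_{i\in I}(a_i+b_i)$; since $Me_i\in\set{e_i,-e_i}$ forces one of $a_i,b_i$ to vanish and the other to equal $\dE_2(e_i,-e_i)=2$, we have $a_i+b_i=2$ for every $i\in I$, and with $|I|=3$ this evaluates to $6\cdot 6=36$. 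For the second term, the key observation is that
\begin{equation*}
   Me_I \;=\; \sum_{i\in I} Me_i \;=\; \sum_{i\in I} l_i \;=\; \gamma,
\end{equation*}
so the hypothesis of Lemma~\ref{lem:d}(b) is met and we get $d_1(S_\Psi,\,M\cdot T_\Psi)=4\sqrt{2}$ directly. Adding both contributions gives $d_\Psi(M)=36+4\sqrt{2}$.

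There is essentially no real obstacle here: both halves are short invocations of Lemmas~\ref{lem:d'} and~\ref{lem:d}(b) once one notices the identity $Me_I=\gamma$. The only care needed is the routine case check showing $a_i+b_i=2$ whenever $Me_i\in\set{e_i,-e_i}$. Conceptually, this lemma is the upper-bound companion to Lemma~\ref{lem:d}(a): it exhibits explicit witnesses $M$ realizing the minimum $36+4\sqrt{2}$, which will be combined later with the matching lower bound to characterize the minimum of $d_\Psi(M)$ in terms of 1-in-3 models of $\Psi$.
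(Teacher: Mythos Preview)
Your proposal is correct and mirrors the paper's proof almost exactly: the paper constructs the same diagonal $\pm 1$ matrix for existence, then applies Lemma~\ref{lem:d'} together with $a_i+b_i=2$ to get the $36$, and Lemma~\ref{lem:d}(b) (via $Me_I=\gamma$) to get the $4\sqrt{2}$. If anything, your write-up is slightly more explicit in spelling out why Lemma~\ref{lem:d}(b) applies.
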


\begin{lemma}\label{lem:dPsi}
Let $\Psi=\bigvee_{i\in I} L_i$ be a disjunctive 3-clause.
Then,
\[
  \min_{M\in\orth(k)} d_\Psi(M) \ \  = \ \ 36 \ + \ 4\cdot\sqrt{2}.
\]
Furthermore, every $M\in\orth(k)$ with $d_\Psi(M)=36+4\sqrt{2}$ has the
following properties:
\begin{enumerate}[($**$)]
 \item[($*$)]
   $Me_i \ \in \ \set{e_i,-e_i}$, \ \ for every $i\in I$.
 \item[($**$)]
   $Me_I = \gamma$, \ 
   where $\gamma=\theta(\Gamma)$ for a 
   conjunctive 3-clause $\Gamma$ implicit in $\Psi$.
\end{enumerate}
\end{lemma}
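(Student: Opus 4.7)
The plan is to combine the two bounds already proved for the two summands of $d_\Psi(M)=d_1(S'_\Psi,M\cdot T'_\Psi)+d_1(S_\Psi,M\cdot T_\Psi)$ into a single pointwise lower bound, minimise that bound using only the Thales identity $a_i^2+b_i^2=4$, and then read off the equality conditions to obtain ($*$) and ($**$).

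Plugging the equality from Lemma~\ref{lem:d'} and the inequality from Lemma~\ref{lem:d}(a) into the definition of $d_\Psi(M)$ yields
\[
  d_\Psi(M) \ \geq \ 4\sqrt{2} \ + \ \sum_{i\in I}\Big(6(a_i+b_i) \,-\, 3\min(a_i,b_i)\Big).
\]
Setting $m_i\deff\min(a_i,b_i)$ and $M_i\deff\max(a_i,b_i)$, each summand equals $3m_i+6M_i$, so the task reduces to showing $m_i+2M_i\geq 4$ for every $i\in I$, with equality iff $(m_i,M_i)=(0,2)$. By Lemma~\ref{lem:Thales} the pair $(m_i,M_i)$ lies on the quarter-arc $m_i^2+M_i^2=4$ with $0\leq m_i\leq M_i$, so $m_i\in[0,\sqrt{2}]$ and $M_i=\sqrt{4-m_i^2}$. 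I would verify the inequality by a brief analysis of the one-variable function $m_i\mapsto m_i+2\sqrt{4-m_i^2}$: it takes the value $4$ at $m_i=0$, attains its interior maximum $2\sqrt{5}$ at $m_i=2/\sqrt{5}$, and decreases to $3\sqrt{2}$ at $m_i=\sqrt{2}$. Hence the minimum on the interval is $4$, realised uniquely at $m_i=0$, and summing over $I$ gives $d_\Psi(M)\geq 36+4\sqrt{2}$.

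For the equality case, if $d_\Psi(M)=36+4\sqrt{2}$ the chain above forces $\min(a_i,b_i)=0$ for every $i\in I$ (hence $Me_i\in\set{e_i,-e_i}$, which is ($*$)) and also $d_1(S_\Psi,M\cdot T_\Psi)=4\sqrt{2}$; combined with ($*$), Lemma~\ref{lem:d}(c) then yields $Me_I=\gamma$ for some conjunctive 3-clause $\Gamma$ implicit in $\Psi$, which is ($**$). Conversely, Lemma~\ref{lem:Neu} constructs a matrix $M\in\orth(k)$ satisfying $d_\Psi(M)=36+4\sqrt{2}$, so the bound is actually attained and the minimum equals $36+4\sqrt{2}$.

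The only non-mechanical step is the little optimisation: one must notice that $m_i+2M_i$ on the quarter-arc is not monotone and that its unique interior critical point is the \emph{maximum} $2\sqrt{5}$ rather than the minimum, so the bound has to be obtained by evaluating at the boundary $m_i=0$; the rest of the argument is a straightforward assembly of Lemma~\ref{lem:d'}, Lemma~\ref{lem:d}, Lemma~\ref{lem:Thales} and Lemma~\ref{lem:Neu}.
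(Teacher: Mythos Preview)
Your proposal is correct and follows essentially the same route as the paper: combine Lemma~\ref{lem:d'} and Lemma~\ref{lem:d}(a) into a pointwise lower bound, reduce to the one-variable optimisation of $x+2\sqrt{4-x^2}$ on $[0,\sqrt{2}]$ via Lemma~\ref{lem:Thales}, and then read off ($*$) from $\min(a_i,b_i)=0$ and ($**$) from Lemma~\ref{lem:d}(c) after deducing $d_1(S_\Psi,M\cdot T_\Psi)=4\sqrt{2}$, with attainment supplied by Lemma~\ref{lem:Neu}. Your treatment of the auxiliary function is in fact slightly more explicit than the paper's (you locate the interior maximum at $2\sqrt{5}$ and the endpoint value $3\sqrt{2}$, whereas the paper simply asserts $f(x)>4$ for $0<x\le\sqrt{2}$), but the overall structure and ingredients are the same.
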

\begin{proof}
We first show that $d_\Psi(M)\geq 36+4\sqrt{2}$ is true for all $M\in \orth(k)$.
To this end, let $M$ be an arbitrary matrix in $\orth(k)$.
By the Lemmas~\ref{lem:d'} and \ref{lem:d} we know that
\ $d_\Psi(M) \ = \ d_1(S'_\Psi,M\cdot T'_\Psi) + d_1(S_\Psi,M\cdot
T_\Psi) \ \geq$
\begin{equation*}
 \begin{array}{ll}
 & \displaystyle
  6\cdot \sum_{i\in I} (a_i+b_i) \ \ + \ \ 4\sqrt{2} \ \ - \ \ 3\cdot
  \sum_{i\in I}\min(a_i,b_i)
 \smallskip\\
 = & \displaystyle
 4\sqrt{2} \ \ + \ \ \sum_{i\in I} \; \big(\;
   6 a_i \ + \ 6 b_i \ - \ 3\min(a_i,b_i)
 \; \big)
 \smallskip\\
 = & \displaystyle
 4\sqrt{2} \ \ + \ \ 3\cdot \sum_{i\in I} \; \underbrace{\big(\;
   2a_i \ + \ 2b_i \ - \ \min(a_i,b_i) 
 \;\big)}_{=:\ N_i}
 \end{array}
\end{equation*}
What is the smallest value possible for $N_i$?
Recall that $a_i=\sqrt{4-b_i^2}$ and $b_i=\sqrt{4-a_i^2}$.
Thus, in case that $a_i\leq b_i$ we have
\[
  N_i \ \ = \ \ a_i \ + \ 2\cdot\sqrt{4-a_i^2},
\]
and in case that $a_i>b_i$ we have
\[
  N_i \ \ = \ \ b_i \ + \ 2\cdot\sqrt{4-b_i^2}.
\]
Furthermore, if $a_i\leq\sqrt{2}$, then $b_i\geq \sqrt{2}$, and hence
$a_i\leq b_i$. If $a_i>\sqrt{2}$, then $b_i<\sqrt{2}$, and hence
$b_i<a_i$.
Therefore, $N_i$ is the minium value of the function 
\[
   f:\setc{x\in \RR}{0\leq x\leq \sqrt{2}} \longrightarrow \RR
\]
defined via
\[
   f(x) \ \deff \ \ x \ + \ 2\cdot \sqrt{4-x^2}.
\]
It is not difficult to verify that $f(0)=4$, and $f(x)>4$ for all $x\in\RR$ with $0 < x\leq \sqrt{2}$.

Thus, $N_i\geq 4$ is true for every $i\in I$. 
This leads to
\[
  d_\Psi(M) 
  \ \ \geq \ \
  4\sqrt{2} \ + \ 3\cdot\sum_{i\in I} 4 
  \ \ = \ \ 
  4\sqrt{2} \ + \ 3\cdot 3\cdot 4
  \ \ = \ \ 
  4\sqrt{2} \ + \ 36.
\]
Combining this with Lemma~\ref{lem:Neu} we obtain that
\[
  \min_{M\in\orth(k)} d_\Psi(M) \ \ = \ \ 36 \ + \ 4\sqrt{2}.
\]

Now let us consider an
arbitrary $M\in\orth(k)$ for which
$d_\Psi(M)=36+4\cdot\sqrt{2}$.
From the computations above we know that for each $i\in I$ it must be
true that $N_i=4$ and hence $a_i=0$ or $b_i=0$.
Since $a_i=\dE_2(e_i,Me_i)$ and $b_i=\dE_2(-e_i,Me_i)$, this implies
that $Me_i \in \set{e_i,-e_i}$.
Hence, the lemma's statement \emph{($*$)} holds.

Furthermore, since $a_i=\sqrt{4-b_i^2}$, we know that $\set{a_i,b_i}=\set{0,2}$.
Hence, Lemma~\ref{lem:d'} implies that
$d_1(S'_\Psi,M\cdot T'_\Psi)= 6\cdot 3\cdot 2= 36$.
By definition, we have 
\[
  d_\Psi(M) \ \ = \ \ d_1(S'_\Psi,M\cdot T'_\Psi) + d_1(S_\Psi,M\cdot
  T_\Psi);
\]
and by assumption we have \ $d_\Psi(M) = 36 + 4\sqrt{2}$.
Thus, \ $d_1(S_\Psi,M\cdot T_\Psi)=4\sqrt{2}$.
From Lemma~\ref{lem:d}\emph{(c)} we therefore obtain that
the lemma's statement \emph{($**$)} is correct.
\end{proof}

The previous lemma tells us, in particular, that each $M\in\orth(k)$, for which 
$d_\Psi(M)$ is minimal, belongs to the set
    \begin{align*}
        \BoolMatrices \ \coloneqq \ \left\{\ M\in\orth(k) \ : \
          m_i \in \set{e_i,-e_i} \ \right\}
    \end{align*}
  where $m_i$ denotes the vector in the $i$-th column of $M$.
  Henceforth, the elements in $\BoolMatrices$ will be called \emph{boolean matrices}.

  For each boolean matrix $M\in\BoolMatrices$ we let $A(M)$ be the assignment $\alpha$
  with $\alpha(V_i)=1$ if \;$m_{i,i}=1$, \;and $\alpha(V_i)=0$ if \;$m_{i,i}=-1$.
  Obviously, $A$ 
    is a bijection between $\BoolMatrices$ and
    the set of all assignments to the propositional variables
    $V_1,\dots,V_k$.

\begin{lemma}
    \label{lem:dphi}
    Let $\Phi=\Oder_{j=1}^m \Psi_m$  be a 3-cnf formula with $m$ disjunctive clauses.
    Then,
    $\Phi$ has a 1-in-3 model iff
    \[
       \min_{M\in\orth(k)}  d_1(S_\Phi,\ M\cdot T_\Phi) 
       \ \ = \ \ 
       m\cdot( 36+4\sqrt 2).
    \]
\end{lemma}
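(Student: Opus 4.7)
The plan is to reduce the claim to $m$ parallel applications of Lemma~\ref{lem:dPsi}, one per disjunctive clause. The starting observation is the decomposition
\[
  d_1(S_\Phi,\ M\cdot T_\Phi) \ = \ \sum_{j=1}^m d_{\Psi_j}(M)
\]
valid for every $M\in\orth(k)$; this is immediate from $S_\Phi=\bigotimes_{j} \tilde S_{\Psi_j}$, $T_\Phi=\bigotimes_{j}\tilde T_{\Psi_j}$, and the fact that $d_1=\Norm{\dE_2}_1$ sums its contributions position by position. Combined with the clausewise bound $d_{\Psi_j}(M)\geq 36+4\sqrt{2}$ from Lemma~\ref{lem:dPsi}, this already delivers the lower bound $\min_M d_1(S_\Phi,M\cdot T_\Phi)\geq m\cdot(36+4\sqrt{2})$, so the lemma reduces to deciding when equality is attained.

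For the ``$\Rightarrow$'' direction I would take a 1-in-3 model $\alpha$ of $\Phi$ and define the diagonal boolean matrix $M\in\BoolMatrices$ by $Me_i=e_i$ if $\alpha(V_i)=1$ and $Me_i=-e_i$ otherwise. For each clause $\Psi_j=\Oder_{i\in I_j}L_i$, let $L_{j^*}$ be the unique literal satisfied by $\alpha$. A short sign-bookkeeping check (using that $\alpha$ satisfies $L_{j^*}$ but fails all other $L_i$, $i\in I_j$) shows $Me_{j^*}=\theta(L_{j^*})$ and $Me_i=\theta(\overline{L_i})$ for $i\in I_j\setminus\{j^*\}$, so $Me_{I_j}=\gamma_{j^*}$, the embedding of the conjunctive 3-clause $\Gamma_{j^*}$ implicit in $\Psi_j$. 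Lemma~\ref{lem:Neu} then yields $d_{\Psi_j}(M)=36+4\sqrt{2}$ for every $j$, and summing gives exactly $m\cdot(36+4\sqrt{2})$, matching the lower bound.

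For the ``$\Leftarrow$'' direction, suppose $M^*\in\orth(k)$ attains the value $m(36+4\sqrt{2})$. Since each summand $d_{\Psi_j}(M^*)$ is already at least $36+4\sqrt{2}$ and the sum is tight, every summand equals $36+4\sqrt{2}$. Conclusion ($*$) of Lemma~\ref{lem:dPsi} then forces $M^*e_i\in\{e_i,-e_i\}$ for every $i$ occurring in $\Phi$, so $M^*$ is boolean on these coordinates (extend arbitrarily elsewhere to obtain an element of $\BoolMatrices$). Conclusion ($**$) gives $M^*e_{I_j}=\gamma$ for some implicit conjunctive 3-clause $\Gamma$ of $\Psi_j$; reading the signs of $\gamma$ against the literals of $\Psi_j$, this says exactly that $\alpha\deff A(M^*)$ satisfies exactly one literal of $\Psi_j$, so $\alpha$ is a 1-in-3 model of $\Phi$.

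The only subtlety is consistency across clauses: a variable $V_i$ can occur in several clauses, so one might worry about conflicting demands on $M^*e_i$ coming from different $j$. However, Lemma~\ref{lem:dPsi}($*$) is a coordinate-wise condition that is automatically satisfied once a single $M^*$ is fixed; dually, in the forward direction the global assignment $\alpha$ yields one global diagonal matrix that triggers Lemma~\ref{lem:Neu} for all clauses simultaneously. I therefore expect no real obstacle beyond careful sign bookkeeping between the literals $L_i$, their negations $\overline{L_i}$, the embeddings $\theta(\cdot)$, and the bijection $A$.
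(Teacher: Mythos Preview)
Your proposal is correct and follows essentially the same route as the paper's proof: the same additive decomposition $d_1(S_\Phi,M\cdot T_\Phi)=\sum_j d_{\Psi_j}(M)$, the clausewise lower bound from Lemma~\ref{lem:dPsi}, the construction of a boolean matrix from a 1-in-3 model via Lemma~\ref{lem:Neu}, and the extraction of a 1-in-3 model from an optimal $M^*$ via properties~($*$) and~($**$). The only cosmetic remark is that the ``extend arbitrarily elsewhere'' step is unnecessary, since by construction $k$ is the number of variables of $\Phi$, so every coordinate $i\in[1,k]$ already lies in some $I_j$ and Lemma~\ref{lem:dPsi}($*$) forces $M^*\in\BoolMatrices$ outright.
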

\begin{proof}
According to our definition of $S_\Phi$ and $T_\Phi$, the following is
true for every $M\in\orth(k)$: \ 
$d_1(S_\Phi,M\cdot T_\Phi) \ = \ \sum_{j=1}^m
d_1(\tilde{S}_{\Psi_j},M\cdot \tilde{T}_{\Psi_j})$.
Furthermore, by Lemma~\ref{lem:dPsi} we know for each $j\in[1,m]$ that 
\ $
   d_1(\tilde{S}_{\Psi_j},\ M\cdot \tilde{T}_{\Psi_j}) 
    \geq 
   36 \ + \ 4\sqrt{2}.
$ \ 
Thus, 
\[
   \min_{M\in\orth(k)}  d_1(S_\Phi,\ M\cdot T_\Phi) 
   \ \ \geq \ \ 
   m\cdot( 36+4\sqrt 2).
\]
To prove the lemma's ``if direction'', assume that there is an
$M\in\orth(k)$ such that 
\ $d_1(S_\Phi,\ M\cdot T_\Phi)  =
   m\cdot( 36+4\sqrt 2).
$

Then, for each $j\in[1,m]$ we have 
\[
    d_{\Psi_j}(M) \ \ = \ \ 
   d_1(\tilde{S}_{\Psi_j},\ M\cdot \tilde{T}_{\Psi_j}) 
    \ \ = \ \ 
   36 \ + \ 4\sqrt{2}.
\]
Thus, according to Lemma~\ref{lem:dPsi}, $M$ has the properties ($*$)
and ($**$). 
In particular, $M$ is a \emph{boolean matrix} in $\BoolMatrices$.
Let $\alpha:=A(M)$ be the variable assignment associated with $M$.
In the following, we show that $\alpha$ is a 1-in-3 model of $\Psi_j$, for each
$j\in[1,m]$. 

Fix an arbitrary $j\in[1,m]$ and let $\Psi:=\Psi_j$.
Let $I=\set{i_1,i_3,i_3}\subseteq [1,k]$ such that $\Psi= (L_{i_1} \oder L_{i_2} \oder L_{i_3})$, where $L_{i}$ is a literal over the variable $V_{i}$, for each $i\in I$.

From ($*$) and ($**$) we know that $Me_i\in\set{e_i,-e_i}$ for every $i\in I$, and $Me_{I}=\gamma$, where $\gamma=\theta(\Gamma)$ for a conjunctive 3-clause $\Gamma$ implicit in $\Psi$.
W.l.o.g., $\Gamma= (L_{i_1} \und \nicht L_{i_2} \und \nicht L_{i_3})$.
Thus, $\gamma=l_{i_1}-l_{i_2}-l_{i_3}$, where $l_i=\theta(L_i)$ for each $i\in I$.

For each $i\in I$ let $m_{i}$ be the vector in the $i$-th column of
$M$. Then, the following is true:
\[
  m_{i_1} + m_{i_2} + m_{i_3}
  \ \ = \ \ 
  Me_I
  \ \ = \ \ 
  \gamma
  \ \ = \ \ 
  l_{i_1} - l_{i_2} - l_{i_3}.
\]
Hence, \ $m_{i_1}=l_{i_1}$, \ $m_{i_2}=-l_{i_2}$, \ and \
$m_{i_3}=-l_{i_3}$.
Therefore, the associated variable assignment $\alpha:=A(M)$ satisfies
the literal $L_{i_1}$, but not the literals $L_{i_2}, L_{i_3}$.
Hence, $\alpha$ is a 1-in-3 model of $\Psi$.

In summary, we have shown that $\alpha$ is a 1-in-3 model of $\Psi_j$,
for each $j\in[1,m]$. Therefore, $\alpha$ also is a 1-in-3 model of
$\Phi$. This completes the proof of the ``if direction''.
\medskip

For the proof of the ``only-if direction'', let us consider the case
where $\Phi$ has a 1-in-3 model. I.\,e., there exists a variable
assignment $\alpha$ which, for each $j\in[1,m]$, satisfies exacly one
literal in the disjunctive clause $\Psi_j$.
Let $M$ be the boolean matrix with $A(M)=\alpha$.
It suffices to prove that 
\[
   d_{\Psi_j}(M) \ \ = \ \ 36 + 4\sqrt{2}
\]
is true for every $j\in[1,m]$.
To this end, fix an arbitrary $j\in[1,m]$ and let $\Psi\deff\Psi_j$.
Let $I=\set{i_1,i_2,i_3}\subseteq [1,k]$, and for each $i\in I$ let $L_i$ be a literal over $V_i$, such
that $\Psi=(L_{i_1}\oder L_{i_2}\oder L_{i_3})$.

Since $\alpha$ is a 1-in-3 model of $\Psi$, it satisfies a conjunctive
3-clause $\Gamma$ that is implicit in $\Psi$. 
W.l.o.g., 
\[
  \Gamma\ \ = \ \ (L_{i_1}\und\nicht L_{i_2}\und\nicht L_{i_3}).
\]
Let $\gamma:=\theta(\Gamma)$, and
for each $i\in I$ let $l_i\in\set{e_i,-e_i}$ be such that
$\gamma=\sum_{i\in I}l_i$.

Since $\alpha$ satisfies $\Gamma$, and since $M=A^{-1}(\alpha)$, it is straightforward to verify along the definition of the mappings $A$ and $\theta$ that $Me_i=l_i$ is true for each $i\in I$.
From Lemma~\ref{lem:Neu} we therefore obtain that $d_\Psi(M)=36+4\sqrt{2}$.
\end{proof}

Note that Lemma~\ref{lem:dphi} establishes the goal formulated
directly before equation~\eqref{eq:d1orth}:
When choosing 
\[
c^\orth(\Phi) \ \ \deff \ \ m\cdot (36+4\sqrt{2})
\]
whenever $\Phi$ is a 3-cnf formula consisting of $m$ disjunctive 3-clauses,
Lemma~\ref{lem:dphi} tells us that $\Phi$ has a 1-in-3 model if, and
only if, $d_1^\orth(S_\Phi,T_\Phi)=c^\orth(\Phi)$.

\paragraph{Relating $d_1^C(\bar{S}_\Phi,\bar{T}_\Phi)$ with 1-in-3 models of $\Phi$}
Until now, we only considered transformations using orthogonal matrices.
However, the congruence distance $d_1^C$ allows distance minimization
also by translating with an arbitrary vector.
The following lemma considers these transformations of time series too (cf. the appendix for a proof).

\begin{lemma}
    \label{lem:notranslation}
    Let $S, T$ be two time series of the same length over $\mathbb R^k$, 
    let $\bar S:=S\times -S$ and $\bar T:=T\times -T$.
    The following is true for every $M\in\orth(k)$ and every $v\in\RR^k$:
    \begin{align*}
        d_1(\bar S,\ M\cdot\bar T) \ \ \leq  \ \ d_1(\bar S,\ M\cdot\bar T + v)
    \end{align*}
\end{lemma}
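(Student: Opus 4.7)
The plan is to reduce the claim to a per-coordinate triangle inequality applied statewise. First I would unfold the definitions: write $\bar S=(s_0,\dots,s_{n-1},-s_0,\dots,-s_{n-1})$ and similarly $\bar T=(t_0,\dots,t_{n-1},-t_0,\dots,-t_{n-1})$, so that
\[
 M\cdot\bar T+v \;=\; (Mt_0+v,\dots,Mt_{n-1}+v,\,-Mt_0+v,\dots,-Mt_{n-1}+v).
\]
Let $u_i\deff s_i-Mt_i$ for $i\in[0,n)$. Using $\Norm{-x}_2=\Norm{x}_2$, the $i$-th state of the second half contributes $\Norm{-s_i+Mt_i-v}_2=\Norm{u_i+v}_2$, while the $i$-th state of the first half contributes $\Norm{u_i-v}_2$. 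Hence
\[
  d_1(\bar S,\,M\cdot\bar T+v) \;=\; \sum_{i=0}^{n-1}\bigl(\Norm{u_i-v}_2+\Norm{u_i+v}_2\bigr),
\]
and the special case $v=0$ gives $d_1(\bar S,\,M\cdot\bar T)=2\sum_{i=0}^{n-1}\Norm{u_i}_2$.

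The second step is the one inequality that does all the work: for any $u,v\in\RR^k$,
\[
  2\Norm{u}_2 \;=\; \Norm{(u-v)+(u+v)}_2 \;\leq\; \Norm{u-v}_2+\Norm{u+v}_2,
\]
which is just the triangle inequality for the Euclidean norm. Summing this over $i=0,\dots,n-1$ immediately yields
\[
  d_1(\bar S,\,M\cdot\bar T) \;\leq\; d_1(\bar S,\,M\cdot\bar T+v),
\]
which is the claim.

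There is essentially no obstacle beyond keeping the signs straight: the reason the statement works is precisely the symmetrization $\bar S=S\times -S$, $\bar T=T\times -T$, because it forces the translation $v$ to appear once with each sign at matching states, and the triangle inequality then forbids any translation from reducing the sum. So the proof is short: unfold, collect the two halves into $\Norm{u_i-v}_2+\Norm{u_i+v}_2$, and apply the triangle inequality statewise.
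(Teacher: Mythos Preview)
Your proof is correct and follows the same overall reduction as the paper: both unfold the definitions to obtain
\[
  d_1(\bar S,\,M\cdot\bar T+v)=\sum_{i=0}^{n-1}\bigl(\Norm{u_i-v}_2+\Norm{u_i+v}_2\bigr),\qquad
  d_1(\bar S,\,M\cdot\bar T)=\sum_{i=0}^{n-1}2\Norm{u_i}_2,
\]
and then reduce the lemma to the pointwise inequality $2\Norm{u}_2\leq\Norm{u-v}_2+\Norm{u+v}_2$.

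The difference lies in how that last inequality is established. You observe directly that $2u=(u-v)+(u+v)$ and apply the triangle inequality for $\Norm{\cdot}_2$ --- a one-line argument that in fact works for any norm. The paper instead proves a scalar claim $|a+b|+|a-b|\geq 2|a|$ by case analysis, then rotates $u$ onto the first coordinate axis via some $M'\in\orth(k)$ and bounds $\Norm{u\pm v}_2$ below by the first coordinate. Your route is shorter and avoids both the auxiliary claim and the rotation step; the paper's detour buys nothing extra here.
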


As a consequence of Lemma~\ref{lem:dphi},
Lemma~\ref{lem:notranslation}, and the definition of
$d_1^C$, we immediately obtain the following (cf. the appendix for a proof).
\begin{theorem}\label{cor:dcongruence}
  Let $\Phi=\Oder_{j=1}^m\Psi_j$ be a 3-cnf formula with $m$
  disjunctive clauses. Then, $\Phi$ has a 1-in-3 model \ iff
  \[
     d_1^C (\bar S_\Phi, \bar T_\Phi) \ \ = \ \ m\cdot( 72+ 8\sqrt 2).
  \]
\end{theorem}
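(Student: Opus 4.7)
The plan is to reduce Theorem~\ref{cor:dcongruence} to Lemma~\ref{lem:dphi} by two short observations that account for (i) the extra translation degree of freedom built into $d_1^C$, and (ii) the effect of concatenating with the mirrored copies in $\bar S_\Phi,\bar T_\Phi$.

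First I would use Lemma~\ref{lem:notranslation} applied to $S=S_\Phi$ and $T=T_\Phi$: for every $M\in\orth(k)$ and every $v\in\RR^k$,
\[
   d_1(\bar S_\Phi,\,M\cdot\bar T_\Phi)\;\leq\;d_1(\bar S_\Phi,\,M\cdot\bar T_\Phi+v),
\]
so in the defining expression of $d_1^C(\bar S_\Phi,\bar T_\Phi)$ the inner minimum over $v$ is attained at $v=0$. Hence
\[
   d_1^C(\bar S_\Phi,\bar T_\Phi)\;=\;\min_{M\in\orth(k)}d_1(\bar S_\Phi,\,M\cdot\bar T_\Phi).
\]
Next I would unfold the concatenations. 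Since $\bar S_\Phi=S_\Phi\times -S_\Phi$ and $\bar T_\Phi=T_\Phi\times -T_\Phi$ and $M$ is linear, we have $M\cdot\bar T_\Phi=(M\cdot T_\Phi)\times -(M\cdot T_\Phi)$. The $d_1$-distance splits additively over the concatenation, and on the second half every per-state contribution is $\dE_2(-s_i,-M t_i)=\dE_2(s_i,M t_i)$. Therefore
\[
   d_1(\bar S_\Phi,\,M\cdot\bar T_\Phi)\;=\;2\cdot d_1(S_\Phi,\,M\cdot T_\Phi),
\]
and combining with the previous display yields
\[
   d_1^C(\bar S_\Phi,\bar T_\Phi)\;=\;2\cdot\!\min_{M\in\orth(k)}d_1(S_\Phi,\,M\cdot T_\Phi).
\]

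Finally, Lemma~\ref{lem:dphi} states that the right-hand minimum equals $m\cdot(36+4\sqrt{2})$ exactly when $\Phi$ has a 1-in-3 model. Doubling both sides gives $m\cdot(72+8\sqrt{2})$, which is precisely the equality claimed in the theorem, so both directions of the ``iff'' fall out at once. The genuine combinatorial work -- pinning down the shape of the optimal $M$ through the $S'_\Psi,T'_\Psi$-blocks and reading off a 1-in-3 assignment from the $S_\Psi,T_\Psi$-blocks -- has already been discharged in Lemmas~\ref{lem:dPsi} and~\ref{lem:dphi}; the only substantive new content here is that the mirrored copies kill the translation freedom (Step~1) and simply double the distance (Step~2), so I expect no serious obstacle beyond carefully writing down these two equalities.
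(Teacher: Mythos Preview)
Your proposal is correct and follows essentially the same route as the paper's own proof: apply Lemma~\ref{lem:notranslation} to eliminate the translation, observe that the mirrored concatenation doubles $d_1$, and then invoke Lemma~\ref{lem:dphi}. If anything, your write-up is slightly more explicit about why the doubling holds and why both directions of the ``iff'' follow simultaneously.
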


\paragraph{An algorithm solving \Problem{1-in-3-Sat}}

\begin{proof}[of Theorem~\ref{thm:reduction}] \ \\
Assume that $\mathcal{A}$ is an algorithm which, on input of two time
series $S$ and $T$ of equal length, computes $d_1^C(S,T)$.
Using this algorithm, the problem \Problem{3-in-1-Sat} can be solved
as follows.

Upon input of a 3-cnf formla $\Phi$,
construct the time series $\bar S_\Phi$ and $\bar T_\Phi$. Clearly,
this can be done in time polynomial in the size of $\Phi$. 
Letting $k$ be the number of variables occurring in $\Phi$, run
algorithm $\mathcal{A}$ with input $k,\bar S_\Phi, \bar T_\Phi$. 
After a number of steps polynomial in the size of $\Phi$, 
$\mathcal{A}$ will output (a suitable representation of) the number 
$d_1^C(S,T)$. 
Now, check if this number is equal to (a suitable representatio of)
the number  $m\cdot(72+ 8\sqrt{2})$, where $m$ is the number of
disjunctive clauses of $\Phi$.
If so, output ``yes''; otherwise output ``no''.

From Theorem~\ref{cor:dcongruence} we know that the algorithm's output
is ``yes'' if, and only if, $\Phi$ has a 1-in-3 model.
Thus, we have constructed a polynomial-time algorithm solving the
$\textup{NP}$-complete problem \Problem{1-in-3-Sat}. In case that $\textup{P}\neq \textup{NP}$, such an
algorithm cannot exist.
\end{proof}

Note that according to the above proof, already the restriction of
$d_1^C\textsc{-Computation}$ to input time series over
\[
  \set{0,1,-1,6,-6}^k
\]
cannot be accomplished in polynomial time,
unless $\textup{P}=\textup{NP}$.

\subsection{The structure $\Delta S$ of a time series $S$}
\label{sec:deltadistance}

In this subsection we consider the
well-known \emph{self-similarity matrix} of a time series.
Usually, the self-similarity matrix is used to analyze a time series
for patterns (e.\,g. using Recurrence Plots \cite{recurrenceplots}).

The important property that makes the self-similarity matrix useful
for approximating the congruence distance, is its invariance under 
transformations considered for the congruence distance, i.\,e. rotation, translation, and mirroring.

Considering an arbitrary time series $T=(t_0,\dots,t_{n-1})\in\mathcal
T_\MM$ over a metric space $(\MM, \dE)$, 
the \emph{self-similarity matrix}
\begin{displaymath}
    \Delta T \ \ \coloneqq \ \ \big(\; \dE\left(t_i,t_{i+j}\right)\; \big)_{i\in[0,n-1),\ j\in[1,n-i)}
\end{displaymath}
describes the inner structure of the time series.
Thus, we also call the self-similarity matrix $\Delta T$ the \emph{structure} of the time series $T$.

Throughout the remainder of this subsection, we will restrict attention to time
series over $(\RR^k,\dE_2)$. 

The next theorem shows that for such
time series, the structure $\Delta T$ completely describes the
sequence $T$ up to congruence, i.e., up to rotation, translation, and mirroring of
the whole sequence in $\mathbb R^k$. 

\begin{theorem}
    \label{thm:delta}
    Consider the metric space $(\mathbb R^k, \dE)$ for \linebreak[4] 
    $\dE:= \dE_2$,
    and let $S,T$ be two time series of length $n$ over $\mathbb R^k$.
    Then, $S$ and $T$ are congruent iff they have the same structure, i.\,e.:
    \begin{align*}
      S\cong_C T \ \ \iff \ \ \Delta S=\Delta T.
    \end{align*}
\end{theorem}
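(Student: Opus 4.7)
My plan is to handle the two directions separately. The forward implication is immediate: if $T = M\cdot S + v$ for some $M \in \orth(k)$ and $v \in \RR^k$, then for all indices $i,j$ one computes
\[
   \dE_2(t_i,t_j) = \Norm{M(s_i - s_j)}_2 = \Norm{s_i - s_j}_2 = \dE_2(s_i,s_j),
\]
using that orthogonal matrices preserve the Euclidean norm (as noted in Section~\ref{sec:preliminaries}). Hence $\Delta S = \Delta T$.

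The interesting direction is the converse. The plan is to reduce to the case $s_0 = t_0 = 0$ by translating both time series (this does not affect $\Delta S$ or $\Delta T$, nor does it affect the question of whether $S \cong_C T$, since translation is one of our allowed transformations). Once $s_0 = t_0 = 0$, the assumption $\Delta S = \Delta T$ gives in particular $\Norm{s_i}_2 = \dE_2(s_i,s_0) = \dE_2(t_i,t_0) = \Norm{t_i}_2$ and $\Norm{s_i - s_j}_2 = \Norm{t_i - t_j}_2$ for all $i,j$. Expanding $\Norm{s_i - s_j}_2^2 = \Norm{s_i}_2^2 + \Norm{s_j}_2^2 - 2\SP{s_i,s_j}$ (and likewise for $t$), the polarization identity then yields $\SP{s_i,s_j} = \SP{t_i,t_j}$ for all $i,j \in [0,n)$. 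In other words, the Gram matrices of $(s_0,\ldots,s_{n-1})$ and $(t_0,\ldots,t_{n-1})$ coincide.

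The main construction is then to build an orthogonal matrix $M$ with $Ms_i = t_i$ for every $i$. Let $V_S \deff \mathrm{span}\set{s_0,\ldots,s_{n-1}}$ and $V_T \deff \mathrm{span}\set{t_0,\ldots,t_{n-1}}$, and pick indices $i_1 < \cdots < i_r$ such that $s_{i_1},\ldots,s_{i_r}$ form a basis of $V_S$. Because the Gram matrix of $s_{i_1},\ldots,s_{i_r}$ equals that of $t_{i_1},\ldots,t_{i_r}$ and is nonsingular (the $s_{i_\ell}$ being linearly independent), the vectors $t_{i_1},\ldots,t_{i_r}$ are linearly independent as well, so $\dim V_T = r = \dim V_S$. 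Define a linear map $M_0 : V_S \longrightarrow V_T$ by $M_0 s_{i_\ell} \deff t_{i_\ell}$ for $\ell \in [1,r]$. The identity of Gram matrices immediately implies that $M_0$ preserves inner products on $V_S$. Extend $M_0$ to a linear map $M : \RR^k \longrightarrow \RR^k$ by sending an orthonormal basis of $V_S^\perp$ (which has the same dimension $k-r$ as $V_T^\perp$) to an orthonormal basis of $V_T^\perp$; this extension is inner-product-preserving on all of $\RR^k$, hence $M \in \orth(k)$.

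It remains to verify $Ms_i = t_i$ for every $i \in [0,n)$, which I expect to be the one spot where care is needed. Writing $s_i = \sum_{\ell=1}^r c_\ell s_{i_\ell}$, we get $Ms_i = \sum_{\ell} c_\ell t_{i_\ell}$. Taking inner products with $s_{i_m}$ on both sides of the expansion of $s_i$ yields $\SP{s_i,s_{i_m}} = \sum_\ell c_\ell \SP{s_{i_\ell},s_{i_m}}$; using the Gram-matrix identity, the same linear combination satisfies $\SP{t_i,t_{i_m}} = \sum_\ell c_\ell \SP{t_{i_\ell},t_{i_m}}$ for every $m$. Therefore the vector $t_i - \sum_\ell c_\ell t_{i_\ell}$ lies in $V_T$ but is orthogonal to a basis of $V_T$, so it vanishes. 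Hence $Ms_i = t_i$ for all $i$, completing the proof that (after undoing the initial translation) $T = M\cdot S + v$ for some $M \in \orth(k)$ and $v \in \RR^k$, i.e.\ $S \cong_C T$.
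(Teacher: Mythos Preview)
Your proof is correct and follows essentially the same route as the paper's: translate both series so that $s_0=t_0=0$, use the polarization identity to convert equal pairwise distances into equal inner products (i.e., equal Gram matrices), define a linear isometry on a basis of $\mathrm{span}\{s_i\}$, verify via the Gram matrix that it sends every $s_i$ to $t_i$, and extend orthogonally to all of $\RR^k$. The only cosmetic difference is ordering: you extend $M_0$ to $M$ before checking $Ms_i=t_i$ for all $i$, and you assert $\dim V_T=r$ a step before it is fully justified (at that point only $\dim V_T\geq r$ follows from the nonsingular Gram matrix; the reverse inequality comes out of your final paragraph, where you show every $t_i$ lies in $\mathrm{span}\{t_{i_\ell}\}$, or alternatively by the obvious symmetry in $S$ and $T$).
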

Basically, this theorem holds because the Eucledian Distance is invariant under isometric functions (cf. the appendix for a detailed proof).

\subsection{The Delta Distance}
\label{sec:bounddelta}

Our approach for approximating the congruence distance between
two time series $S$ and $T$ is to compare the self-similarity matrices
of $S$ and $T$ via a suitable matrix norm. This is formalized in the
following definition.

\begin{definition}[Delta Distance]
    \label{def:deltadistance}\upshape \ \\
    Let $\mathcal{T}$ be the class of all time series over $\RR^k$,
    and let $\Norm{\cdot}$ be a matrix norm.
    Let $S,T\in\mathcal{T}$ be two time series of length $m$ and
    $n$ ($m\leq n$), respectively.
    The \emph{delta distance} $d^\Delta_{\Norm{\cdot}}(S,T)$ is
    defined as follows: 
    \begin{align*}
        d^\Delta_{\Norm{\cdot}}(T,S) \coloneqq \ \
        d^\Delta_{\Norm{\cdot}}(S,T) \coloneqq \ \ 
        \min_{b\in[0, n-m]} \big(\, d^\Delta_{\Norm{\cdot}}(S, T_b^m) \,\big).
    \end{align*}
\end{definition}

We will consider the cases where $\Norm{\cdot}$ is the max column norm
$\Norm{\cdot}_m$ or the $p$-Norm $\Norm{\cdot}_p$ for some
$p\in\RR_{\geq 1}$. In these cases we will write $d^\Delta_m$ and
$d^\Delta_p$, respectively, to denote $d^\Delta_{\Norm{\cdot}}$.

Obviously, for time series of the same length, 
the complexity of computing the delta distance $d^\Delta_{\Norm{\cdot}}$
grows quadratically with the length of the time series. In particular,
for time series $S$ and $T$ of equal length,
$d^\Delta_1(S,T)$ and $d^\Delta_m(S,T)$ can be computed in time
quadratic in the length of $S$ and $T$.
\medskip

Our next aim is to show that the the delta distance $d^\Delta_m$
provides a lower bound on the congruence distance $d_1^C$, as
formulated in the following theorem.

\begin{theorem}
    \label{thm:quadmaxest}
   For all time series $S$ and $T$ over $\RR^k$, the following holds:
    \begin{align*}
        d^\Delta_m(S,T) \ \ \leq \ \ 2\cdot d^C_1(S,T).
    \end{align*}
\end{theorem}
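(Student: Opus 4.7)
The plan is to bound each entry of $\Delta S - \Delta T$ using the triangle inequality for the Euclidean distance, then sum along columns to match the definition of the max column norm. For time series of different lengths the statement follows from the equal-length case because both $d^C_1$ and $d^\Delta_m$ are defined via minimisation over windows; so I can assume $S$ and $T$ have the same length $n$.

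First, using the existence statement following Definition~\ref{def:congruencedistance}, pick $M\in\orth(k)$ and $v\in\RR^k$ realising the minimum, so that $d_1^C(S,T)=d_1(S,\,M\cdot T+v)$. Set $T'\deff M\cdot T+v$ with states $t'_i=Mt_i+v$. Because multiplication by an orthogonal matrix and translation by a vector preserve the Euclidean distance, $\dE_2(t'_i,t'_{i+j})=\dE_2(t_i,t_{i+j})$ for every valid $i,j$, hence $\Delta T'=\Delta T$. Consequently
\[
  d^\Delta_m(S,T) \ \ = \ \ \Norm{\Delta S - \Delta T}_m \ \ = \ \ \Norm{\Delta S - \Delta T'}_m.
\]

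Second, fix indices $i\in[0,n-1)$ and $j\in[1,n-i)$. Two applications of the triangle inequality in $(\RR^k,\dE_2)$ give
\[
  \bigl|\dE_2(s_i,s_{i+j})-\dE_2(t'_i,t'_{i+j})\bigr|
  \ \ \leq \ \ \dE_2(s_i,t'_i) \ + \ \dE_2(s_{i+j},t'_{i+j}),
\]
so each entry of $\Delta S - \Delta T'$ is bounded by a sum of two state-wise distances between $S$ and $T'$.

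Third, fix a column $j$ of $\Delta S-\Delta T'$; its absolute column sum is at most
\[
  \sum_{i=0}^{n-1-j}\dE_2(s_i,t'_i) \ + \ \sum_{i=0}^{n-1-j}\dE_2(s_{i+j},t'_{i+j})
  \ \ \leq \ \ 2\cdot\sum_{i=0}^{n-1}\dE_2(s_i,t'_i)
  \ \ = \ \ 2\cdot d_1(S,T'),
\]
where the second sum has been re-indexed as $\sum_{i=j}^{n-1}\dE_2(s_i,t'_i)$ and both tails are bounded by the full sum $d_1(S,T')$. Taking the maximum over $j$ yields $\Norm{\Delta S-\Delta T'}_m\leq 2\,d_1(S,T')=2\,d^C_1(S,T)$, which is the claim.

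The proof is essentially routine once the invariance $\Delta T'=\Delta T$ is in place; the only subtlety is the re-indexing step, where the bound degrades by a factor of two precisely because each state $s_i$ (resp.\ $t'_i$) appears both as a ``left'' and a ``right'' endpoint of pairs contributing to a column of the self-similarity matrix. This is what prevents a sharper factor than $2$ from this argument.
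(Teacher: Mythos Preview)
Your proof is correct and follows essentially the same route as the paper: the paper separates out the bound $d^\Delta_m(S,T)\leq 2\,d_1(S,T)$ as Lemma~\ref{lem:quadmaxest} (proved via exactly the triangle-inequality-and-column-sum argument you give) and then invokes the invariance $\Delta T=\Delta(M\cdot T+v)$ from Theorem~\ref{thm:delta} to pass to $d^C_1$, whereas you combine both steps by first replacing $T$ with the optimal $T'=M\cdot T+v$ and then running the same column estimate. Your handling of the unequal-length case via the windowing definitions is also what the paper does (through Lemma~\ref{lem:minwindow}).
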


For proving Theorem~\ref{thm:quadmaxest}, we will emply the following
two lemmas (cf. the appendix for their proofs).

\begin{lemma}
    \label{lem:minwindow}
    Let $\mathcal{T}$ be the set of all time series over $\RR^k$.
    Let $\Norm{\cdot}$ be a matrix norm.
    Let $p\in\RRposs{1}$, and
    let $C$ be a function from $\NN$ to
    $\RR$.
    If for all $n\in\NN$ and all time series $S,T\in\mathcal{T}$ of
    length $n$ we have
    \begin{align*}
        d_{\Norm{\cdot}}^\Delta(S,T) \ \leq \ C(n)\cdot d_p(S,T),
    \end{align*}
    then
    \begin{align*}
        d_{\Norm{\cdot}}^\Delta(S,T) \ \leq \ C(\min\left\{ \#S,\#T \right\})\cdot d_p(S,T)
    \end{align*}
    holds for all time series $S,T \in\mathcal{T}$ (i.e., also for
    time series of \emph{different} lengths).
\end{lemma}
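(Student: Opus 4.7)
The plan is to exploit the fact that both $d^\Delta_{\Norm{\cdot}}$ and $d_p$ are, by definition, obtained by sliding the shorter time series as a window over the longer one and taking the minimum over all alignments, so that a single optimal window alignment witnesses both quantities simultaneously.

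Without loss of generality assume $\#S\leq \#T$, and set $m\deff \#S$ and $n\deff \#T$. By the definitions of $d_p$ and $d^\Delta_{\Norm{\cdot}}$ for time series of unequal length, we have
\begin{align*}
 d_p(S,T) &\ = \ \min_{b\in[0,n-m]} d_p(S,T_b^m), \\
 d^\Delta_{\Norm{\cdot}}(S,T) &\ = \ \min_{b\in[0,n-m]} d^\Delta_{\Norm{\cdot}}(S,T_b^m).
\end{align*}
Choose $b^\ast\in[0,n-m]$ to be a value attaining the minimum in the first line, so that $d_p(S,T)=d_p(S,T_{b^\ast}^m)$.

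Now $S$ and $T_{b^\ast}^m$ are two time series of the \emph{same} length $m$, so the assumed inequality applies with $n\deff m$ and yields
\[
 d^\Delta_{\Norm{\cdot}}(S,T_{b^\ast}^m) \ \leq \ C(m)\cdot d_p(S,T_{b^\ast}^m) \ = \ C(m)\cdot d_p(S,T).
\]
Since $d^\Delta_{\Norm{\cdot}}(S,T)$ is the minimum over all window positions $b$, in particular it is bounded by the value attained at $b=b^\ast$:
\[
 d^\Delta_{\Norm{\cdot}}(S,T) \ \leq \ d^\Delta_{\Norm{\cdot}}(S,T_{b^\ast}^m) \ \leq \ C(m)\cdot d_p(S,T).
\]
Because $m=\min\{\#S,\#T\}$, this is exactly the claimed inequality. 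The case $\#T\leq\#S$ is symmetric by the symmetry of both $d_p$ and $d^\Delta_{\Norm{\cdot}}$ in their arguments.

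The argument is essentially bookkeeping: the only step worth noting is that the \emph{same} window position $b^\ast$ that is optimal for $d_p$ is used to upper-bound $d^\Delta_{\Norm{\cdot}}$; one cannot require $b^\ast$ to also minimize $d^\Delta_{\Norm{\cdot}}$, but one does not need that, since an upper bound at any single window suffices. So there is no real obstacle; the lemma is really a compatibility statement between the ``slide-the-window'' definition of distances on unequal-length time series and any assumed pointwise inequality for equal-length time series.
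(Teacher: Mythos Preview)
Your proof is correct and follows essentially the same approach as the paper: choose the window position that minimizes $d_p$, apply the equal-length hypothesis there, and use that $d^\Delta_{\Norm{\cdot}}(S,T)$ is a minimum over all windows to bound it by the value at that particular window. The paper's proof is just a terser version of exactly this argument.
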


\begin{lemma}
    \label{lem:quadmaxest}
    The following holds 
    for the max column norm
    $\|\cdot\|_m$ and for all time series $S,T$ over $\RR^k$:
    \begin{align*}
        d^\Delta_m(S,T) \ \ \leq \ \ 2\cdot d_1(S,T).
    \end{align*}
\end{lemma}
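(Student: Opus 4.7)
The plan is to first reduce to time series of equal length and then bound each column sum of $\Delta S - \Delta T$ using the reverse triangle inequality for the Euclidean distance.

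\textbf{Reduction to equal length.} By Lemma~\ref{lem:minwindow} applied with $p=1$ and the constant function $C(n)=2$, it suffices to prove the inequality for time series $S=(s_0,\dots,s_{n-1})$ and $T=(t_0,\dots,t_{n-1})$ of the same length $n$. So assume this from now on and write $\epsilon_i \deff \dE_2(s_i,t_i) = \|s_i-t_i\|_2$ for each $i\in[0,n)$, so that $d_1(S,T)=\sum_{i=0}^{n-1}\epsilon_i$.

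\textbf{Bounding one column sum.} Fix a column index $j\in[1,n)$. The entries of $\Delta S-\Delta T$ in column $j$ are indexed by $i\in[0,n-1-j]$, and each equals $\dE_2(s_i,s_{i+j}) - \dE_2(t_i,t_{i+j})$. I would then apply the reverse triangle inequality (which follows from two uses of the ordinary triangle inequality in the metric space $(\RR^k,\dE_2)$), namely
\[
  \bigl|\dE_2(s_i,s_{i+j}) - \dE_2(t_i,t_{i+j})\bigr|
  \ \leq \ \dE_2(s_i,t_i) + \dE_2(s_{i+j},t_{i+j})
  \ = \ \epsilon_i + \epsilon_{i+j}.
\]
Summing over $i\in[0,n-1-j]$ yields
\[
  \sum_{i=0}^{n-1-j}\bigl|\dE_2(s_i,s_{i+j}) - \dE_2(t_i,t_{i+j})\bigr|
  \ \leq\ \sum_{i=0}^{n-1-j}\epsilon_i \ + \ \sum_{i=j}^{n-1}\epsilon_i
  \ \leq\ 2\sum_{i=0}^{n-1}\epsilon_i
  \ =\ 2\, d_1(S,T),
\]
where for the second inequality I reindexed the second sum by substituting $i\mapsto i+j$ and then enlarged both index ranges to $[0,n-1]$.

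\textbf{Taking the maximum.} Since this bound holds for every column $j\in[1,n)$, taking the maximum over $j$ yields $\|\Delta S-\Delta T\|_m \leq 2\, d_1(S,T)$, which is exactly $d^\Delta_m(S,T) \leq 2\, d_1(S,T)$ in the equal-length case. Combined with the reduction from the first step, this proves the lemma.

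The proof is essentially routine; the only step requiring care is the bookkeeping with the triangular-shaped self-similarity matrix (the column lengths depend on $j$), but I do not foresee any real obstacle. The key conceptual input is simply the reverse triangle inequality, which replaces a difference of Euclidean distances in $\RR^k$ by a sum of two pointwise errors, and the observation that each $\epsilon_i$ then appears at most twice across all positions of a single column.
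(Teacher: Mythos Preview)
Your proof is correct and follows essentially the same approach as the paper's: reduce to equal length via Lemma~\ref{lem:minwindow} with $C(n)=2$, then use the reverse triangle inequality to bound each entry of $\Delta S-\Delta T$ by $\dE_2(s_i,t_i)+\dE_2(s_{i+j},t_{i+j})$ and sum. The only cosmetic difference is that the paper fixes the maximizing column $j^*$ up front and bounds just that one, whereas you bound every column and then take the maximum; the argument is otherwise identical.
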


\begin{proof}[of Theorem~\ref{thm:quadmaxest}] \ \\
    W.\,l.\,o.\,g., let $\#S = m\leq n = \#T$.
    For arbitrary $M\in\orth(k)$ and $v_0\in\mathbb R^k$,
    Theorem~\ref{thm:delta} tells us that
    \begin{align*}
        \Delta T \ \ = \ \  \Delta( M\cdot T+v_0 ).
    \end{align*}
    Hence, rotation, translation, and mirroring of $T$ does not affect
    $d^\Delta_m(S,T)$.
    Applying Lemma~\ref{lem:quadmaxest}, we obtain
    \begin{align*}
        d^\Delta_m(S,T) = d^\Delta_m(S,M\cdot T+v_0) \leq 2\cdot d_1(S,M\cdot T+v_0)
    \end{align*}
    Thus, the desired inequality holds:
    \begin{align*}
        d^\Delta_m(S,T) & \leq \inf_{M\in\orth(k), v_0\in\mathbb R^k}
        \big( \, 2\cdot d_1(S,M\cdot T+v_0) \,\big) \\
        &= 2\cdot d^C_1(S,T)
    \end{align*}
\end{proof}

Similarly to Lemma~\ref{lem:quadmaxest}, we can also prove the
following (cf.\ the appendix for a proof).

\begin{lemma}
    \label{lem:quadsumest}
    \hfill The following holds for the matrix norm $\|\cdot\|_1$ and
    for all time series $S,T$ over $\RR^k$:
    \begin{align*}
        d^\Delta_1(S,T) \ \ \leq \ \ \big(\min\left\{ \#S,\#T
        \right\}-1) \cdot d_1(S,T).
    \end{align*}
\end{lemma}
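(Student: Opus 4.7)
The plan is to mirror the structure of the proof of Lemma~\ref{lem:quadmaxest}, only with the max column norm replaced by the entrywise $1$-norm, which makes the bookkeeping of how many times each state contributes to the sum the main point. As before, I first handle the case of equal-length time series and then invoke Lemma~\ref{lem:minwindow} with $C(n)=n-1$ to extend the estimate to time series of different lengths.

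Assume therefore that $\#S=\#T=n$. By definition of $\Delta S$ and $\Delta T$, and by the definition of the matrix $1$-norm,
\begin{align*}
 d_1^\Delta(S,T) \ = \ \sum_{i\in[0,n-1)} \ \sum_{j\in[1,n-i)}
   \Big| \, \dE_2(s_i,s_{i+j}) \, - \, \dE_2(t_i,t_{i+j}) \, \Big|.
\end{align*}
For each pair $(i,j)$ the reverse triangle inequality for the metric $\dE_2$ (which holds in every metric space) gives
\begin{align*}
 \big|\dE_2(s_i,s_{i+j}) - \dE_2(t_i,t_{i+j})\big| \ \leq \ \dE_2(s_i,t_i) + \dE_2(s_{i+j},t_{i+j}).
\end{align*}

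The remaining step, which I expect to be the only slightly delicate part of the argument, is to count how often each quantity $\dE_2(s_k,t_k)$ appears after summing these upper bounds over all admissible pairs $(i,j)$. Fix $k\in[0,n)$. The term $\dE_2(s_k,t_k)$ arises as a first summand whenever $i=k$ and $j\in[1,n-k)$, contributing $n-k-1$ occurrences, and as a second summand whenever $i+j=k$ with $i\in[0,k)$, contributing $k$ occurrences. In total each $\dE_2(s_k,t_k)$ appears exactly $(n-k-1)+k=n-1$ times (the boundary cases $k=0$ and $k=n-1$ give $n-1$ as well).

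Combining the two bounds yields
\begin{align*}
 d_1^\Delta(S,T) \ \leq \ (n-1)\cdot \sum_{k=0}^{n-1} \dE_2(s_k,t_k) \ = \ (n-1)\cdot d_1(S,T).
\end{align*}
This is the claim for equal-length time series. Applying Lemma~\ref{lem:minwindow} with $C(n)=n-1$ then upgrades the estimate to arbitrary $S,T\in\mathcal T$ with the factor $\min\{\#S,\#T\}-1$, as required.
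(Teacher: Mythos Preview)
Your proof is correct and follows essentially the same approach as the paper's: apply the reverse triangle inequality to each entry of $\Delta S-\Delta T$, count that every term $\dE_2(s_k,t_k)$ appears exactly $(n-k-1)+k=n-1$ times in the resulting double sum, and then extend to unequal lengths via Lemma~\ref{lem:minwindow} with $C(n)=n-1$. The only difference is presentational---the paper carries out the same count through manipulations in $\|\cdot\|_1$-notation rather than by your explicit occurrence count.
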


Combining this lemma with the proof of 
    Theorem~\ref{thm:quadmaxest}, we obtain the following.

\begin{theorem}
    \label{thm:quadsumest}
   For all time series $S$ and $T$ over $\RR^k$, the following holds:
    Then the following inequality holds:
    \begin{align*}
        d^\Delta_1(S,T) \ \ \leq \ \ \big(\min\left\{ \#S,\#T \right\} -1\big)\cdot d^C_1(S,T).
    \end{align*}
\end{theorem}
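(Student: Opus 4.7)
The plan is to mirror the proof of Theorem~\ref{thm:quadmaxest} almost verbatim, merely replacing the invocation of Lemma~\ref{lem:quadmaxest} by the corresponding invocation of Lemma~\ref{lem:quadsumest}. To that end, I would first assume without loss of generality that $\#S \leq \#T$, so that $\min\{\#S,\#T\} = \#S$.

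Next, I would reuse the observation that the delta distance is invariant under congruence transformations of its arguments. Concretely, for every $M\in\orth(k)$ and $v_0\in\RR^k$, Theorem~\ref{thm:delta} guarantees $\Delta(M\cdot T + v_0) = \Delta T$, and the same identity applies to every length-$\#S$ window $T_b^{\#S}$, because congruence transformations commute with taking subseries. Hence
\begin{align*}
    d^\Delta_1(S,T) \ \ = \ \ d^\Delta_1\bigl(S,\; M\cdot T + v_0\bigr).
\end{align*}

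Now I would apply Lemma~\ref{lem:quadsumest} to the time series $S$ and $M\cdot T + v_0$. Since $\#(M\cdot T+v_0)=\#T \geq \#S$, the minimum of the two lengths is still $\#S$, so the lemma yields
\begin{align*}
    d^\Delta_1\bigl(S,\; M\cdot T + v_0\bigr) \ \ \leq \ \ \bigl(\#S - 1\bigr)\cdot d_1\bigl(S,\; M\cdot T + v_0\bigr).
\end{align*}
Chaining this with the invariance identity and taking the infimum over all $M\in\orth(k)$ and $v_0\in\RR^k$ on the right-hand side gives
\begin{align*}
    d^\Delta_1(S,T) \ \ \leq \ \ \bigl(\#S - 1\bigr)\cdot d_1^C(S,T) \ \ = \ \ \bigl(\min\{\#S,\#T\}-1\bigr)\cdot d_1^C(S,T),
\end{align*}
as required.

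There is effectively no obstacle in this argument: all of the technical work has already been done, either in Theorem~\ref{thm:delta} (invariance of $\Delta T$ under isometries) or in Lemma~\ref{lem:quadsumest} (the sum-norm analogue of the max-column-norm bound), which in turn relies on Lemma~\ref{lem:minwindow} to reduce from equal-length time series to arbitrary lengths. The only point worth being mildly careful about is that after transforming $T$ to $M\cdot T + v_0$, the length is preserved, so the factor $\min\{\#S,\#T\}-1$ used by Lemma~\ref{lem:quadsumest} remains $\#S-1$ and can safely be pulled out of the infimum.
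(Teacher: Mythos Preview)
Your proposal is correct and follows exactly the approach the paper indicates: it says the result is obtained by ``combining this lemma [Lemma~\ref{lem:quadsumest}] with the proof of Theorem~\ref{thm:quadmaxest},'' which is precisely what you do. Your extra remark that congruence transformations commute with taking subseries is a small clarification the paper leaves implicit, but otherwise the arguments are identical.
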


The Theorems~\ref{thm:quadsumest} and \ref{thm:quadmaxest} show that the delta distances $d^\Delta_1$ and $d^\Delta_m$ provide lower bounds for the congruence distance $d^C_1$.
On the other hand, the ratio of the congruence distance and the delta distance can grow arbitrarily as shown with the following example.

\begin{example}
    Consider $\MM=\RR^2$ and the eucledian distance $\dE=\dE_2$.
    We show that for each $C>0$ time series $S,T$ exist such that $\frac{\delta^C(S,T)}{\delta^\Delta(S,T)}\geq C$ (cf. Figure~\ref{fig:noupperbound}).
    \begin{figure}
        \centering
        \includegraphics{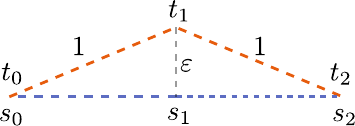}
        \caption{Two time series $S$ and $T_\epsilon$ in $\mathbb R^2$, such that $\frac{d_1^C(S,T_\epsilon)}{d_1^\Delta(S,T_\epsilon)}\xrightarrow{\varepsilon\rightarrow 0}+\infty$.}
        \label{fig:noupperbound}
    \end{figure}
    Let $\epsilon>0$, $a\deff\sqrt{1-\epsilon^2}$, and consider $S=(s_0,s_1,s_2),T_\epsilon=(t_0,t_1,t_2)\in\TS_3$ with
    \begin{align*}
        S\deff \left( \colvec{2}{-a}{0}, \colvec{2}{0}{0}, \colvec{2}{a}{0} \right),\ 
        T_\epsilon\deff \left( \colvec{2}{-a}{0}, \colvec{2}{0}{\epsilon}, \colvec{2}{a}{0} \right).
    \end{align*}
    Then $\delta^\Delta(S,T_\epsilon)=2(1-a)=2(1-\sqrt{1-\epsilon^2})$.
    We claim that $\delta^C(S,T_\epsilon)\geq\frac \epsilon 2$ (cf. the Appendix~A, Claim~\ref{cl:noupperbound} for a proof).
    Then,
    \begin{align*}
        \frac{\delta^C(S,T_\epsilon)}{\delta^\Delta(S,T_\epsilon)} \geq \frac 1 4 \cdot \frac{\epsilon}{1-\sqrt{1-\epsilon^2}} \xrightarrow{\quad\epsilon\rightarrow 0\quad}+\infty.
    \end{align*}
\end{example}


\section{The Reduced Delta Distance}
\label{sec:approximatedcongruence}

While computing the congruence distance $d^C_1$ of two given time
series $S$ and $T$ is an $\textup{NP}$-hard problem, the computation
of the delta distances $d^\Delta_m$ and $d^\Delta_1$ can be
accomplished in time quadratic in the lengths of $S$ and $T$.
For practical usage, however, a distance measure that can be computed
in linear or quasi-linear time, would be highly desirable.

In this section, we propose the $d^\delta$ distance function, for
which the distance between two time series of length $n$ can be
computed in time $\mathcal O(n\log n)$.
The idea underlying the definition of $d^\delta$ is the same
as for the delta distance function $d^\Delta_{\Norm{\cdot}}$, except
that only $\log n$ columns of the matrix 
$\Delta S-\Delta T$ are computed.

For giving the precise definition of the reduced delta distance
$d^\delta$, we need the following notation.

We write $\PowerOfTwo$ for the set $\setc{2^n}{n\in\NN}$ of all powers
of $2$.
Let $T=\left(t_0,\cdots,t_{n-1}\right)\in\mathcal T$ be a time series over the metric space $(\RR^k,\dE)$.
The \emph{reduced structure} of $T$ is the matrix
\begin{align*}
    \delta T \ \ \coloneqq \ \ \big( \dE\left(t_i,t_{i+j}\right) \big)_{i\in[0,n-1),\ j\in[1,n-i)\,\cap\, \PowerOfTwo}
\end{align*}
It contains the subset of the entries of $\Delta T$ which compare two
states $t_i$ and $t_{i+j}$ having a time distance $j$ that is a power of 2.

\begin{definition}[Reduced Delta Distance]
    \label{def:reduceddeltadistance} \upshape \ \\
    Let $\mathcal{T}$ be the class of all time series over $\RR^k$,
    and let $\Norm{\cdot}$ be a matrix norm. Let $S,T\in\mathcal{T}$
    be two time series of lengths $m$ and $n$ ($m\leq n$), respectively. The     
    \emph{reduced delta distance} $d^\delta_{\Norm{\cdot}}(S,T)$ is
    defined as follows: 
    \begin{align*}
        d^\delta_{\Norm{\cdot}}(T,S) \coloneqq \ \
        d^\delta_{\Norm{\cdot}}(S,T) \coloneqq \ \ 
        \min_{b\in[0, n-m]} \big(\, d^\delta_{\Norm{\cdot}}(S, T_b^m) \,\big).
    \end{align*}
\end{definition}
In case that $\Norm{\cdot}$ is the max column norm $\Norm{\cdot}_m$ or
the $p$-Norm $\Norm{\cdot}_p$ for some $p\in\RRposs{1}$, we 
will write $d^\delta_p$ and $d^\delta_m$, respectively, to denote
$d^\delta_{\Norm{\cdot}}$.

In particular, since $\delta T$ has $\mathcal{O}(n\log n)$ entries, the values
$d^\delta_m(S,T)$ and $d^\delta_1(S,T)$ can be computed in
time $\mathcal{O}(n\log n)$, if $S$ and $T$ are two time series of length $n$.
Thus, using $d^\delta$ (instead of $d^\Delta$ or $d^C_1$) has the benefit that
the distance between two time series of equal length can be computed
in quasi-linear time.

On the other hand, using $\delta T$ instead of $\Delta T$ has the drawback that Theorem~\ref{thm:delta} (i.e.,
the congruence requirement) does not hold for $\delta T$:
The following example shows that
there are time series $S,T\in\mathcal T$ with $\Delta S\ne \Delta T$ but $\delta S = \delta T$.

\begin{example}
    \label{ex:nocong}
    Consider the following time series $S,T\in\mathcal T$ over the metric space $(\mathbb R^2,\dE_2)$:
    \begin{align*}
        S &= \textstyle \left( \colvec{2}{-4}{0}, \colvec{2}{0}{0}, \colvec{2}{0}{3}, \colvec{2}{-4}{0} \right) \\
        T &= \textstyle \left( \colvec{2}{-4}{0}, \colvec{2}{0}{0}, \colvec{2}{0}{3}, \colvec{2}{4}{0} \right)
    \end{align*}
    Then
    \begin{align*}
        \begin{pmatrix}
            4 & 5 & 0 \\
            3 & 4 \\
            5 &
        \end{pmatrix}
        \ \ = \ \
         \Delta S \ \ \ne \ \ \Delta T \ = \ \begin{pmatrix}
            4 & 5 & 8 \\
            3 & 4 \\
            5 &
        \end{pmatrix}
    \end{align*}
    but
    \begin{align*}
        \delta S \ \ = \ \ \begin{pmatrix}
            4 & 5 \\
            3 & 4 \\
            5 &
        \end{pmatrix}
        \ \ = \ \ \delta T.
    \end{align*}
\end{example}

\subsubsection*{Cheap Lower Bound for Congruence}

In Section~\ref{sec:bounddelta} we showed that the congruence distance yields an upper bound for the delta distance.
Analogously, in this section we show that the congruence distance
yields an upper bound for the reduced delta distance.
Viewed from the other side,
the reduced delta distance function can thus be regarded as a
computationally cheap approximation of the congruence distance, which
provides a lower bound for the congruence distance.

\begin{theorem}
    \label{thm:qlinmaxest}
    For all time series $S$ and $T$ over $\RR^k$, the following holds:
    \begin{align*}
        d^\delta_m(S,T) \ \ \leq \ \ 2\cdot d^C_1(S,T).
    \end{align*}
\end{theorem}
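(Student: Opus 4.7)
The plan is to reduce the claim directly to Theorem~\ref{thm:quadmaxest} by establishing the pointwise monotonicity $d^\delta_m(S,T) \leq d^\Delta_m(S,T)$ for all time series $S,T$ over $\RR^k$. Once this is in place, chaining with Theorem~\ref{thm:quadmaxest} yields
\begin{equation*}
   d^\delta_m(S,T) \ \leq \ d^\Delta_m(S,T) \ \leq \ 2\cdot d^C_1(S,T),
\end{equation*}
which is the desired bound.

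First I would handle the equal-length case. If $\#S=\#T=n$, then by comparing the definitions of $\Delta$ and $\delta$ one sees that $\delta T$ contains \emph{exactly} those entries of $\Delta T$ whose column index $j$ is a power of 2; equivalently, $\delta S-\delta T$ is obtained from $\Delta S-\Delta T$ by deleting every column whose index is not in $\PowerOfTwo$. Since the max column norm $\Norm{\cdot}_m$ is defined as the maximum of the column absolute-sums, restricting this maximum to a subset of columns can only decrease (or preserve) its value. Hence
\begin{equation*}
  \Norm{\delta S-\delta T}_m \ \leq \ \Norm{\Delta S-\Delta T}_m,
\end{equation*}
i.e.\ $d^\delta_m(S,T)\leq d^\Delta_m(S,T)$ for equal-length series.

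The extension to time series of different lengths is immediate from the definitions: when $\#S=m\leq n=\#T$, both $d^\delta_m(S,T)$ and $d^\Delta_m(S,T)$ are the minimum over the same set of window offsets $b\in[0,n-m]$ of the corresponding same-length quantity applied to $S$ and $T_b^m$. Applying the equal-length inequality for each fixed $b$ and then taking the minimum preserves the direction of the inequality, so $d^\delta_m(S,T)\leq d^\Delta_m(S,T)$ in general. Composing with Theorem~\ref{thm:quadmaxest} completes the proof.

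I do not expect any real obstacle: the whole argument rests on the elementary fact that $\delta T$ is a column-restriction of $\Delta T$ together with monotonicity of the max column norm under column deletion. If one preferred a self-contained proof not relying on Theorem~\ref{thm:quadmaxest}, the same approach as for $d^\Delta_m$ would work: invariance of $\delta T$ under isometries (inherited from invariance of $\Delta T$, Theorem~\ref{thm:delta}), followed by the reverse triangle inequality $|\dE_2(s_i,s_{i+j})-\dE_2(t_i,t_{i+j})|\leq \dE_2(s_i,t_i)+\dE_2(s_{i+j},t_{i+j})$ summed down each column to obtain an analogue of Lemma~\ref{lem:quadmaxest} for $\delta$, and finally taking the infimum over $M\in\orth(k)$ and $v\in\RR^k$. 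Either route yields the theorem.
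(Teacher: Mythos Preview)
Your proposal is correct and follows exactly the paper's approach: the paper's proof simply observes that $d^\delta_m(S,T)\leq d^\Delta_m(S,T)$ holds for all time series (your column-restriction argument makes this explicit) and then applies Theorem~\ref{thm:quadmaxest}. Your treatment of the unequal-length case and the alternative self-contained route are correct elaborations, but not needed beyond what the paper gives.
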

\begin{proof}
    Obviously,
    $d^\delta_m(S,T)\leq d^\Delta_m(S,T)$ holds for all time series
    $S,T$ over $\RR^k$.
    Hence, the desired inequality follows with Theorem~\ref{thm:quadmaxest}.
\end{proof}

The proof of the following theorem is much more algebraic and can be found in the appendix.
\begin{theorem}
    \label{thm:qlinsumest}
    For all time series $S$ and $T$ over $\RR^k$, the following holds:
    \begin{align*}
        d^\delta_1(S,T) \ \ \leq \ \ \lfloor2\cdot\log(\min\left\{ \#S,\#T \right\}-1)\rfloor\cdot d^C_1(S,T).
    \end{align*}
\end{theorem}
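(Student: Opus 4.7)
The plan is to parallel the argument of Theorem~\ref{thm:quadsumest}: reduce to equal-length inputs, exploit isometry-invariance of the reduced structure $\delta$ to upgrade the bound from $d_1$ to $d^C_1$, and finish with a counting argument that accounts for the logarithmically many column indices present in $\delta$. First, I would invoke the analogue of Lemma~\ref{lem:minwindow} for $d^\delta$; its proof is literally the same as for $d^\Delta$ because both distances are defined via the same sliding-window minimisation over the shorter series. This reduces the task to proving, for all time series $S,T$ of equal length $n$, the key inequality
\begin{equation*}
    d^\delta_1(S,T) \ \leq\ \lfloor 2\log(n-1)\rfloor\cdot d_1(S,T).
\end{equation*}

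Once this inequality is in place, the theorem follows immediately. Because orthogonal matrices preserve Euclidean distance, for every $M\in\orth(k)$ and $v\in\RR^k$ the reduced structure satisfies $\delta(M\cdot T+v)=\delta T$ (this is essentially Theorem~\ref{thm:delta} restricted to the power-of-two columns), so $d^\delta_1(S,T)=d^\delta_1(S,M\cdot T+v)$. Applying the key inequality to $M\cdot T+v$ and taking the infimum over $(M,v)\in\orth(k)\times\RR^k$ replaces $d_1$ on the right by $d^C_1$.

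To prove the key inequality, I would expand the definition of $d^\delta_1$ and apply the reverse triangle inequality $|\dE_2(s_i,s_{i+j})-\dE_2(t_i,t_{i+j})|\leq \dE_2(s_i,t_i)+\dE_2(s_{i+j},t_{i+j})$ to each entry of $\delta S-\delta T$. Interchanging the order of summation yields
\begin{equation*}
  d^\delta_1(S,T) \ \leq\ \sum_{k=0}^{n-1} N_k\cdot \dE_2(s_k,t_k),
\end{equation*}
where $N_k\coloneqq |\PowerOfTwo\cap[1,k]|+|\PowerOfTwo\cap[1,n-1-k]|$ counts how often $\dE_2(s_k,t_k)$ is charged: once for each allowed pair in which $k$ is the right endpoint $i+j$, and once for each allowed pair in which $k$ is the left endpoint $i$. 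It remains to show $N_k\leq \lfloor 2\log(n-1)\rfloor$ uniformly in $k$.

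This combinatorial bound is the main obstacle. Using $|\PowerOfTwo\cap[1,m]|=\lfloor\log m\rfloor+1$ for $m\geq 1$, I would split into cases. For the boundary $k\in\{0,n-1\}$ one has $N_k=\lfloor\log(n-1)\rfloor+1$, which is at most $\lfloor 2\log(n-1)\rfloor$ whenever $\log(n-1)\geq 1$, i.e.\ $n\geq 3$. For the interior case $0<k<n-1$ one has $N_k=\lfloor\log k\rfloor+\lfloor\log(n-1-k)\rfloor+2$; combining the subadditivity $\lfloor x\rfloor+\lfloor y\rfloor\leq \lfloor x+y\rfloor$ with the AM--GM estimate $k(n-1-k)\leq ((n-1)/2)^2$ gives $\lfloor\log k\rfloor+\lfloor\log(n-1-k)\rfloor\leq \lfloor 2\log(n-1)-2\rfloor=\lfloor 2\log(n-1)\rfloor-2$, so again $N_k\leq \lfloor 2\log(n-1)\rfloor$. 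Plugging this uniform bound back into the displayed inequality establishes the key estimate, and combining with the isometry argument completes the proof.
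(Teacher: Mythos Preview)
Your proposal is correct and follows essentially the same route as the paper's proof: apply the reverse triangle inequality to each entry of $\delta S-\delta T$, regroup to obtain a coefficient $N_k$ in front of each $\dE_2(s_k,t_k)$, bound $N_k$ uniformly by $\lfloor 2\log(n-1)\rfloor$, invoke Lemma~\ref{lem:minwindow} for unequal lengths, and use isometry invariance of $\delta$ to pass from $d_1$ to $d^C_1$. Your AM--GM justification of $N_k\leq\lfloor 2\log(n-1)\rfloor$ for interior indices is in fact more careful than the paper's own computation, which simply asserts the final inequality; the edge case $n=2$ (where the right-hand side of the theorem vanishes while $d^\delta_1$ need not) is glossed over in both arguments.
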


\section{Experimental Results}
\label{sec:experiments}

Although we have boundaries for the delta distance function, the ratio of $d_1^C$ to $d_p^\Delta$ could be arbitrarily bad.
With the experiments below, we show that the ratio has a stable average.
We consider the delta distance functions using the max column matrix norm only (i.\,e. we focus on $d_m^\Delta$ and $d_m^\delta$) since we achieved best results with it.

\begin{algorithm}
    \begin{lstlisting}[mathescape]
Algorithm: gen
Input:
    $S = (s_0, \cdots, s_{n-1}),$
    $\eta>1,$
    $E\in\mathbb N,$
Output: $\texttt{gen}(S)$
repeat $E$ times:
    Choose a random subset $I\subseteq\left\{ 0,\dots,n-1 \right\}$
    Calculate the barycenter $b\coloneqq |I|^{-1}\sum_{i\in I} s_i$
    Choose a random value $\mu\in\left[ \eta^{-1}, \eta \right]$
    for each $i\in I$:
        $s_i\coloneqq b + \mu\cdot( s_i-b )$
    end
end
return $S$
    \end{lstlisting}
    \caption{The algorithm $\texttt{gen}$ generates new time series from a given time series such that it is likely to be aligned to its origin. The parameters $\eta$ and $E$ are considered to be clear from the context.}
    \label{alg:gen}
\end{algorithm}

We performed the experiments using the TRECVID benchmark dataset \cite{TRECVID} which consists of a set of videos, denoted by $\trecvid$, and consider them as high demensional time series.
The dataset consists of around 8000 videos from $13$\,sec to 2\,h.
We downscaled them to gray-scale videos with a resolution of $128\times 96$ Pixels and a framerate of two images per second, i.\,e. we consider the metric space $(\MM=\RR^{12,288},\dE)$ with $|M|=8000$.
Each video $T=(t_0,\cdots,t_{n-1})\in\trecvid$ is considered as a time series with $26\leq\#T\leq 14,400$, and with each image $t_i\in\RR^{12,288}$ considered as a $12,288$ dimensional vector with each pixel corresponding to one dimension.

Since it is too complex to compute the exact value of $d_1^C$, we used a transformation function $g:\trecvid\longrightarrow \TS$ to generate new time series for each time series $S\in\trecvid$, and computed the distance $d_1(S,\texttt{gen}(S))\geq d_1^C(S,\texttt{gen}(S))$.
Intuitively, $g$ creates random time series such that for a certain time series, the time series distance to the random time series is close to their congruence distance.
Technically, $g$ explodes or implodes random states of the time series around their barycenter.
The exact algorithm is shown in Algorithm~\ref{alg:gen}.
Hence, we can provide an upper bound for the relative error of $d_m^\Delta$ and $d_m^\delta$:
\begin{align*}
    e^\Delta_m(S,\texttt{gen}(S)) \coloneqq&\ \frac{d_1(S,\texttt{gen}(S))}{2\cdot d_m^\Delta(S,\texttt{gen}(S))} &\ \geq \frac{d_1^C(S,\texttt{gen}(S))}{2\cdot d_m^\Delta(S,\texttt{gen}(S))} \\
    e^\delta_m(S,\texttt{gen}(S)) \coloneqq&\ \frac{d_1(S,\texttt{gen}(S))}{2\cdot d_m^\delta(S,\texttt{gen}(S))} &\ \geq \frac{d_1^C(S,\texttt{gen}(S))}{2\cdot d_m^\delta(S,\texttt{gen}(S))}
\end{align*}

Table~\ref{tab:experiments} shows the average ratios $E[e^\Delta_m(S,\texttt{gen}(S))]$ and $E[e^\Delta_m(S,\texttt{gen}(S))]$ for all time series and their generated time series for $12$ sets of parameters of the algorithm $\texttt{gen}$.
A closer inspection of the algorithm $\texttt{gen}$ shows that the congruence distance increases with increasing parameter $\eta$.
Thus, the results from Table~\ref{tab:experiments} suggest that the boundary for the relative error decreases with increasing congruence distance.
Note, that we only approximate the relative error from top.
We did not investigate the variation of the error for different numbers of explosions $E$, but it could probably vary as a result of the deviation of $d_1(S,\texttt{gen}(S))$ from $d_1^C(S,\texttt{gen}(S))$.

\begin{table}
    \centering
    \begin{tabular}{|l|r|r|r|r|}
        \hline
        E \textbackslash \ $\eta$  & $1.1$     & $2$       & $10$      \\
        \hline
        $1$         & $1.62$    & $1.53$    & $1.32$      \\
        $5$         & $1.82$    & $1.62$    & $1.22$       \\
        $10$        & $1.78$    & $1.62$    & $1.22$        \\
        $50$        & $1.85$    & $1.35$    & $1.13$         \\
        \hline
    \end{tabular}
    \hfill
    \begin{tabular}{|l|r|r|r|r|}
        \hline
        E \textbackslash \ $\eta$  & $1.1$     & $2$       & $10$      \\
        \hline
        $1$         & $1.68$    & $1.58$    & $1.35$      \\
        $5$         & $1.89$    & $1.65$    & $1.23$       \\
        $10$        & $1.85$    & $1.65$    & $1.24$        \\
        $50$        & $1.90$    & $1.37$    & $1.13$         \\
        \hline
    \end{tabular}
    \caption{Experimental results: The average $E[e^\Delta_m(S,\texttt{gen}(S))]$ (left) and $E[e^\delta_m(S,\texttt{gen}(S))]$ (right) for different parameter settings of $\texttt{gen}$.}
    \label{tab:experiments}
\end{table}

Figure~\ref{fig:experiments} shows the results in more detail for fixed parameters $\eta=1.5$ and $E=5$.
The experimantal results give evidence for the presumption that the congruence distance function is more accurate on time series having a large congruence distance.
On the other hand, the relative error increases for time series having a low congruence distance.

\begin{figure}
    \centering
    \includegraphics[width=.45\textwidth]{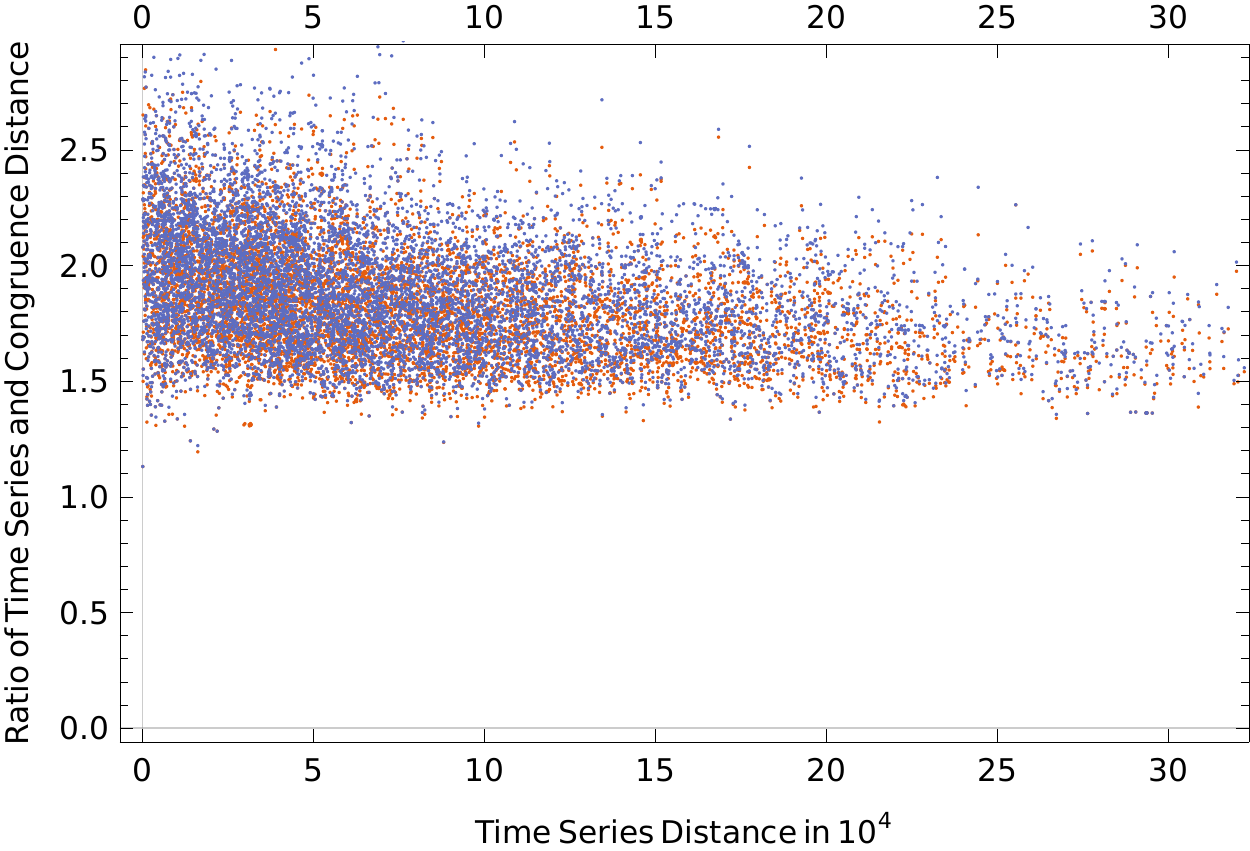}
    \caption{Experimental Results showing ratio $e^\Delta_m(S,\texttt{gen}(S))$ (orange) and $e^\delta_m(S,\texttt{gen}(S))$ (blue) for each $S\in\trecvid$; $\eta=1.1$, $E=5$; $x$-axis: $d_1^C$; $y$-axis: ratio of time series distance and (reduced) delta distance.}
    \label{fig:experiments}
\end{figure}


\section{Conclusion and Future Work}
\label{sec:conclusion}

In this paper, we introduced and analyzed the problem of measuring the congruence between two time series.
After having proved that its computation is $\textup{NP}$-hard, 
we provided two measures for approximating the congruence distance in polynomial time.
The first (namely, the delta distance) measures the congruence in a way that the distance of two time series is $0$ iff they are  congruent.
The second loses this benefit, but can be computed in quasi-linear instead of quadratical time.
Furthermore, we showed that all provided distances fulfill the triangle inequality on time series of the same length.

We simplified the problem of motion gesture recognition to measure the congruence of two time series in $3$ dimensional space.
In practical applications, multiple time series at the same time need to be considered, which is one of our next steps to investigate.
Furthermore, the congruence distance is not robust against scaling of the time series, which is an important future work too.
Also, we are currently carrying out experiments on motion gesture recognition and content based video copy detection to evaluate the utility of the congruence distance.
To achieve more robust distance functions, local time shifting techniques like Dynamic Time Warping need to be adapted to the congruence distance provided in this paper.

\bibliographystyle{plain}
\bibliography{gesture,cbcd,congruence,datastructures,distances}


\onecolumn

\newpage

\appendix

\section{Details Omitted in Section~4}

\begin{lemma}
    Definition~\ref{def:congruencedistance} is well-defined.
\end{lemma}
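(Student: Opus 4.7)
The plan is to establish that although the domain $\orth(k)\times\RR^k$ is not compact (it is unbounded in the $v$-coordinate), the infimum defining $d^C_p(S,T)$ is in fact attained, by showing that minimizers must lie in a bounded set and then applying the extreme value theorem.

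First I would note that the map $\phi:\orth(k)\times\RR^k\longrightarrow\RRpos$ defined by $\phi(M,v):=d_p(S,M\cdot T+v)$ is continuous. This is immediate: matrix-vector multiplication and vector addition are continuous, the Euclidean distance $\dE_2$ is continuous, and the $p$-norm on $\RR^n$ is continuous, so $\phi$ is a composition of continuous maps. Second, $\orth(k)$ is compact, because it is a closed subset of $\RR^{k\times k}$ (defined by polynomial equations $\SP{m_i,m_j}=\delta_{ij}$) and bounded (each column has Euclidean norm $1$).

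The main obstacle is controlling the translation vector $v$. I would handle this by exhibiting an explicit upper bound on $\phi$ and using it to confine $v$ to a closed ball. Pick any reference value, e.g.\ set $M_0:=I_k$ and $v_0:=s_0-t_0$, and let $D_0:=\phi(M_0,v_0)<\infty$. For any candidate $(M,v)$ with $\phi(M,v)\leq D_0$, the first coordinate of the vector $\dE(S,M\cdot T+v)$, being nonnegative and bounded above by its $p$-norm, satisfies
\begin{equation*}
    \dE_2(s_0,\,Mt_0+v) \ \leq \ D_0.
\end{equation*}
Using the triangle inequality for $\Norm{\cdot}_2$ together with the fact that $\Norm{Mt_0}_2=\Norm{t_0}_2$ for $M\in\orth(k)$, this yields $\Norm{v}_2\leq R$ for $R:=\Norm{s_0}_2+\Norm{t_0}_2+D_0$. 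Therefore
\begin{equation*}
    \inf_{M\in\orth(k),\,v\in\RR^k}\phi(M,v) \ = \ \inf_{(M,v)\in\orth(k)\times \bar B(0,R)}\phi(M,v),
\end{equation*}
where $\bar B(0,R)$ is the closed Euclidean ball of radius $R$.

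Finally, $\orth(k)\times\bar B(0,R)$ is the product of two compact sets and hence compact, so the continuous function $\phi$ attains its infimum there by the extreme value theorem. Thus there exist $M^\ast\in\orth(k)$ and $v^\ast\in\RR^k$ realising the minimum, and $d^C_p(S,T)$ is well-defined with the ``$\min$'' in Definition~\ref{def:congruencedistance} justified. The only subtlety to double-check is the estimate that bounds $\Norm{v}_2$ in terms of $D_0$; everything else is routine continuity and compactness.
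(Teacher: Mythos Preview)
Your proof is correct and follows the same overall strategy as the paper's: establish continuity of $(M,v)\mapsto d_p(S,M\cdot T+v)$, confine the translation $v$ to a bounded set, and conclude that the infimum is attained on a compact domain.

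The execution differs in two minor but noteworthy ways. First, to bound $\|v\|_2$ the paper uses all $n$ states simultaneously, deriving
\[
d_p(S,M\cdot T+v)\ \geq\ n^{1/p}\|v\|_2 - d_p(T,\bar 0) - d_p(S,\bar 0),
\]
whereas you extract the bound from a single coordinate via $\dE_2(s_0,Mt_0+v)\leq D_0$; your route is shorter and yields an equally usable radius. Second, after restricting to a bounded set, the paper proceeds by picking a minimizing sequence and invoking Bolzano--Weierstrass to obtain a convergent subsequence; you instead note explicitly that $\orth(k)$ is closed (and hence compact) and apply the extreme value theorem directly to $\orth(k)\times\bar B(0,R)$. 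This is cleaner: the paper's argument relies tacitly on the closedness of $\orth(k)$ to guarantee that the subsequence limit $M^*$ stays in $\orth(k)$, a point it does not spell out, while your formulation makes this step explicit.
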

\begin{proof}
    Consider arbitrary but fixed $S,T\in\mathcal T_n$, $M\in\orth(k)$,
    and $v\in\RR^k$.
    We write $\bar v$ and $\bar 0$ for the time series
    $(y_0,\ldots,y_{n-1})$ and $(z_0,\ldots,z_{n-1})$ in
    $\mathcal{T}_n$, where $y_i=v$ and $z_i=0\in\RR^k$,
    for each $i\in[0,n)$. 
    
    It is not difficult to verify that the following is true:
    \begin{align*}
        d_p(S,\ M\cdot T+v) \ \geq \ \ & d_p(M\cdot T+v,\ \bar 0) -
        d_p(S,\bar 0) \\
        \geq \ \ & d_p(\bar v,\ \bar 0)-d_p(M\cdot T,\ \bar 0)-d_p(S,
        \bar 0) \\
        = \ \ & d_p(\bar v, \bar 0) - d_p(T,\bar 0) - d_p(S,\bar 0)
    \end{align*}
    Note that $d_p(\bar v,\bar 0) = \Norm{\dE_2(\bar v,\bar 0)}_p =
    \Norm{\big( \Norm{v}_2 \big)_{i\in[0,n)}}_p = (n \cdot
    \Norm{v}_2^p\big)^{1/p} = n^{1/p}\cdot \Norm{v}_2$.
    Thus, we have
    \[
       d_p(S,\ M\cdot T+v) \ \geq \ \ n^{1/p}\cdot \Norm{v}_2 \ - \
       d_p(T,\bar 0) - d_p(S,\bar 0).
    \]
    Let 
    \[
       I \ := \ \  \inf_{M'\in\orth(k),\, v'\in\mathbb R^k} d_p\left(S,\,M'\cdot T+v'\right).
    \]
    Clearly, $I\leq d_p(S,T)$ (for this, consider $v'=0\in\RR^k$ and
    $M'$ the identity matrix).
    Now, let $\epsilon >0$, and let
    \[
       r \ \coloneqq \ \ \frac{d_p(S,T) + d_p(T,\bar 0) + d_p(S,\bar
       0) + \varepsilon}{n^{1/p}}.
    \]
    Then, for any $v\in\RR^k$ with $\Norm{v}_2>r$, we have
    \[
      d_p(S,\ M\cdot T+v) \ \geq \ I + \epsilon, \quad \text{for
        every $M\in\orth(k)$}.
    \]
    Thus, for computing $I$ it suffices to restrict attention to
    $v'\in\RR^k$ with $\Norm{v'}_2\leq r$. I.e.,
    \[
       I \ = \ \ \inf_{M'\in\orth(k),\, v'\in V} d_p\left(S,\,M'\cdot T+v'\right),
    \]
    for $V\coloneqq \left\{ v\in\RR^k :  \|v\|_2\leq r \right\}$.
 
    We let $X\deff \orth(k)\times V$ and $Y\deff \RRpos$ and let
    $f:X\longrightarrow Y$ be defined via
    \[
       f(M,v) \ \deff \ d_p(S,\ M\cdot T + v).
    \]
    Furthermore, view each element $(M,v)\in X$ as a vector in
    $\RR^{k^2+k}$, and choose the Eucledian distance
    $\dE_2$ as a metric on $X$. It is straightforward to see that,
    with respect to this metric,
    \begin{enumerate}[(1)]
      \item $X$ is a bounded set, and
      \item $f$ is a continuous function.
    \end{enumerate}

  \noindent
    Now consider an arbitrary sequence $\xi\coloneqq (M_i,v_i)_{i\in\NN}$  with $M_i\in\orth(k)$ and $v_i\in V$ for all $i\in\NN$ such that
    \begin{align*}
      I \ = \ \lim_{i\rightarrow\infty} d_p(S,\ M_i\cdot T+v_i) \ \ =
      \ \ \lim_{i\rightarrow\infty}f(M_i,v_i). 
    \end{align*}
    Since $X=\orth(k)\times V$ is a bounded set (w.r.t.\ $\dE_2$), the sequence $\xi$ is
    bounded, and thus contains a convergent (w.r.t.\ $\dE_2$) subsequence
    $\xi'=(M_{i_j},v_{i_j})_{j\in\NN}$, with $i_1<i_2<\cdots$ (Bolzano-Weierstrass Theorem).
    Let $(M^*,v^*)\in \orth(k)\times \RR^k$ be the limit, i.\,e.,
    \begin{align*}
        (M^*,v^*) \ \coloneqq \ \ \lim_{j\rightarrow\infty} (M_{i_j},v_{i_j}).
    \end{align*}

   \noindent
   Since $f$ is a continuous function, we obtain that
   \ $\displaystyle
     \lim_{j\rightarrow\infty} f(M_{i_j},v_{i_j}) \ \ = \ \ f(M^*,v^*).
   $ \
   Thus, 
   \[
      I \ = \ \lim_{i\rightarrow\infty} f(M_i,v_i) 
        \ = \ \lim_{j\rightarrow\infty} f(M_{i_j},v_{i_j})
        \ = \ f(M^*,v^*)
        \ = \ d_p(S,\ M^*\cdot T+v^*).
   \]
    Hence, Definition~\ref{def:congruencedistance} is well-defined.
\end{proof}

\bigskip

\begin{proof}[of Proposition~\ref{prop:congruencetriangle}] \ \\
    It is easy to see that $d_p$ and thus $d_p^C$ are symmetric functions, i.e.,  $d_p^C(S,T)=d_p^C(T,S)$.
    Furthermore, the triangle inequality for $d_p$ follows from the axioms of a norm.

    To prove the triangle inequality of $d_p^C$, let
    $M_1,M_2\in\orth(k)$ and $v_1,v_2\in\RR^k$ such that $d_p^C(S,
    T)=d_p(M_1\cdot S+v_1, T)$ and $d_p^C(T,U)=d_p(T, M_2\cdot U+v_2)$.
    Then, the triangle inequality follows:
    \begin{align*}
        d_p^C(S,U)\ \leq \ &\, d_p(M_1\cdot S+v_1, M_2\cdot U+v_2) \\
        \ \leq \ &\, d_p(M_1\cdot S+v_1, T) + d_p(T, M_2\cdot U+v_2) \\
        \ = \ &\, d_p^C(S,T)+d_p^C(T,U).
    \end{align*}
\end{proof}

\bigskip

\begin{proof}[of Lemma~\ref{lem:d'}] \ \\
Recalling that $d_1=\Norm{\dE_2}_1$, for the particular choice of
$S'_\Psi$ and $T'_\Psi$ we obtain that
\begin{equation}\label{eq:lem:d'}
  d_1(S'_\Psi,M\cdot T'_\Psi) \ = \ 
  6 \cdot \sum_{i\in I}\big( \;
     \dE_2(e_i,Me_i) + \dE_2(-e_i,Me_i)
  \,\big).
\end{equation}
Hence, $d_1(S'_\Psi,M\cdot T'_\Psi) = 6 \cdot \sum_{i\in I}(a_i+b_i)$.
\end{proof}

\bigskip

\begin{proof}[of Lemma~\ref{lem:Neu}] \ \\
Let $M$ be the $(k\times k)$-matrix with columns\\
$m_1,\ldots,m_k$ such
that $m_i=l_i$ for each $i\in I$, and $m_j=e_j$ for each
$j\in[1,k]\setminus I$.
It is straightforward to verify that $M\in\orth(k)$ and $Me_i=l_i$ for each $i\in I$.

Now let $M$ be an arbitrary element in $\orth(k)$ such that $Me_i=l_i$
for all $i\in I$.
Using the Lemmas~\ref{lem:d'} and \ref{lem:d}, we obtain that \
$
   d_\Psi(M)
 = d_1(S'_\Psi,M\cdot T'_\psi) + d_1(S_\Psi,M\cdot T_\Psi)
 = $
\[
  6\cdot \sum_{i\in I}(a_i+b_i) \ \ + \ \ 4\sqrt{2}. 
\]
Recall that $a_i=\dE_2(e_i,Me_i)$ and $b_i=\dE_2(-e_i,Me_i)$.
Since $Me_i=l_i\in\set{e_i,-e_i}$, we obtain that
\[
   a_i+b_i
   \ \ = \ \ 
   \dE_2(e_i,-e_i)
   \ \ = \ \  
   2\cdot \Norm{e_i}_2 
   \ \ = \ \ 2.
\]
Hence, \ $\displaystyle 6\cdot\sum_{i\in I}(a_i+b_i) = 6\cdot 3\cdot 2
= 36$, \ and \
$d_\Psi(M) = 36 + 4\sqrt{2}$.
\\ \mbox{}  
\end{proof}

\bigskip

\begin{proof}[of Lemma~\ref{lem:notranslation}] \ \\
Let $S,T,v$ be as in the lemma's assumption. Let us fix an arbitrary
$M\in\orth(k)$.
Let $T'\deff M\cdot T$. Let $S=(s_0,\ldots,s_{n-1})$ and let $T'=
(t_0,\ldots,t_{n-1})$.
Then,
\begin{equation*}
 \begin{array}{rcl}
 d_1(\bar S, M\cdot \bar T)
 & = &\displaystyle
 \Big( \sum_{i=0}^{n-1} \dE_2(s_i,t_i) \Big) \ + \
 \Big( \sum_{i=0}^{n-1} \dE_2(-s_i,-t_i)\Big)
 \smallskip\\
 & = &\displaystyle
 \sum_{i=0}^{n-1}\  
    2\cdot \Norm{s_i-t_i}_2,
\end{array}
\end{equation*}
and \ $ d_1(\bar S, M\cdot \bar T + v) \ =$
\begin{equation*}
 \sum_{i=0}^{n-1} \Big(\ 
  \Norm{s_i-t_i-v}_2 \ + \ \Norm{s_i-t_i+v}_2
  \ \Big)  
\end{equation*}
Letting \ $u_i\deff s_i-t_i$, \ we obtain that \
\[
 \begin{array}{rcl}
   d_1(\bar S, M\cdot \bar T)
 & = 
 & \displaystyle\sum_{i=0}^{n-1}\ 2\Norm{u_i}_2, \qquad\text{and}
\smallskip\\
   d_1(\bar S, M\cdot \bar T + v) 
 & =
 & \displaystyle\sum_{i=0}^{n-1} \Big(\, \Norm{u_i+v}_2 \ + \ \Norm{u_i-v}_2 \, \Big).
 \end{array}
\]  
For proving the lemma, it therefore suffices to show that 
\begin{equation}\label{eq:LemmaTranslation_v}
  2\Norm{u_i}_2 \ \ \leq \ \ 
  \Norm{u_i+v}_2 \ + \ \Norm{u_i-v}_2
\end{equation}
is true for every $i\in[0,n)$.
In what follows, we prove that the
inequality~\eqref{eq:LemmaTranslation_v} is in fact true for \emph{every}
vector $u_i\in\RR^k$.

For achieving this, the following claim will be useful.

\begin{claim}\label{claim:1}
 $|a+b|+|a-b|\geq 2|a|$ \ is true for all $a,b\in\RR$.
\end{claim}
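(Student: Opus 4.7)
The plan is to prove the claim by a single application of the triangle inequality on $\RR$. Observe that by the usual triangle inequality $|x|+|y|\geq |x+y|$ for all $x,y\in\RR$, applied with $x=a+b$ and $y=a-b$, we get
\[
  |a+b|+|a-b| \ \geq \ |(a+b)+(a-b)| \ = \ |2a| \ = \ 2|a|,
\]
which is exactly the inequality claimed. This is a one-line argument, with no case distinction needed.

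An alternative (but less elegant) route would be to split on the signs of $a+b$ and $a-b$, giving four cases: in two of them the left-hand side simplifies directly to $2|a|$ or $2|b|$, and in the remaining cases one obtains $\geq 2\max(|a|,|b|)\geq 2|a|$. This avoids any inequality other than the definition of absolute value, but it is essentially just unpacking the triangle inequality.

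I do not expect any obstacle here: the triangle inequality approach is clean and immediate, and no macro beyond plain $|\cdot|$ on $\RR$ is needed. After this claim is established, the authors presumably apply it componentwise (to each coordinate of $u_i$ and $v$) together with the fact that $\Norm{\cdot}_2$ satisfies a similar "parallelogram-type" inequality; but that lies outside the scope of the claim itself.
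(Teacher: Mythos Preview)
Your proof is correct. The triangle inequality applied with $x=a+b$ and $y=a-b$ immediately gives
\[
  |a+b|+|a-b| \ \geq \ |(a+b)+(a-b)| \ = \ 2|a|,
\]
which is exactly the claim.

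This is, however, a genuinely different route from the paper's. The paper proceeds by a case distinction: first assuming $a\geq 0$, it splits further into the subcases $a\geq |b|$ (where the left-hand side equals $2a$) and $a<|b|$ (where it equals $2|b|\geq 2a$); then the case $a<0$ is reduced to the previous one via $a'\deff -a$. Your argument is strictly more elegant---one line, no cases---and in fact your ``alternative route'' paragraph essentially sketches the paper's own approach. The only advantage of the paper's version is that it yields the slightly stronger statement $|a+b|+|a-b|=2\max(|a|,|b|)$ along the way, but this extra information is not used anywhere else in the paper, so nothing is lost by your shortcut.
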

\begin{proof}
   Let us first restrict attention to the case where $a\geq 0$.
   In this case, the following is true:
    \begin{align*}
        |a+b|+|a-b| =& \begin{cases}
            a+b+a-b \ \ \ \ \, = \ 2a,& \text{ if } a\geq |b| \\
            a+|b|-a+|b| \ = \ 2|b|,& \text{ if } a < |b|
        \end{cases}
    \end{align*}
    In both cases, $|a+b|+|a-b|\geq 2|a|$, and we are done.

   Now let us consider the case where $a<0$. Then, $a'\deff -a >0$,
   and
      \ $  |a+b|+|a-b| \  = \ |a'-b|+|a'+b|$.
   For the latter, we already know that it is $\geq 2|a'|=2|a|$.
  This completes the proof of Claim~\ref{claim:1}.
 \end{proof}
 
 Now, let $u$ be an arbitrary vector in $\RR^k$.
 Clearly, there exists an element $M'\in\orth(k)$ such that $M'u=
 ae_1$ for some $a\in\RR$. For this $M'$ let us write $b_i$ for the 
 entry in the $i$-th component of the vector $M'v$, for every $i\in[1,k]$.
 Then, the following is true:
 \[
    2\Norm{u}_2 \ = \ 2\Norm{M'u}_2 \ = \ 2|a|
 \]
 and
 \[
  \begin{array}{rcl}
    \Norm{u+v}_2 + \Norm{u-v}_2
  & = & \Norm{M'(u+v)}_2 + \Norm{M'(u-v)}_2
  \smallskip\\
  & = & \Norm{M'u+M'v}_2 + \Norm{M'u - M'v}_2
  \smallskip\\
  & = & \Norm{ae_1 + M'v}_2 + \Norm{ae_1 - M'v}_2
  \smallskip\\
  & \geq & |a+b_1| \ + \ |a-b_1|.
  \end{array}
 \]
 By Claim~\ref{claim:1}, the latter is \ $\geq 2|a|$.
 In summary, we obtain that 
 \[
    2\Norm{u}_2 \ \ \leq \ \ 
    \Norm{u+v}_2 \ + \ \Norm{u-v}_2
 \] 
 is true for all $u,v\in\RR^k$.
 \\
 This completes the proof of Lemma~\ref{lem:notranslation}.
\end{proof}

\bigskip

\begin{proof}[of Theorem~\ref{cor:dcongruence}] \ \\
 From Lemma~\ref{lem:notranslation} we obtain that
 \[
    d_1^C(\bar S_\Phi, \bar T_\Phi) 
    \ \ = \ \ 
    \min_{M\in\orth(k)} d_1(\bar S_\Phi,\ M\cdot \bar T_\Phi).
 \]
 Furthermore, by the choice of $\bar S_\Phi$ and $\bar T_\Phi$ we have
 for every $M\in\orth(k)$ that
 \[
   d_1(\bar S_\Phi,\ M\cdot \bar T_\Phi) 
   \ \ = \ \ 
   2\cdot d_1(S_\Phi,\ M\cdot T_\Phi).
 \]
 Hence, \ $d_1^C(\bar S_\Phi, \ \bar T_\Phi) \ = $
 \[
    2\cdot \min_{M\in\orth(k)} d_1(S_\Phi,\ M\cdot T_\Phi)
  \ \ \stackrel{\text{Lemma~\ref{lem:dphi}}}{=} \ \
    2\cdot m\cdot (36+4\sqrt{2}).
 \]
\end{proof}

\bigskip

\begin{proof}[of Theorem~\ref{thm:delta}] \ \\
Let \ $S=(s_0,\ldots,s_{n-1})$ \ and \ $T=(t_0,\ldots,t_{n-1})$.
For the direction ``$\Longrightarrow$'' 
assume that $T=M\cdot S+v_0$ for some orthogonal matrix $M\in\orth(k)$ and some vector $v_0\in\mathbb R^k$.
Then $\Delta S=\Delta T$, since the following equation holds for all $i,j\in[0,n)$:
    \begin{align*}
        \dE(t_i,t_j)^2 &= \dE(M s_i+v_0, \ M s_j+v_0)^2 \\
                     &= \| M s_i+v_0 - M s_j-v_0\|_2^2 \\
                     &= \| M( s_i-s_j )\|_2^2 \\
                     &= \| s_i-s_j \|_2^2 \\
                     &= \dE(s_i, s_j)^2
    \end{align*}

    For the opposite direction, assume $\Delta S=\Delta T$.
    Let $S'=(0,s_1',\ldots,s_{n-1}')=S-s_0$ and
    $T'=(0,t_1',\ldots,t_{n-1}')=T-t_0$.
    Obviously, $\Delta S'=\Delta S$ and $\Delta T'=\Delta T$. Hence,
    $\Delta S'=\Delta T'$, and we thus obtain the equality of scalar products:
    \begin{align*}
        \langle s_i'-s_j', s_i'-s_j' \rangle = \dE(s_i',s_j')^2 = \dE(t_i', t_j')^2 = \langle t_i'-t_j', t_i'-t_j' \rangle.
    \end{align*}
    Hence, 
    \[
      \| s_i'\|_2+\|s_j'\|_2 -2\langle s_i',s_j'\rangle \ \ = \ \ \| t_i'\|_2+\|t_j'\|_2-2\langle t_i',t_j'\rangle.
    \]
    Using $\|s_i'\|_2 = \dE(0,s_i') = \dE(0,t_i')=\|t_i'\|_2$, we
    obtain that
    \begin{align*}
        \langle s_i', s_j' \rangle & \ \ = \ \ \langle t_i', t_j' \rangle.
    \end{align*}
    Let $\BB\coloneqq\{s_{i_1}',\ldots,s_{i_m}'\}$ be a basis of the vector space
    \[
       V \ \coloneqq \ span(s_1',\ldots,s_{n-1}'),
    \]
    let $W\coloneqq
    span(t_{i_1}',\ldots,t_{i_m}')$, and consider the linear
    function 
     \  $F : V \longrightarrow W$ \ defined via \ $F(s'_{i_j})\deff
     t'_{i_j}$ for each $j\in[1,m]$. \
    For this function $F$, the following holds for all $a,b\in[1,m]$:
    \begin{align*}
        \langle F(s_{i_a}'), F(s_{i_b}') \rangle \  \ = \ \ \langle
        t'_{i_a}, t'_{i_b}\rangle \ \ = \ \ \langle s'_{i_a}, s'_{i_b} \rangle.
    \end{align*}
    Thus, $F$ is orthogonal and the family
    $\{t_{i_1}',\ldots,t_{i_m}'\}$ is a basis of
    $span(t_1',\ldots,t_{n-1}')$.

    To see that $F(s'_i)=t'_i$ holds for all $i\in[0,n)$ consider the gramian matrix
    \begin{align*}
        G\deff&\ \left( \langle s'_i,s'_j \rangle \right)_{i,j\in[0,n)} \\
        =&\ \left( \langle t'_i,t'_j \rangle \right)_{i,j\in[0,n)}.
    \end{align*}
    Note that the gramian matrix $G$ is nonsingalur because $\mathfrak B$ is a basis.
    Now, for $i\in[0,n)$ and $a=(a_1,\cdots,a_m),b=(b_1,\cdots,b_m)\in\mathbb R^m$ such that
    \begin{align*}
        s'_i =&\ a_1 s'_{i_1}+ \cdots +a_m s'_{i_m} \text{ and} \\
        t'_i =&\ b_1 t'_{i_1}+ \cdots +b_m t'_{i_m}
    \end{align*}
    we have
    \begin{align*}
        G\cdot a = \sum_{h=1}^m a_h\langle s'_{i_h}, s'_{i_j} \rangle
        = \left\langle \sum_{h=1}^m a_h s'_{i_h}, s'_{i_j} \right\rangle
        = \langle s'_i,s'_j \rangle 
        = \langle t'_i,t'_j \rangle
        = \sum_{h=1}^m b_h\langle t'_{i_h},t'_{i_j} \rangle 
        = G\cdot b
    \end{align*}
    and thus $a=b$.
    Hence,
    \begin{align*}
        F(s'_i) = \sum_{h=1}^m a_h F(s'_{i_h})
        = \sum_{h=1}^m a_h t'_{i_h}
        = \sum_{h=1}^m b_h t'_{i_h}
        = t'_i
    \end{align*}
    i.\,e., $F(s'_i)=t'_i$ holds for all $i\in[0,n)$.

    From linear algebra we know that $F$ can be extended to an orthogonal function $F':\mathbb R^k\longrightarrow\mathbb R^k$.
    Consider the matrix $M'\in\orth(k)$ such that 
    \[
       F'(v)=w  \ \ \iff \ \ M' v=w
    \]
    is true for all $v\in\RR^k$. In particular, 
    $M's'_i=t'_i$ is true for all $i\in[0,n)$, and thus $M'\cdot S'=T'$.
    Hence,
    \[
        T-t_0 \ \ = \ \ T' \ \ = \ \ M'\cdot S' \ \ = \ \ M' \cdot (
        S-s_0) \ \ = \ \ M'\cdot S - s_0,
    \]
    and therefore \ $T = M'\cdot S + (t_0-s_0)$.
\end{proof}

\bigskip

\begin{proof}[of Lemma~\ref{lem:minwindow}] \ \\
    Let $S,T\in\mathcal T$ with $\#S = m\leq n = \#T$ and choose $i\in[0,n{-}m]$ such that $d_p(S,T_i^m)=d_p(S,T)$.
    Then, the desired inequality follows:
    \begin{align*}
        d_{\Norm{\cdot}}^\Delta(S,T) & \ \ \leq \ \ d_{\Norm{\cdot}}^\Delta(S,T_i^m) \\
        & \ \ \leq \ \  C(m)\cdot d_p(S,T_i^m) \\
        & \ \ = \ \ C(m)\cdot d_p(S,T)
    \end{align*}
\end{proof}

\bigskip

\begin{proof}[of Lemma~\ref{lem:quadmaxest}] \ \\
    First consider the case where $\#S=\#T = n$.
    Choose $j^*$ such that the $j^*$-th column of $\Delta S -\Delta T$
    has the maximum sum, i.\,e.,
    \begin{align*}
        \sum_{i=0}^{n-1-j^*} \left| \dE_2(s_i,s_{i+j^*}) -
          \dE_2(t_i,t_{i+j^*}) \right| \ \ = \ \ d^\Delta_m(S,T).
    \end{align*}
    Then, for $\dE\deff\dE_2$ we have
    \begin{align*}
        d^\Delta_m(S,T) &= \sum_{i=0}^{n-1-j^*} \left| \dE(s_i,s_{i+j^*}) - \dE(t_i,t_{i+j^*}) \right| \\
        & \leq \sum_{i=0}^{n-1-j^*} \big( \dE(s_i,t_i) + \dE(s_{i+j^*}, t_{i+j^*}) \big) \\
        & \leq 2\cdot \sum_{i=0}^{n-1} \dE(s_i,t_i) \\
        & = 2\cdot d_1(S,T)
    \end{align*}

    For $S,T\in\mathcal T$ of different lengths, the inequality follows using Lemma~\ref{lem:minwindow} and $C(n)\coloneqq 2$.
\end{proof}

\bigskip

\begin{proof}[of Lemma~\ref{lem:quadsumest}] \ \\
    First, assume $\#S=\#T = n$.
    Using $\|u+v\|_1=\|u\|_1+\|v\|$ for vectors $u,v\in\mathbb R^k$ with non-negative entries only we get the inequality:
    \begin{align*}
        d^\Delta_1(S,T) &\leq \left\| \big( d(s_i,t_i) + d(s_{i+j}, t_{i+j}) \big)_{0\leq i < n-1, 1\leq j<n-i} \right\|_1 \\
        &= \left\| ( n - i - 1)\cdot\big( d(s_i,t_i) \big)_{0\leq i<n-1} \right\|_1 + \\
        &\quad +\left\| \big( d(s_{i+j}, t_{i+j}) \big)_{0\leq i<n-1, 1\leq j<n-i} \right\|_1 \\
        &= \left\| \big( (n-i-1)\cdot d(s_i,t_i) \big)_{0\leq i<n-1} \right\|_1 + \\
        &\quad + \left\| \big( i\cdot d(s_i,t_i) \big)_{0<i\leq n-1} \right\|_1 \\
        &= \left\| \big( (n-1)\cdot d(s_i,t_i) \big)_{0\leq i < n} \right\|_1 \\
        &= (n-1)\cdot d_1(S,T)
    \end{align*}
    For the first inequation we used the triangle inequality and symmetry of $d$:
    \begin{align*}
        d(s_i,s_{i+j}) \leq d(s_i,t_i)+d(t_i,t_{i+j})+d(t_{i+j},s_{i+j})
    \end{align*}
    \begin{align*}
        \Longrightarrow |d(s_i,s_{i+j})-d(t_i,t_{i+j})| \leq d(s_i,t_i)+d(s_{i+j},t_{i+j})
    \end{align*}

    Considering $S,T\in\mathcal T$ with $\#S = m > n = \#T$, the inequality follows using Lemma~\ref{lem:minwindow} and $C(n) \coloneqq n-1$.
\end{proof}

\bigskip

\begin{claim}
    \label{cl:noupperbound}
    Consider $\MM=\RR^2$, $\dE=\dE_2$, and let $\epsilon>0$, $a\deff\sqrt{1-\epsilon^2}$.
    Then for $S,T_\epsilon\in\TS_3$ with
    \begin{align*}
        S\deff \left( \colvec{2}{-a}{0}, \colvec{2}{0}{0}, \colvec{2}{a}{0} \right) \text{ and }\ T_\epsilon\deff \left( \colvec{2}{-a}{0}, \colvec{2}{0}{\epsilon}, \colvec{2}{a}{0} \right)
    \end{align*}
    the following inequality holds:
    \begin{align*}
        \delta^C(S,T_\epsilon)\geq\frac \epsilon 2.
    \end{align*}
\end{claim}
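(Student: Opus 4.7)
The plan is to fix an arbitrary orthogonal matrix $M\in\orth(2)$ and translation vector $v\in\RR^2$, set $T'_\epsilon \deff M\cdot T_\epsilon + v$ and denote its states by $t'_0, t'_1, t'_2$, and prove the per-transformation bound $d_1(S, T'_\epsilon) \geq \epsilon$. Taking the infimum over all such $(M,v)$ then yields $d_1^C(S, T_\epsilon) \geq \epsilon$, which is even stronger than $\epsilon/2$ and therefore suffices for the claim.

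The geometric idea is to exploit the fact that $S$ is collinear while $T_\epsilon$ is not, and to observe that no rigid motion can close this ``gap'' completely. Concretely, $s_1$ is the midpoint of the segment from $s_0$ to $s_2$, whereas $t'_1$ lies at Euclidean distance exactly $\epsilon$ from the midpoint $m' \deff \tfrac{1}{2}(t'_0 + t'_2)$ of the segment from $t'_0$ to $t'_2$. This distance is preserved by the transformation because, writing $t_0, t_1, t_2$ for the original states of $T_\epsilon$, one has $m' = M\cdot\tfrac{1}{2}(t_0+t_2) + v = v$ and $t'_1 - m' = M(t_1 - \tfrac{1}{2}(t_0+t_2)) = M(0,\epsilon)$, which has Euclidean norm $\epsilon$ since $M\in\orth(2)$.

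From here, two short triangle-inequality steps complete the proof. First, writing $s_1 = \tfrac{1}{2}(s_0+s_2)$ yields
\[
  s_1 - m' \ = \ \tfrac{1}{2}\big((s_0 - t'_0) + (s_2 - t'_2)\big),
\]
and hence $\|s_1 - m'\|_2 \leq \tfrac{1}{2}(\|s_0 - t'_0\|_2 + \|s_2 - t'_2\|_2)$. Second, the triangle inequality applied to $t'_1, m', s_1$ gives $\|s_1 - t'_1\|_2 \geq \epsilon - \|s_1 - m'\|_2$ (and this bound is valid even when the right-hand side is negative, since it is preserved under addition of non-negative terms). Substituting both estimates into $d_1(S,T'_\epsilon) = \|s_0 - t'_0\|_2 + \|s_1 - t'_1\|_2 + \|s_2 - t'_2\|_2$, the coefficients of $\|s_0 - t'_0\|_2$ and $\|s_2 - t'_2\|_2$ collapse to $\tfrac{1}{2}$ each, and the $+\epsilon$ term remains, leaving the lower bound $\epsilon$.

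There is no real obstacle here; the only delicate point is noticing that the invariant $\|t'_1 - m'\|_2 = \epsilon$ survives every rigid motion, while the collinearity of $S$ (encoded by $s_1$ being the midpoint of $s_0$ and $s_2$) forces $m'$ to be close to $s_1$ relative to the ``outer'' errors $\|s_0 - t'_0\|_2$ and $\|s_2 - t'_2\|_2$. Note that a naive use of Theorem~\ref{thm:quadmaxest} would only give $d_1^C(S,T_\epsilon) \geq \tfrac{1}{2}d_m^\Delta(S,T_\epsilon) = 1-\sqrt{1-\epsilon^2} \sim \epsilon^2/2$, which is far weaker than $\epsilon/2$; so the direct geometric argument above is genuinely needed.
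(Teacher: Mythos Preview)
Your proof is correct and uses the same geometric midpoint idea as the paper's proof: both exploit that $s_1$ is the midpoint of $s_0,s_2$ while $t'_1$ sits at Euclidean distance exactly $\epsilon$ from the midpoint of $t'_0,t'_2$, and both bound $\|s_1-m'\|_2$ by $\tfrac12(\|s_0-t'_0\|_2+\|s_2-t'_2\|_2)$. The difference is only in packaging: the paper argues by contradiction, assuming $d_1(S,M\cdot T_\epsilon+v_0)<\epsilon/2$ and bounding each of the three summands individually by $\epsilon/2$, which costs a factor of $2$; you instead keep the full sum and substitute directly, which yields the sharp bound $d_1^C(S,T_\epsilon)\geq\epsilon$ (indeed attained at $M=I$, $v=0$) rather than merely $\epsilon/2$.
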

\begin{proof}\ \\
    \begin{figure}
        \centering
        \includegraphics{images/noupperbound.pdf}
        \caption{Two time series $S$ and $T_\epsilon$ in $\mathbb R^2$, such that $\frac{d_1^C(S,T_\epsilon)}{d_1^\Delta(S,T_\epsilon)}\xrightarrow{\varepsilon\rightarrow 0}+\infty$.}
        \label{fig:noupperbound}
    \end{figure}
    Denote $S=(s_0,s_1,s_2)$ and $T_\epsilon=(t_0,t_1,t_2)$.

    \emph{Assume} that $M\in\orth(k)$ and $v_0\in\MM$ exist such that
    \begin{align*}
        \delta(S,M\cdot T_\epsilon+v_0) < \frac \epsilon 2.
    \end{align*}
    Then, considering the linear function
    \begin{align*}
        \iota: \MM \longrightarrow&\MM \\
        v \longmapsto& M\cdot v+v_0,
    \end{align*}
    the following inequality must hold for each $i\in[0,2]$:
    \begin{align*}
        \dE(s_i,\iota(t_i)) < \frac \epsilon 2.
    \end{align*}
    Since $\iota$ is an isometric function and $\frac 1 2 s_0+\frac 1 2 s_2-s_1=0$, we have
    \begin{align*}
        \dE\left(s_1, \frac 1 2 \iota(t_0) + \frac 1 2 \iota(t_2)\right) =&\ \Norm{\frac 1 2 \iota(t_0) + \frac 1 2 \iota(t_2) - s_1}_2 \\
        =&\ \Norm{\frac 1 2 \iota(t_0) - \frac 1 2 s_0 + \frac 1 2 \iota(t_2) - \frac 1 2 s_2 + \frac 1 2 s_0 + \frac 1 2 s_2 - s_1}_2 \\
        \leq&\ \frac 1 2 \dE(\iota(t_0),s_0) + \frac 1 2 \dE(\iota(t_2),s_2) \\
        <&\ \frac \epsilon 2,
    \end{align*}
    i.\,e., we obtain a contradiction with
    \begin{align*}
        \dE(s_1, \iota(t_1)) \geq \underbrace{\dE\left(\iota(t_1), \iota\left( \frac 1 2 (t_0+t_2) \right)\right)}_{=\dE\left( t_1,\frac 1 2(t_0+t_2) \right)=\epsilon} - \dE\left( s_1,\iota\left( \frac 1 2 (t_0+t_2) \right) \right) > \epsilon - \frac\epsilon 2 > \frac\epsilon 2.
    \end{align*}
    Hence, $\delta^C(S,T_\epsilon) = \inf_{M\in\orth(k),v_0\in\MM} \delta(S,M\cdot T_\epsilon+v_0) \geq \frac \epsilon 2$.
\end{proof}

\bigskip

\section{Details Omitted in Section~5}

\begin{proof}[of Theorem~\ref{thm:qlinsumest}] \ \\
    The fiddly part of the calculation are the following inequalities for constant $k\in\mathbb N$:
    \begin{align*}
        \left|\left\{ j\in[1,k) \mid j\in 2^{\mathbb N} \right\} \right| &= \left| \left\{ 2^0, 2^1,\cdots,2^{\lfloor\log(k-1)\rfloor} \right\} \right| \\
        &= 1 + \lfloor \log(k-1) \rfloor
    \end{align*}
    For the second inequation in \eqref{eq:qlinsumest}, the numbers $i_l$ have to be choosen properly.
    \begin{align*}
        \left|\left\{ i+j=k\mid j\in 2^{\mathbb N}, i\geq 0 \right\} \right| &= \left| \bigcup_{l=0}^{\lfloor\log k\rfloor}\left\{ i_l+2^l \right\} \right| \\
        & =1+\lfloor\log k\rfloor
    \end{align*}

    First, consider $S,T\in\mathcal T$ of the same length $\#S=n=\#T$.
    Then, the desired inequality follows with the following calculation and subsequently applying Lemma~\ref{lem:minwindow}:
    \begin{align}
        & d^\delta_1(S,T) \nonumber \\
        &= \|\delta S-\delta T\|_1 \nonumber \\
        & =\left\|\left( \left| d(s_i,s_{i+j})-d(t_i,t_{i+j})\right| \right)_{i\in[0,n-1),j\in[1,n-i),j\in 2^{\mathbb N}} \right\| \nonumber \\
        & \leq \left\| \left(d(s_i,t_i) \right)_{i\in[0,n-1),j\in[1,n-i),j\in 2^{\mathbb N}}\right\| \nonumber \\
        & \quad + \left\| \left( d(s_{i+j},t_{i+j}) \right)_{i\in[0,n-1),j\in[1,n-i),j\in 2^{\mathbb N}} \right\| \nonumber \\
        & = \left\| \left( \lfloor\log(n-i-1)\rfloor\cdot d(s_i,t_i) \right)_{i\in[0,n-1)} \right\| + \nonumber \\
        & \quad + \left\| \left( (1+\lfloor\log k\rfloor )\cdot d(s_k,t_k) \right) _{k\in[1,n)} \right\| \nonumber \\
        & = \lfloor\log(n-1)\rfloor\cdot d(s_0,t_0) + \nonumber \\
        & \quad \left\| \left( \left( 1+\lfloor\log i\rfloor + \lfloor \log(n-i-1)\rfloor \right)\cdot d\left( s_i,t_i \right) \right)_{i\in[1,n-1)} \right\| \nonumber \\
        & \quad + \left( 1+\lfloor\log(n-1)\rfloor \right)\cdot d(s_{n-1},t_{n-1}) \nonumber \\
        & \leq \lfloor 2\cdot\log(n-1)\rfloor\cdot d_1(S,T)
        \label{eq:qlinsumest}
    \end{align}
    Using Lemma~\ref{lem:minwindow} we can allow $S,T$ to be of different lengths.
    Since the set of entries in $\delta T$ is a subset of the entries in $\Delta T$, the desired inequality follows analogously as in the proof of Theorem~\ref{thm:delta} by minimizing $d_1(S,M\cdot T+v_0)$.
\end{proof}

\end{document}